\definecolor {processblue}{cmyk}{0.96,0,0,0}
\newcommand{\CC}{\mathbb{C}}
\newcommand{\bmacmat}{H}
\newcommand{\bvec}{ \vec{b}}
\newcommand{\s}{s}  
\newcommand{\pvec}{\vec{p}} 
\newcommand{\A}{\cal A}
\newcommand{\B}{\cal B}
\newtheorem{theorem}{Theorem}[section]
\newtheorem{lemma}[theorem]{Lemma}
\newtheorem*{lemma-non}{Lemma}
\newtheorem*{theorem-non}{Theorem}
\newtheorem{definition}[theorem]{Definition}
\newtheorem{problem}[theorem]{Problem}
\newtheorem{proposition}[theorem]{Proposition}
\newtheorem{example}[theorem]{Example}
\theoremstyle{definition}
{

}
\mathchardef\mhyphen="2D
\newcommand{\R}{\mathbb{R}}
\newcommand{\arccosh}{\mathrm{arccosh}}
\def\({\left(}
\def\){\right)}
\definecolor{greenn}{rgb}{0,0.8,0.2}
\definecolor{bluue}{rgb}{0.3,0,0.7}
 \newtheorem{remark}[theorem]{Remark}
\newcommand{\Or}{\mathcal{O}}
\newcommand{\N}{\mathcal{N}}
\newcommand{\f}{\theta}
\newcommand{\p}{p^{B}}
\newcommand{\rvec}{\vec{r}}
\newcommand{\xvec}{\vec{y}}
\newcommand{\y}{y}
\newcommand{\zvec}{\vec{z}}
\newcommand{\ET}{ET}
\newcommand{\bs}{\mathbf{s}}
\title{A New Quantum Linear System Algorithm Beyond the Condition Number and Its Application to Solving Multivariate Polynomial Systems}
\author{
Jianqiang Li \thanks{Department of Computer Science, Rice University, Huston, TX  {\tt jl567@rice.edu}}}
\begin{document}
\maketitle

\begin{abstract}
Given a matrix $A$ of dimension $M\times N$ and a vector $\vec{b}$, the quantum linear system (QLS) problem asks for the preparation of a quantum state $\ket{\vec{y}}$ proportional to the solution of $A\vec{y} = \vec{b}$. Existing QLS algorithms typically have runtimes that scale linearly with the condition number $\kappa(A)$, the sparsity of $A$, and logarithmically with the inverse precision. However, these algorithms often overlook the structural properties of the input vector $\vec{b}$, despite the fact that its alignment with the eigenspaces of $A$ can significantly affect performance. In this work, we present a new QLS algorithm that explicitly leverages the structure of the right-hand side vector $\vec{b}$. Let the sparsity of a matrix be defined as the maximum number of nonzero entries in any row or column. The runtime of our algorithm depends polynomially on the sparsity $\s$ of the augmented matrix $H = [A ,  -\vec{b}]$, the inverse precision, the $\ell_2$ norm of the solution $\vec{y} = A^+ \vec{b}$, and a new instance-dependent parameter
$$
ET = \sum_{i=1}^M p_i^2 \cdot d_i,
$$
where $\vec{p} = (AA^{\top})^+ \vec{b}$ and $d_i$ denote the squared $\ell_2$-norm of the $i$-th row of $H$. To further reduce the runtime for certain applications, we introduce a structure-aware rescaling technique tailored to the solution $\vec{y} = A^+ \vec{b}$. Unlike \emph{left} preconditioning methods, which transform the system to $DA\vec{z} = D\vec{b}$, our approach applies a \emph{right} rescaling matrix, reformulating the linear system as $A D \vec{z} = \vec{b}$.

This combination of an instance-aware QLS algorithm and a rescaling strategy reopens the possibilities for achieving superpolynomial quantum speedups in various domains, such as nonlinear differential equations and polynomial systems. As an application, we develop a new quantum algorithm for solving multivariate polynomial systems in regimes where previous QLS-based methods fail. Our results yield a new end-to-end algorithmic framework, grounded in our new QLS algorithm, that applies to a broad class of problems. In particular, we apply our approach to the maximum independent set (MIS) problem, formulated as a special case of a polynomial system. Given a graph, the MIS problem is to compute the largest subset of vertices such that no two vertices in the subset share an edge. We provide a detailed runtime analysis and show that, under certain conditions, our quantum algorithm for the MIS problem runs in polynomial time. Although no classical algorithms have been developed under these conditions, a promising feature of our quantum algorithm is that its runtime explicitly depends on the structure of the independent sets in the input graph. 

\end{abstract}
\section{Introduction}
Solving systems of linear equations $A\xvec=\bvec$ is fundamental in scientific computing, optimization, graph theory, and machine learning. Classical algorithms for solving such systems require polynomial time in the system dimension. In contrast, quantum linear system (QLS) algorithms achieve only polylogarithmic dependence on the dimension and output a quantum state $\ket{\xvec}$ that encodes the solution to $A\xvec=\bvec$, thereby promising an exponential speedup over classical methods under certain conditions. The runtime of state-of-the-art QLS algorithms scales linearly with the condition number $\kappa(A)$, the sparsity of $A$, and logarithmically with the inverse precision \cite{HHL09,childs2015QLinSysExpPrec,ambainis2010VTAA,gilyen2018QSingValTransf,subacsi2019quantum,lin2020optimal,costa2022optimal,dalzell2024shortcut,morales2024quantum,low2024quantum}. Furthermore, the QLS problem is BQP-complete, fully capturing the computational power of quantum computers and underscoring the potential of QLS algorithms for a wide range of applications. However, despite this promise, a ``killer" end-to-end application of QLS algorithms has not yet been identified. The main obstacles to their practical use remain the input/output constraints, the condition number barrier, and the sparsity requirement \cite{aaronson2015ReadTheFinePrint}.

A potential explanation for the absence of a ``killer" application is that existing QLS algorithms are designed to handle arbitrary right-hand sides $\bvec$, whereas quantum speedups are typically realized only for specific problem instances. In particular, if $\bvec$ lies within certain subspaces of the column space of $A$, one might expect the complexity of the QLS algorithms to improve. To date, only a few works have attempted to characterize the complexity of solving QLS problems while exploiting structural information about $\bvec$. Assuming $\|A\|=1$ and $\|\bvec\|=1$, the seminal work of \cite{HHL09} introduced an approach that inverts only the well-conditioned component of $\bvec$, namely the subspace spanned by eigenvectors associated with large eigenvalues of $A$. This truncated QLS algorithm is effective when $\bvec$ is located largely within the subspace of the top eigenvectors of $A$. However, in general $\bvec$, it remains unclear how to apply this approach effectively, particularly how to balance the precision of the solution against the choice of the threshold value. Later, \cite{ding2023limitations} introduced the notion of the truncated QLS condition number $\kappa_{\bvec}(A) =\|A^{+}\bvec\|,$ and established it as a fundamental lower bound for truncated QLS algorithms.
For a particular family of fast invertible linear systems, \cite{tong2021fast} proposed a preconditioned QLS algorithm whose complexity depends linearly on $\kappa_{\bvec}=\|A^{+}\bvec\|$.  For families of linear systems related to graphs \cite{wang2017efficient,spielman2019spectral,vishnoi2013lx}, the multidimensional electrical network framework \cite{li2025multidimensional} unifies and extends the results on quantum walks derived from electrical networks \cite{belovs2013ElectricWalks,apers2022elfs,piddock2019electricfind,jeffery2023multidimensional}, allowing algorithms whose complexity depends on the norm of $\pvec = (AA^{\top})^{+}\bvec$ and fully incorporates the structure of $\bvec$.

In this paper, we generalize the multidimensional electrical network from \cite{li2025multidimensional}, originally developed for graph-related linear systems, to general linear systems. Specifically, we introduce a new QLS algorithm that explicitly incorporates the role of $\bvec$ in solving the QLS problem. Our algorithm achieves time complexity $
\widetilde{O}\!\bigl( (\frac{1}{\|\xvec\|_2^6}+\|\xvec\|_2^2)  \sqrt{\s^3 \cdot ET} / \epsilon^{2}\bigr),
$ \footnote{We use $\widetilde{O}(\cdot)$ to hide $\operatorname{polylog}(M \cdot N)$ factors; the same convention applies to $\widetilde{\Theta}(\cdot)$ and $\widetilde{\Omega}(\cdot)$.
}, where $\xvec=A^{+}\bvec$, $\s$ is the sparsity of the matrix $H=[A,  -\bvec]$ and
$\ET = \sum_{i=1}^M p_i^2 \cdot d_i$ with $\pvec=(AA^{\top})^{+}\bvec$ and $d_i$ denote the squared $\ell_2$-norm of the $i$-th row of $H$. A key distinction of our approach is that the complexity of the algorithm depends not only on the sparsity of the matrix $A$, but also on the sparsity of the vector $\bvec$. However, the additional requirement that
$\bvec$ be sparse does not diminish the power of our new QLS algorithm, as the QLS problem remains BQP-complete even when the vector $\bvec$ is sparse~\cite[Section~4.3]{Gharibian2024QCT}. In other words, our new QLS algorithm retains the full computational power of existing QLS algorithms while offering the added benefit of a more refined, instance-aware runtime that adapts to the structure of the input. 
To our knowledge, such an approach has not been explored in prior works.

Our new QLS algorithm often outperforms existing QLS algorithms because it fully exploits the structure of $\vec{b}$, while previous QLS algorithms suffer from dependence on condition number or, in the case of truncated QLS, discard components of $\vec{b}$ that are in ill-conditioned subspaces.
 For example, letting $A=\operatorname{diag}(\textbf{I}_{n-1},1/n)$ and $\vec{b}=\begin{pmatrix}\textbf{0}_{n-1}\\ \tfrac{1}{n}\end{pmatrix}$, the true solution is $\vec{y}=\begin{pmatrix}\textbf{0}_{n-1}\\ 1\end{pmatrix}$. Our analysis gives $\vec{p}=\begin{pmatrix}\textbf{0}_{n-1}\\ n\end{pmatrix}$ and $ET =0 +  n^2 \cdot 2\cdot \left(\tfrac{1}{n}\right)^2=2$, so our solver runs in time $\tilde{O}(1/\epsilon^2)$, independently of $\kappa(A)=n$. In contrast, a truncated QLS method must set its threshold $\tau\le 1/n$ to recover $\vec{y}$; choosing $\tau>1/n$ inverts only the well-conditioned part and returns an incorrect solution close to $\begin{pmatrix} \textbf{0}_{n-1}\\ 0\end{pmatrix}$.

As an illustrative application, we apply our new QLS algorithm to a linear system defined on the welded tree graph. This graph, which consists of two binary trees connected by a randomly alternating cycle between their leaves, has been widely used to demonstrate exponential speedups of quantum algorithms finding a root vertex when the other root is given \cite{childs2003ExpSpeedupQW,jeffery2023multidimensional,li2025multidimensional,li2024recovering,belovs2024global}. In our setting, we assume the two root vertices $u$ and $v$ are known and consider the problem of preparing a quantum state that encodes the electrical flow between them. Specifically, we solve the incidence linear system  $A\xvec=\bvec_{u,v}$, where $A$ is the incidence matrix of the welded tree graph, and $\bvec_{u,v}$ is $1$ at vertex $u$, $-1$ at $v$, and $0$ otherwise. By showing that the norm of $\pvec=(AA^{\top})^{+} \bvec_{u,v}$ is a constant, our QLS algorithm prepares the quantum state $\xvec$ with time complexity $O(1/\epsilon^2)$ since $\ET=\Theta(\|\pvec\|)$ and $\|\xvec\|_2^2=\Theta(1)$. On the other hand, existing QLS algorithms for preparing such a quantum state take exponential time due to the exponentially large condition number of $A$. This example highlights how exploiting the structure of the vector $\bvec$ rather than relying solely on matrix properties can lead to substantial improvements in solving the QLS problem.

In addition to developing a new QLS algorithm, the techniques introduced here establish a natural unification between linear algebra and quantum walks, potentially enabling further applications in both areas. For example, span programs are linear-algebraic models of computation first defined in \cite{karchmer1993span} that have been widely used to construct quantum algorithms \cite{reichardt2008span,reichardt2009span,belovs2012span,reichardt2014span,ito2019approximate,cornelissen2020SpanProgramTime}, particularly for decision problems. In particular, span programs have been interpreted as weighted bipartite graphs, as shown in \cite{reichardt2008span,reichardt2009span}. Whether the input to a span program is accepted or rejected depends on the existence of a $0$-eigenvalue of the adjacency matrix of the associated weighted bipartite graph - see \cite[Theorem 4.16]{reichardt2008span} or \cite[Theorem 8.3]{reichardt2009span}. More recently, \cite{jeffery2023quantum} extended the application of span programs beyond decision problems by using them to generate quantum states proportional to the positive witness of the span program, enabling tasks such as sampling edges along $s$-$t$ paths. Although the span program is a powerful tool for characterizing quantum query complexity, their corresponding time complexity often remains unclear or difficult to analyze, as it depends heavily on the cost of implementing the unitary relationship of the associated span program \cite{cornelissen2020SpanProgramTime}. In contrast, our new QLS algorithm can be efficiently implemented as a time-bounded quantum algorithm.

More importantly, our new QLS algorithm has the potential to benefit a broad class of applications that rely on QLS algorithms, but are currently limited by large condition numbers. A central example is the problem of solving multivariate polynomial systems via QLS algorithms, where the problem reduces to solving (Boolean) Macaulay linear systems of the form $A\xvec=\bvec$ \cite{chen2022quantum,ding2023limitations}. 
In these approaches, the only barrier is the condition number $\kappa(A)$. \cite{ding2023limitations} established a tight lower bound of $\Omega(2^{h/2})$ for the condition number $\kappa(A)$.  This follows from the fact that $\kappa(A) = \Omega(\|\xvec\|)=\Omega(2^{h/2})$ when $\|\bvec\|=1$, where $h$ is the Hamming weight of the solution of the polynomial system.

When $h$ is polylogarithmic in the number of variables, this lower bound implies that $\kappa(A)$ may remain polynomial, suggesting a possible superpolynomial quantum speedup using existing QLS algorithms. Our new QLS algorithm refines the possibility by showing that such a superpolynomial quantum speedup for certain special polynomial systems may still be possible even when $\kappa(A)$ is exponentially large, provided that $\ET$ is upper bounded by a polynomial. In this case, the running time of the new QLS algorithm is determined by $\ET$, rather than the condition number $\kappa(A)$.

However, in general, when $h$ is large, we still have $\ET\geq \|\pvec\|\geq \|\xvec\|=\Omega(2^{h/2})$, so our new QLS algorithm alone does not circumvent the exponential lower bound on the condition number established in \cite{ding2023limitations}. Furthermore, conventional preconditioning methods such as the parallel sparse approximate inverse preconditioner~\cite{clader2013preconditioned}, circulant preconditioner~\cite{shao2018quantum}, fast inversion~\cite{tong2021fast}  do not resolve this issue, as they preserve the solution vector and the truncated QLS condition number $\kappa_{\bvec}(A)=\frac{\|A^{+}\bvec\|}{\|\bvec\|}=\|\xvec\|$ remains a tight lower bound that reflects the inherent solution size. 

A promising alternative is to develop new types of instance-aware rescaling scheme that directly reduce $\ET$ rather than the condition number, at the cost of modifying the solution vector. This approach has already been used to demonstrate an exponential quantum–classical oracle separation for the pathfinding problem on the welded tree circuit graph~\cite{li2025multidimensional}.
Instead of left-multiplying the linear system by a matrix $D$, as in standard preconditioning techniques ($DA\xvec = D\vec{b}$), or modifying the linear system by introducing additional constraints based on the graph structure for the oracle problem, as in~\cite{li2025multidimensional}, we propose a new approach that right-multiplies and solves the system in the form $AD\zvec = \vec{b}$.
In such cases, one may not recover the exact solution to the original linear system but may still be able to extract a valid solution to the underlying problem.

\paragraph{Application to polynomial systems.}  We combine our instance-aware QLS algorithm with the proposed rescaling strategy and apply it to the Boolean Macaulay linear system $A\xvec=\bvec$ that arises from multivariate polynomial equations $\{f_1(x_1,x_2,\ldots,x_n)=0, \ldots, f_m(x_1,x_2,\ldots,x_n)=0\}$ \cite{ding2023limitations}. This yields a new quantum algorithm to solve polynomial systems (\cref{alg:MQC}). In particular, for a Boolean Macaulay linear system $A\xvec=\bvec$, we apply our new QLS algorithm to the rescaled linear system $AD\zvec=\bvec$, where $D$ is a diagonal matrix chosen such that $\xvec=D\zvec=A^{+}\bvec$ and $\|\zvec\|_2^2= \text{poly}(h)$ with $ h\in \{1,2,\ldots,n\}$ denoting the Hamming weight of the Boolean solution to the polynomial system. 

We show that this choice of $D$ decreases the norm of the solution, yet the condition number $\kappa(AD)$ remains at least exponentially large (\cref{prop:weighted_condition_number}). Consequently, solving the rescaled linear system with existing QLS algorithms would still take exponential time, because every known QLS algorithm has a complexity that scales linearly in $\kappa(AD)$. In contrast, our new QLS algorithm achieves a run time (\cref{thm:qlspolytime}) that depends mainly on 
\[
\widetilde{O}(\sqrt{ET}), \quad \text{where} \quad ET = \sum_{i=1}^{m \cdot 2^n} p_i^2 \cdot d_i,
\]
with $\vec{p} \in \mathbb{R}^{m \cdot 2^n}$ satisfying $(AD)^{\top} \vec{p} = \zvec$, and $\vec{d} \in \mathbb{R}^{m \cdot 2^n}$ defined such that each $d_i$ is the squared $\ell_2$ norm of the $i$-th row of the matrix $H=[AD, -\bvec]$, that is, $d_i = \| H_{i,*} \|_2^2$.

In particular, our new QLS algorithm accounts for the row-norm imbalance and achieves an instance-aware runtime that adapts to the structure of the weighted Boolean Macaulay linear system. Importantly, we also show that, despite modifying the linear system, the Boolean solution to the original polynomial system can still be efficiently recovered by performing repeated measurements on the quantum state prepared from solving $AD\vec{z} = \vec{b}$, as described in \cref{subsec:weightBoolinear}.


As a concrete application, we apply our quantum algorithm on the planted maximum independent set (MIS) problem, formulated as a specialized multivariate polynomial system. Given an unweighted graph $G = (V, E)$ with $|V| = n$, the planted MIS problem seeks to find the largest subset of vertices $S \subseteq V$ that are pairwise non-adjacent, under the assumption that the maximum independent set $S$ has size $h$ and is unique. We provide a detailed runtime analysis regarding the parameter $\ET$ and show that when the number of independent sets in the graph satisfies certain bounds, our algorithm runs in polynomial time.  Specifically, for every $1 \leq i \leq h$, let $I_{i}$ denote the number of independent sets of size $i$ in the given graph. 
If
\[
  I_{i}=
  \text{poly}(n) \cdot \binom{h}{i},
\]
our quantum algorithm runs in polynomial time because the parameter $\ET$ is bounded by $\operatorname{poly}(n)$. 


Although no classical hardness result for the planted MIS problem is currently known under these independent set counting conditions, classical hardness results, at least for certain classes of algorithms, have been established for closely related problems, such as the (planted) maximum independent set (or clique) in random graph models, as well as the planted SAT problem~\cite{mckenzie2020new, alon1998finding, gamarnik2021overlap, feldman2015complexity, wein2022optimal, dhawan2024low}. We leave a detailed investigation of the classical hardness under these conditions to future work.



\subsection{Overview of the New QLS Algorithm}
We consider the augmented matrix
$
H_{M\times (N+1)} = [A, -\bvec]
$, and define the Hilbert space as  $$\mathcal{H}=\{ \ket{e_{i,j}} | A_{i,j}\neq 0\}\cup \{\ket{e_{i,b}} | b_i \neq 0\},$$
where $\ket{e_{i,j}}$
or $ \ket{e_{i,b}}$ denotes a computational basis state corresponding to each nonzero entry in $H$.

The core insight of our new QLS algorithm arises from the perspective of quantum walks on graphs. Specifically, we interpret the augmented matrix $H=[A, -\bvec]$ as the biadjacency matrix of a bipartite graph. One side of the graph consists of vertices $\{u_1,u_2,\ldots,u_M\}$ corresponding to the rows of matrix $H$, and the other side consists of vertices $\{v_1,\ldots,v_N\}\cup \{b\}$ corresponding to the columns of $H$. An edge $\{u_i,v_j\}$ or $\{u_i,b\}$ exists if and only if $H_{i,j}\neq 0$. Based on this bipartite structure, we define \emph{row star states} $\ket{\Psi_i}$ for $1 \leq i \leq M$ (see~\cref{def:rowstates}), and  \emph{column star states} $\ket{\Phi_j}$ for $1 \leq j \leq N$ together with $\ket{\Phi_{b}}$ (see~\cref{def:colstates}). Using these, we construct a unitary operator $U_{\cal AB}=(2\Pi_{\A}-I)(2\Pi_{\B}-I)$ (see \cref{eq:unitaryAB} from the column and row star states. Unlike quantum walk constructions where $2\Pi_{\A}-I$ is implemented as a SWAP operator \cite{apers2022elfs} or $-$SWAP as in $\cite{li2025multidimensional}$, we define it as $I - 2\left(\sum_{j=1}^{N} \ket{\Phi_j}\bra{\Phi_j} +\ket{\Phi_b}\bra{\Phi_b}\right)$ to better suit the structure of general linear systems. While reminiscent of graph-based quantum-walk constructions, our framework departs in two critical respects: (i) each column star state is unweighted, distributing its amplitude uniformly over all nonzero column indices regardless of the weight; and (ii) each row star state is reweighted to reflect column sparsity. These design choices establish a genuinely new construction, allowing techniques developed for graph-structured linear systems to be extended to general linear systems. 

Although the form of our quantum walk operator may resemble the graph-based construction in~\cite{szegedy2004quantum}, there is no direct connection between the two. Our unitary operator $U_{\mathcal{AB}}$ is instead rooted in the framework of electrical networks, following the line of work in~\cite{belovs2013ElectricWalks,piddock2019electricfind,apers2022elfs,jeffery2023multidimensional,li2025multidimensional}. The construction of $U_{\mathcal{AB}}$ is particularly inspired by the alternating neighborhood technique from the multidimensional quantum walk framework introduced in~\cite{jeffery2023multidimensional}. In particular, the unitary $U_{\mathcal{AB}}$, central to our algorithm, can be constructed and implemented with the overhead polynomial in the sparsity $\s$ of the augmented matrix $H = [A, -\vec{b}]$, as formalized in \cref{thm:unitary-from-reflections}. This efficient implementation stems from the fact that the row star states used to define $U_{\mathcal{AB}}$ are mutually orthogonal and have at most $\s$ nonzero entries, and the same property holds for the column star states.

 At a high level, our QLS algorithm comprises three steps:
\begin{enumerate}
    \item \textbf{ Prepare the initial state $\ket{\Phi_b}$}
    
    We prepare the state
    $\ket{\Phi_b}=\frac{1}{\sqrt{nnz(\bvec)}}\sum_{b_i\neq 0} \ket{e_{i,b}}$, where $nnz(\bvec)$ is the number of nonzero entries in $\bvec$.
     \begin{itemize}
         \item Throughout this paper, we restrict the right hand side vector $\bvec$ to be $O(\s)$-sparse. Even under this constraint, the QLS problem remains BQP-complete~\cite[Section~4.3]{Gharibian2024QCT}.

     \end{itemize}
    \item  \textbf{Applying phase estimation on $(U_{\cal AB}, \ket{\Phi_b})$}
We run phase estimation up to certain precision $O\left(\frac{\epsilon^2}{\sqrt{\s \cdot \ET}}\right).$ By \cref{thm:statepreparation}, if the phase value is $0$, we obtain the quantum state
    \[
    \ket{\theta^*}=\frac{1}{\sqrt{1+\|\xvec\|_2^2}}\left(\sum_{i=1}^{N}\y_i\ket{\Phi_i}+\ket{\Phi_b}\right),
    \] where $\ket{\Phi_i}$ are mutually orthogonal and $\xvec=[\y_1,\y_2,\ldots,\y_N]$ is a solution of the liner system $A\xvec=\bvec$.
    
    \begin{itemize}
       \item To perform this step, we construct a $U_{\cal AB}= (I - 2\Pi_{\mathcal{A}})(I - 2\Pi_{\mathcal{B}})$ from the augmented matrix $H$ within the Hilbert space $\mathcal{H}$ in \cref{sec:algorithmsetup}.
       \item  Let $\ket{\p}$ be an unnormalized vector. Three conditions must be satisfied to obtain the state $\ket{\theta^*}$:
       \begin{itemize}
           \item Initial state decomposition: \[
       \ket{\Phi_b}= \frac{1}{\sqrt{1+\|\xvec\|^2}}\ket{\theta^*}-(I-\Pi_{\A}) \ket{\p}\] 
       \item Fixed-point condition:  $$ \qquad U_{\cal AB}\ket{\theta^*}=\ket{\theta^*}$$
       \item Projection property: $$ \Pi_{\B}\ket{\p}=\ket{\p}$$
       \end{itemize}

       These three conditions have been proven in \cref{lemm:statedecompositonwalk}, \cref{lem:fixedpoint}, and \cref{lem:projectionB}, respectively. 

    \item In particular, run the phase estimation with precision $\delta$ to the initial state $\ket{\Phi_b}$, the resulting state is $$\Pi_{\delta}\ket{\Phi_b}= 
\frac{1}{\sqrt{1+\|\xvec\|^2_2}} (\Pi_{\delta}\ket{\theta^*}-\Pi_{\delta}(I-\Pi_{\cal A})\ket{\p} $$
where $
     \Pi_{\delta}\ket{\theta^*}=\ket{\theta^*} 
    $
    and $\|\Pi_{\delta}(I-\Pi_{\A})\ket{\p}\| \leq \frac{\delta}{2}\|\ket{\p}\|$ using the effective spectral gap Lemma as stated in \cref{lem:effective-spectral-gap}.

    \end{itemize}

    \item \textbf{ Performing a projective measurement}

    We perform a projective measurement 
    $$\{I-\ket{\Phi_b}\bra{\Phi_b},\ket{\Phi_b}\bra{\Phi_b}\}$$on $\ket{\theta^*}$ to obtain the quantum state
    \[
    \ket{\xvec}=\frac{1}{\|\xvec\|}\sum_{i=1}^{N}\y_i\ket{\Phi_i}.
    \]
We slightly abuse notation by writing $\ket{\xvec}$ to denote the solution in both the computational basis $\{\ket{i}\}$ and the column star state basis $\{\ket{\Phi_i}\}$.  This is acceptable because measurements in either basis yield identical outcome probabilities.
\end{enumerate}

Our new QLS algorithm shares similar high-level steps with \cite[Algorithm~1]{jeffery2023quantum} and \cite[Algorithm~1]{dalzell2024shortcut}, and the resulting quantum state in all cases encodes the minimum $\ell_2$ norm solution to the linear system $A\xvec = \bvec$. The main technical contribution of the new QLS algorithm comes from the definition of the unitary operator, which generalizes the definition of the quantum walk operator in \cite{li2025multidimensional}, extending it from linear systems related to graphs to general linear systems.   

The algorithm in \cite{jeffery2023quantum} is based on span programs and produces a quantum state corresponding to the solution of the linear system, which also coincides with the optimal positive witness of the span program, as discussed in \cite{ito2019approximate}. However, a key limitation of the span program approach is that it does not yield a time-efficient quantum algorithm due to the challenges in implementing the span program unitary. In contrast, our QLS algorithm supports a time-efficient implementation, due to the structure of the quantum walk unitary operator $U_{\mathcal{AB}}$. Moreover, our vector decomposition theorem (Theorem~\ref{thm:vectordecomposition}) shares a structure similar to the formulations in \cite[Equations~(12)--(13)]{jeffery2023quantum}, and thus our algorithm can also be viewed as a generalization of the span-program approach, with the added benefit of efficient unitary construction.

In \cite{dalzell2024shortcut}, a related insight similar to \cref{thm:vectordecomposition} regarding the initial state decomposition \cite[Equation~12]{dalzell2024shortcut} is applied. Instead of using our augmented matrix $H$ of size $M \times (N+1)$, they define a modified system matrix
$A_t = \begin{bmatrix} A & 0 \\ 0 & t \end{bmatrix}$
of dimension $(M+1) \times (N+1)$, and construct a block-encoding of $G = Q_{\bvec} A_t$, where $Q_{\bvec} = I - \ket{\bvec}\bra{\bvec}$ projects onto the orthogonal complement of $\bvec$. This allows the QLS algorithm in \cite{dalzell2024shortcut} to implement kernel reflections using the QSVT framework \cite{gilyen2018QSingValTransf} and the eigenstate filtering technique \cite{lin2020optimal}, achieving an optimal QLS algorithm with complexity $O(\kappa(A)\log(1/\epsilon))$. Following \cite{dalzell2024shortcut}, we also show in Appendix~\ref{append:QLSQSVT} that the kernel projection technique from \cite{dalzell2024shortcut} can be applied to our setting using the augmented matrix $H$ and initial state $\ket{e_{N+1}}$. As a byproduct of \cref{thm:vectordecomposition}, we provide an alternative QLS algorithm based on the QSVT framework that achieves optimal complexity $O(\kappa(A)\log(1/\epsilon))$ in \cref{append:QLSQSVT}. It should be emphasized that performing phase estimation directly on the augmented matrix under the QSVT framework \cite{martyn2021grand} does not yield an instance-aware QLS algorithm, as there is no direct way to apply the effective spectral gap lemma to bound the resulting complexity. Therefore, certify the novelty and necessary of the construction of the unitary operator $U_{\cal AB}$ as in \cref{sec:QLSalgorithm}.

In addition, the key parameter $\sqrt{\ET}$ of our new QLS algorithm can also be upper bounded by $\sqrt{2}\,\kappa^2(A)$, assuming $\|A\| \leq 1$ and $\|\bvec\| = 1$, as is typically done in existing QLS algorithms. This follows because 
\[
\ET = \sum_{i=1}^{M} p_i^2 \cdot d_i \le 2 \sum_{i=1}^{M} p_i^2 = 2\|\pvec\|_2^2,
\]
where $\pvec = (AA^{\top})^+ \bvec$ satisfies $\|\pvec\|_2 \le \kappa^2(A)$, and each $d_i \le 2$ since $\|A\| \le 1$ and $\|\bvec\| = 1$. Here, $d_i$ denotes the squared $\ell_2$-norm of the $i$-th row of the augmented matrix $H = [A, -\bvec]$. Furthermore, the upper bound of $\kappa^2(A)$ can be reduced to $\kappa(A)$ if the norm of the solution $\xvec$ is known, as in our example of solving polynomial systems. This is because the linear system can always be reformulated so that the solution has unit norm. Specifically, when $\|\xvec\|$ is known, we can rescale the system as 
\[
A\xvec = \frac{1}{\|\xvec\|}\bvec,
\]
so that the solution has unit norm and the scaled right-hand side $\bvec$ has norm $\frac{1}{\|\xvec\|}$. Therefore, our new QLS algorithm employs $\|\xvec\|$ and $\ET = \Theta(\|\pvec\|)$ as key parameters, as they capture the structural alignment of $\bvec$ with the eigenspaces of $AA^{\top}$, thus reflecting instance-dependent complexity beyond worst-case bounds.



In contrast to the best existing QLS algorithms, the dependence of our new QLS algorithm on the precision parameter $\epsilon$ is polynomial, that is, $1/\epsilon^2$, rather than logarithmic, that is, $\log(1/\epsilon)$. The primary reason for this difference lies in the underlying subroutine used for state preparation. Our algorithm relies on phase estimation, the same as the original HHL algorithm \cite{HHL09}, which introduces a polynomial dependence on the inverse of accuracy $1/\epsilon$. In contrast, some existing QLS algorithms achieve a logarithmic dependence on accuracy $\log (1/\epsilon)$ using polynomial approximations of $A^{-1}$ within the QSVT framework \cite{childs2015QLinSysExpPrec,gilyen2018QSingValTransf,lin2020optimal}. It is an interesting open question whether the polynomial dependence on $1/\epsilon$ can be further improved, possibly by replacing the phase estimation subroutine with a more efficient alternative, or whether such a polynomial dependence is inherent in this approach. 

\subsection{Main Results}
 Our main results are summarized as follows.

\begin{enumerate}
    \item \textbf{Instance-aware QLS algorithm:} We propose a new QLS algorithm whose time complexity depends on the structure of the solution rather than the condition number, allowing improved performance for structured instances. 
    \begin{theorem}[New QLS algorithm] Let $\xvec= A^{+}\bvec$, $\pvec = (AA^{\top})^{+}\bvec$, and let $\s$ denote the sparsity of the augmented matrix $H = [A, -\bvec]$. Let $d_i$ denote the squared $\ell_2$-norm of the $i$-th row of $H$, and define
\[
\ET = \sum_{i=1}^M p_i^2 \cdot d_i.
\]
\cref{alg:newQLS2} prepares a quantum state $\ket{\xvec'}$ such that
\[
\bigl\| \ket{\xvec'}\!\bra{\xvec'} - \ket{\xvec}\!\bra{\xvec} \bigr\|_1 \le O(\epsilon),
\]
and achieves time complexity
\[
\widetilde{O}\!\left( \left( \frac{1}{\|\xvec\|_2^6} + \|\xvec\|_2^2 \right)  \cdot \frac{\sqrt{\s^3 \cdot \ET}}{\epsilon^2} \right),
\]
.

    \end{theorem}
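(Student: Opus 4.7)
The plan is to follow the three-phase structure of \cref{alg:newQLS2} outlined in the overview and assemble the overall complexity from (i) the cost of implementing a single step of the walk unitary $U_{\cal AB}$, (ii) the depth of phase estimation needed for a precision $\delta$ to be fixed below, and (iii) the number of outer repetitions required to succeed in both the phase-measurement and the final projective measurement. The three structural identities of \cref{lemm:statedecompositonwalk}, \cref{lem:fixedpoint}, and \cref{lem:projectionB} combine into
\[
\ket{\Phi_b} \;=\; \frac{1}{\sqrt{1+\|\xvec\|_2^2}}\,\ket{\theta^*} \;-\; (I-\Pi_{\A})\ket{\p},
\qquad U_{\cal AB}\ket{\theta^*}=\ket{\theta^*},\qquad \Pi_{\B}\ket{\p}=\ket{\p},
\]
and a direct calculation from the row-star weighting gives $\|\ket{\p}\|^2 = \ET$. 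This decomposition is the anchor of the argument.

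I would next run phase estimation on $(U_{\cal AB},\ket{\Phi_b})$ with precision $\delta = \Theta(\epsilon^2/\sqrt{\s\cdot\ET})$ and apply the projector $\Pi_\delta$ onto phases of magnitude below $\delta$. The fixed-point condition gives $\Pi_\delta\ket{\theta^*}=\ket{\theta^*}$, and the effective spectral gap lemma (\cref{lem:effective-spectral-gap}) together with $\Pi_{\B}\ket{\p}=\ket{\p}$ yields $\|\Pi_\delta(I-\Pi_{\A})\ket{\p}\| \le \tfrac{\delta}{2}\sqrt{\ET}$. Conditioning on the zero-phase outcome, which occurs with probability $\Theta(1/(1+\|\xvec\|_2^2))$, produces a state that is $O(\delta\sqrt{\ET(1+\|\xvec\|_2^2)})$-close to $\ket{\theta^*}$; the subsequent projective measurement $\{I-\ketbra{\Phi_b}{\Phi_b},\ketbra{\Phi_b}{\Phi_b}\}$ then outputs $\ket{\xvec}$ on the complementary branch with conditional probability $\|\xvec\|_2^2/(1+\|\xvec\|_2^2)$. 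Multiplying the two success events, the per-run success probability is $\|\xvec\|_2^2/(1+\|\xvec\|_2^2)^2$, so an expected $O(\|\xvec\|_2^{-2}+\|\xvec\|_2^{2})$ outer repetitions suffice.

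For the per-run cost, \cref{thm:unitary-from-reflections} implements $U_{\cal AB}$ in $\widetilde{O}(\poly(\s))$ gates; combined with the $1/\delta=\widetilde{O}(\sqrt{\s\cdot\ET}/\epsilon^2)$ applications of $U_{\cal AB}$ needed for phase estimation, the per-run complexity becomes $\widetilde{O}(\sqrt{\s^3\cdot\ET}/\epsilon^2)$. Multiplying by the expected number of outer repetitions and propagating the $\epsilon$ target through the two successive conditional renormalizations (postselection on zero phase and projection onto the complement of $\ket{\Phi_b}$), each of which magnifies the $\ell_2$-error by a factor of order $\sqrt{1+\|\xvec\|_2^2}/\|\xvec\|_2$, is exactly what produces the envelope $(1/\|\xvec\|_2^6+\|\xvec\|_2^2)$ of the theorem once the regimes $\|\xvec\|_2\ll 1$ and $\|\xvec\|_2\gg 1$ are combined.

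The principal technical obstacle I expect is the error-propagation accounting: $\delta$ must be chosen so that the output is $\epsilon$-close to $\ket{\xvec}$ after both conditional renormalizations, even when $\|\xvec\|_2$ is small, and this is exactly what forces the specific precision $\delta=\Theta(\epsilon^2/\sqrt{\s\cdot\ET})$ and the extra powers of $\|\xvec\|_2^{-1}$ in the outer-loop factor. A secondary subtlety is verifying that the sparsity $\s$ of the augmented matrix $H=[A,-\vec b]$ (rather than of $A$ alone) is the right implementation parameter in \cref{thm:unitary-from-reflections}, because the column star $\ket{\Phi_b}$ mixes the nonzeros of $\vec b$ with those of $A$ inside the reflections $2\Pi_{\A}-I$ and $2\Pi_{\B}-I$.
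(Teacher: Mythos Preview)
Your plan is essentially the paper's own proof of \cref{thm:QLSalgorithmcomplexity}: same three-step algorithm, same use of \cref{lemm:statedecompositonwalk}, \cref{lem:fixedpoint}, \cref{lem:projectionB} feeding into \cref{lem:phaseprojection}/\cref{thm:statepreparation}, same accounting of the two success probabilities and the error amplification under postselection, and the same $\widetilde O(\s)$ cost per walk step from \cref{thm:unitary-from-reflections}.

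Two computational slips to fix when you write it out. First, ``a direct calculation gives $\|\ket{\p}\|^2=\ET$'' is not right: from \cref{eq:potential} one gets $\|\ket{\p}\|^2=(1+\|\xvec\|_2^2)^{-2}\sum_i p_i^2\,d_{u_i}$, and it is the inequality $d_{u_i}\le \s\cdot d_i$ (coming from $|\N(v_j)|,|\N(b)|\le \s$) that yields $\|\ket{\p}\|^2\le \s\cdot\ET/(1+\|\xvec\|_2^2)^2$; this is exactly how the extra $\sqrt{\s}$ enters the query count in \cref{thm:statepreparation}. Second, your post-phase-estimation error $O(\delta\sqrt{\ET(1+\|\xvec\|_2^2)})$ has the wrong $\delta$-dependence: \cref{lem:phaseprojection} gives a trace-distance error $O(\sqrt{\delta\|\ket{\p}\|/\gamma_0})$, i.e.\ $\sqrt{\delta}$ not $\delta$, which is precisely why $\delta=\Theta(\epsilon^2/\sqrt{\s\cdot\ET})$ (and hence the $1/\epsilon^2$ in the final bound) is the right choice. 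Your chosen $\delta$ is consistent with the correct formula but not with the one you wrote, so the argument as written is internally inconsistent. Once these are corrected, the paper's bookkeeping is exactly as you describe: the projective-measurement step amplifies the trace error by $1/\gamma$ with $\gamma=\|\xvec\|_2^2/(1+\|\xvec\|_2^2)$, forcing precision $O((\epsilon\gamma)^2/\sqrt{\s\cdot\ET})$ in Step~2, and multiplying the per-run cost $\widetilde O(\sqrt{\s^3\cdot\ET}/(\epsilon\gamma)^2)$ by the $O(\|\xvec\|_2^{-2}+\|\xvec\|_2^2)$ expected repetitions gives the stated $(\|\xvec\|_2^{-6}+\|\xvec\|_2^2)$ envelope.
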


    \item \textbf{Quantum algorithm for multivariate polynomial systems:} We design a new quantum algorithm for solving certain multivariate polynomial systems by leveraging our QLS framework and a novel rescaling scheme. 

    \begin{theorem}[Quantum algorithm for multivariate polynomial systems]
Let $\mathcal{F} \subseteq \mathbb{C}[x_1, \dots, x_n]$ be a system of quadratic polynomials that admits a unique Boolean solution $\bs \in \{0,1\}^n$ of Hamming weight $h$. Then, with probability at least $1 - \delta$, Algorithm~\ref{alg:MQC} recovers $\bs$ correctly in time
\[
\widetilde{O}\left( \sqrt{\ET} \cdot \log(1/\delta) \right),
\]
where $\ET = \sum_{i=1}^M p_i^2 \cdot d_i$ is defined with respect to the weighted Boolean Macaulay linear system $AD\zvec=\bvec$ associated with $\mathcal{F}$.
\end{theorem}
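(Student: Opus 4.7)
The plan is to invoke the new instance-aware QLS algorithm (Theorem~1) on the right-rescaled Boolean Macaulay linear system $AD\zvec = \bvec$, and then show that the Boolean solution $\bs$ can be extracted from the resulting state via a small number of computational-basis measurements. First, I would assemble the reformulation described in the introduction: the Macaulay matrix $A$ and right-hand side $\bvec$ are constructed from $\mathcal{F}$ following \cite{ding2023limitations,chen2022quantum}, and the diagonal matrix $D$ is chosen so that the solution satisfies $D\zvec = \xvec = A^+\bvec$ while $\|\zvec\|_2^2 = \operatorname{poly}(h)$. Since $\mathcal{F}$ admits a unique Boolean solution $\bs$ of Hamming weight $h$, the ``variable block'' of $\xvec$ encodes $\bs$, and the corresponding block of $\zvec$ is a known rescaling of the indicator vector of $\bs$.

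Next, I would apply \cref{alg:newQLS2} to the rescaled system $(AD)\zvec = \bvec$ with precision $\epsilon$ chosen as a fixed inverse polynomial in $h$ and $n$, yielding the state $\ket{\zvec}$ in time $\widetilde{O}((1/\|\zvec\|_2^6 + \|\zvec\|_2^2)\sqrt{\s^3\cdot ET}/\epsilon^{2})$. Because $\|\zvec\|_2^2 = \operatorname{poly}(h)$ by our choice of $D$, and because the sparsity $\s$ of $H = [AD,-\bvec]$ is polynomial (as the Macaulay matrix arises from a quadratic system), the prefactors $1/\|\zvec\|_2^6 + \|\zvec\|_2^2$, $\s^{3/2}$, and $1/\epsilon^{2}$ are all absorbed into the $\widetilde{O}$ notation, leaving the dominant term $\sqrt{ET}$.

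Then I would carry out the extraction step from \cref{subsec:weightBoolinear}. Measuring $\ket{\zvec}$ in the computational basis produces a label $(i,\alpha)$ corresponding to a nonzero coordinate of $\zvec$, and one can postselect on (or simply filter to) the variable block to obtain an index in the support of $\bs$. Because this support has size exactly $h$ and the relative weights on it are known from $D$, a constant-factor overhead of coupon-collector-style sampling recovers all of $\supp(\bs)$, and hence $\bs$ itself. A standard Chernoff amplification over $O(\log(1/\delta))$ independent runs of the QLS+measurement pipeline drives the overall failure probability below $\delta$, contributing the $\log(1/\delta)$ factor in the theorem.

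The main obstacle is the bookkeeping in the extraction step: one must verify that the postselection onto the variable block succeeds with at least inverse-polynomial probability (so it is absorbed into $\widetilde{O}$ rather than blowing up the runtime), and that the subsequent coupon-collector argument terminates in $\widetilde{O}(1)$ rounds given the rescaled amplitudes induced by $D$. Both rely on the fact that $\|\zvec\|_2^2$ is polynomial in $h$ and that $D$ equalizes the support amplitudes up to polynomial factors. Once these two facts are in place, the stated runtime follows immediately by composing Theorem~1 with the measurement and amplification subroutines.
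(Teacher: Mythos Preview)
Your approach is correct and follows the paper's route: apply \cref{alg:newQLS2} (via \cref{thm:QLSalgorithmcomplexity}) to the rescaled system $AD\zvec=\bvec$, absorb $\|\zvec\|_2^2=h\le n$, the sparsity $\s=\mathrm{poly}(n)$, and the choice $\epsilon=1/\mathrm{poly}(n)$ into the $\widetilde{O}$, and then recover $\bs$ by repeated computational-basis measurement of $\ket{\zvec}$.

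The one substantive difference is the extraction mechanism. You propose postselecting onto the degree-one ``variable block'' of $\zvec$ and running a coupon-collector argument over the $h$ singletons in $S$. This works---the postselection succeeds with probability $1/h$ under the chosen $D$, and coupon collector then needs $O(h\log h)$ successful samples, both $\mathrm{poly}(n)$ and hence absorbed by $\widetilde{O}$. The paper's \cref{lem:solutionextractionlogn} is sharper: it exploits that \emph{every} measurement outcome of $\ket{\zvec}$ is already a subset $T\subseteq S$ (not just a singleton), and by the specific block weights in $D$ each fixed $x_j\in S$ lies in the sampled $T$ with probability $(h+1)/(2h)>1/2$. A union bound over geometric tails then recovers all of $S$ after only $O(\log(n/\delta))$ samples, with no postselection and no separate Chernoff layer on top. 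Both arguments land inside the stated $\widetilde{O}(\sqrt{\ET}\cdot\log(1/\delta))$ envelope; the paper's version simply uses the information in the higher-degree outcomes rather than discarding it.
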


    \item \textbf{Planted MIS problem:} We consider the \emph{planted MIS} problem, where a unique maximum independent set $S$ of size~$h$ is embedded in a graph. The goal is to recover this planted set~$S$.

\begin{theorem}[Polynomial-time quantum algorithm for the planted MIS]
For every $1 \leq i \leq h$, let $I_{i}$ denote the number of independent sets of size $i$ in the given graph. 
If
\[
  I_{i}= 
  \text{poly}(n) \cdot \binom{h}{i}, 1\leq i\leq h
\]

then the parameter $\ET$ in \cref{alg:MQC} is bounded by $\text{poly}(n)$, and consequently \cref{alg:MQC} recovers the planted set $S$ in polynomial time with high probability (\cref{sec:MISpolysys}). 
\end{theorem}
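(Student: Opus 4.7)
The plan is to reduce to the quantum algorithm for multivariate polynomial systems in the preceding theorem, so the main task is to bound the parameter $\ET$ by $\text{poly}(n)$ for the weighted Boolean Macaulay system arising from a suitable polynomial encoding of the planted MIS. First I would fix the encoding: introduce a Boolean variable $x_v$ for each $v\in V$ and impose the Boolean constraints $x_v^2-x_v=0$, the independence constraints $x_ux_v=0$ for each edge $(u,v)\in E$, and the size constraint $\sum_v x_v-h=0$. The unique Boolean solution is then $\bs=\mathbf{1}_S$, and the Macaulay-level solution vector $\xvec$ satisfies $y_T=\mathbf{1}[T\subseteq S]$, where columns of the Macaulay matrix are indexed by monomials identified with subsets $T\subseteq V$. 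Note $\|\xvec\|_2^2=2^h$ is exponential, which is why rescaling is essential.

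Next I would specify the diagonal rescaling matrix $D$ so that $\zvec=D^{-1}\xvec$ has polynomial norm. The natural choice $D_{T,T}=\sqrt{\binom{h}{|T|}}$ gives $\|\zvec\|_2^2=\sum_{i=0}^{h}\binom{h}{i}\cdot\binom{h}{i}^{-1}=h+1$, absorbing the exponential weight of the Boolean solution into $D$. What remains is to bound $\ET=\sum_i p_i^2\,d_i$, where $\pvec$ is the minimum-norm solution of $(AD)^\top\pvec=\zvec$ and $d_i$ is the squared row norm of $H=[AD,-\bvec]$.

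I would bound $\ET$ by constructing an explicit feasible $\pvec$ (which upper-bounds the minimum-norm potential), supported on rows of $AD$ indexed by pairs (polynomial, monomial) in which the multiplying monomial is itself an independent set of $G$. This support is natural because monomials that are not independent sets evaluate to zero on every Boolean solution of the system. The entry of $\pvec$ on a row at degree level $|T|=i$ is proportional to $1/\sqrt{\binom{h}{i}}$, inherited from $z_T$, while the corresponding row norm $d_i$ grows at most like $\binom{h}{i}$ up to polynomial factors coming from the $O(1)$ monomial terms in each polynomial constraint and their rescaling through $D$. The number of contributing rows at level $i$ is at most $m\cdot I_i$, so level by level
\[
\sum_{\text{rows at level }i} p_i^2\,d_i \;\leq\; \text{poly}(n)\cdot I_i\cdot\frac{1}{\binom{h}{i}}\cdot\binom{h}{i}\;\leq\;\text{poly}(n),
\]
using the hypothesis $I_i\leq\text{poly}(n)\binom{h}{i}$. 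Summing over $i\leq h$ yields $\ET=\text{poly}(n)$, and the polynomial-system theorem then delivers total runtime $\widetilde{O}(\sqrt{\ET}\log(1/\delta))=\text{poly}(n)$.

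The main obstacle will be the row-by-row estimates, specifically controlling how multiplying a polynomial constraint by a high-degree monomial can produce entries supported on monomials that are \emph{not} independent sets. One must show that such cross-terms either cancel (via the independence relations $x_ux_v=0$ and the Boolean relation $x_v^2=x_v$, both of which already appear as rows of $A$) or contribute only polynomial overhead to $d_i$ after rescaling by $D$. Once this bookkeeping is in place, the counting hypothesis on $I_i$ closes the argument cleanly, confirming that \cref{alg:MQC} recovers the planted MIS in time polynomial in $n$ with high probability.
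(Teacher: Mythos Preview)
Your overall strategy matches the paper's, but there is a genuine gap in the scaling of $\pvec$ that makes your level-by-level estimate fail exactly where the hypothesis on $I_i$ is supposed to be used.

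You claim the entry of $\pvec$ at degree level $i$ is proportional to $1/\sqrt{\binom{h}{i}}$, ``inherited from $z_T$.'' But the constraint is $(AD)^\top\pvec=\zvec$, i.e.\ $A^\top\pvec=D^{-1}\zvec$, and the entries of $D^{-1}\zvec$ at level $i$ are $1/\binom{h}{i}$, not $1/\sqrt{\binom{h}{i}}$. More importantly, $\pvec$ is not simply the right-hand side: the paper solves the resulting recursion explicitly (using the single polynomial $f=\sum_v x_v-h$ and the fact that a degree-$i$ monomial column has $i{+}1$ nonzero entries) and obtains $p_{\m_i}\approx 1/\binom{h}{i}$, hence $p_{\m_i}^2\approx 1/\binom{h}{i}^2$. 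Combined with the row-norm estimate $d_{\m_i}=\widetilde{O}(\binom{h}{i})$, this gives $p_{\m_i}^2 d_{\m_i}=\widetilde{O}(1/\binom{h}{i})$, and the level-$i$ contribution is $I_i\cdot \widetilde{O}(1/\binom{h}{i})=\text{poly}(n)$ by hypothesis.

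With your scaling, $p_i^2 d_i\approx \frac{1}{\binom{h}{i}}\cdot\binom{h}{i}=1$, so your displayed bound is really $\text{poly}(n)\cdot I_i$. Under the stated hypothesis $I_i=\text{poly}(n)\binom{h}{i}$, this is $\text{poly}(n)\binom{h}{i}$, which is exponential for $i\approx h/2$. Your final inequality $\leq\text{poly}(n)$ is therefore unjustified: you have implicitly assumed $I_i\leq\text{poly}(n)$, which is strictly stronger than the hypothesis and trivializes the problem. To close the argument you must actually compute (or at least bound) the recursion for $\pvec$ and obtain the extra factor $1/\binom{h}{i}$; the paper does this explicitly, including a refinement by the overlap parameter $t=|\m_i\setminus S|$ showing $p_{\m_i,t}$ ranges between $1/\binom{h}{i}$ and $1/\binom{h-1}{i}$.
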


\end{enumerate}

\subsection{Technical Contribution}

Our main contribution consists of three components:
(1) the design of a new instance-aware QLS algorithm;  
(2) the introduction of a novel rescaling scheme that enables an end-to-end application of the new QLS algorithm to solving multivariate polynomial systems; and
(3) a concrete application to the MIS problem, reformulated as a special case of a multivariate polynomial system.

Together, the new QLS algorithm and the rescaling strategy reopen the possibility of achieving superpolynomial speedups to solve certain multivariate polynomial systems in regimes where previous QLS-based algorithms have failed.

\begin{enumerate}
    \item The connection between random walks and electrical networks is well-established and is closely related to solving  linear systems \cite{doyle1984RandomWalksAndElectriNetw,vishnoi2013lx}. On the quantum side, the relationship between quantum walks and electrical networks was first observed by Belovs~\cite{belovs2013ElectricWalks}, and has since been further developed in subsequent works~\cite{piddock2019electricfind, apers2022elfs}. However, the connection between quantum walks and general graph-related linear systems remained unclear until the recent work in ~\cite{li2025multidimensional}, which not only established this connection via a linear system formulation, but also extended it to the setting of multidimensional quantum walks and multidimensional electrical networks.

    The key technical contribution enabling this connection is the definition of the walk unitary 
$U_{\mathcal{AB}} = (2\Pi_{\mathcal{A}} - I)(2\Pi_{\mathcal{B}} - I)$.
In particular, the prior construction in~\cite{apers2022elfs} defines the projector $\Pi_{\mathcal{A}}$ such that 
$2\Pi_{\mathcal{A}} - I = \text{SWAP}$. A slight modification, namely, setting 
$2\Pi_{\mathcal{A}} - I = -\text{SWAP}$, leads to a faithful connection between quantum walks on electrical networks and graph-related linear systems \cite[Section 2]{li2025multidimensional}. 
This modified formulation also enables a generalization to multidimensional electrical networks and their corresponding linear systems by extending the definition of $\Pi_{\mathcal{B}}$ using alternative neighborhood technique \cite[Section 3]{jeffery2023multidimensional,li2025multidimensional}.

In the case of graph-related linear systems, as in~\cite{li2025multidimensional}, each column (representing an edge) has exactly two nonzero entries, which can be captured by the $-\text{SWAP}$ operator. However, for general linear systems, the number of nonzero entries in each column can be arbitrary, which poses a bottleneck for extending the previous results. In this work, we address this challenge by extending the definition of $\Pi_{\mathcal{A}}$, inspired by the alternative neighborhood technique, to better align with the structure of general linear systems. To accommodate this extension, we redefine the underlying matrix representation as the biadjacency matrix of a graph, which is used in the quantum walk construction as described in~\cref{sec:QLSalgorithm}.

Our new QLS algorithm achieves a time complexity that depends on the structure of the solution vector, rather than on the condition number of the linear system. As an illustrative application, our new QLS algorithm prepares a quantum state exponentially faster than existing QLS-based algorithms for a contrived linear system associated with the welded tree graph.

 \item Classically, preconditioning is a powerful technique for finding a solution of large linear systems~\cite{benzi2002preconditioning,spielman2014nearly}. In the quantum setting, however, the use of preconditioning remains limited. Existing quantum preconditioning methods typically require the matrix to remain sparse and the output quantum state to correspond directly to the solution of the original system~\cite{clader2013preconditioned,shao2018quantum,tong2021fast}.  

In this work, we propose a novel \emph{rescaling scheme} that transforms the system $A\vec{x} = \vec{b}$ into the form $AD\vec{z} = \vec{b}$, where $D$ is a diagonal matrix. While the transformed system $AD\zvec = \bvec$ resembles the form used in right preconditioning, we refer to it as \emph{rescaling} since our goal is not to recover the original solution $\xvec$, but rather to solve a structurally modified system that is more amenable to our new QLS algorithm. 

This approach is particularly useful when applying our new QLS algorithm to solve Boolean Macaulay linear systems derived from multivariate polynomial systems. This particular application of our new QLS algorithm, combined with the rescaling scheme, to solving multivariate polynomial systems allows us to bypass the condition number barrier established in \cite{ding2023limitations}, thereby reopening the door to achieving superpolynomial quantum speedups across a broad class of problems, including certain nonlinear differential equations, graph isomorphism problem, random $k$-SAT, and MIS or maximum clique problem in random graphs.

\item The decision version of the \textsc{MIS} problem is \textsf{NP}-complete~\cite{karp2009reducibility}; consequently, \emph{finding} a maximum independent set is \textsf{NP}-hard. In fact, \textsc{MIS} is  hard to approximate within certain ratio $n^{1-1/\epsilon}$~\cite{hastad1996clique}, where finding the maximum clique in a graph is equivalent to finding the maximum independent set in its complement graph.
On the other hand, for the \textsc{MIS} problem in Erdös Rényi random graphs $G(n,1/2)$, where each edge appears independently with probability $1/2$. A better approximation ratio around $1/2$ can be achieved; however, no classical algorithms are known to outperform this regime \cite{mckenzie2020new}. Moreover, similar hardness results still hold even when the graph contains a planted clique of size $k = o(\sqrt{n})$~\cite{alon1998finding,marino2024short,gheissari2025finding}.
 Many classical algorithms are known to fail due to the overlap gap property (OGP) in the structure of independent sets \cite{gamarnik2021overlap,wein2022optimal,dhawan2024low,marino2024short}.

On the quantum side, two main approaches have been proposed for solving the \textsc{MIS} problem: adiabatic quantum computation~\cite{farhi2000QCompAdiabatic} and quantum approximate optimization algorithm (QAOA)~\cite{farhi2014QAOA}. The \textsc{MIS} problem is a canonical example of a classically hard combinatorial optimization problem. Both adiabatic algorithms and QAOA have been applied to \textsc{MIS}, encoding the solution into the ground state of Hamiltonians. However, analyzing the time complexity of these algorithms is often challenging \cite{blekos2024review,braida2025unstructured}, and recent results have shown that low-depth QAOA performs exhibits limitations in solving certain optimization problems \cite{basso2022performance,chicano2025quantum,muller2025limitations}. 

In addition to these methods, other quantum optimization algorithms based on quantum convolution or the quantum Fourier transform have been proposed~\cite{jordan2024optimization}. However, it is unclear how these algorithms could be adapted to the \textsc{MIS} problem. Their apparent power seems to stem from the use of the quantum Fourier transform and classical decoding capabilities, making them unlikely to yield significant quantum speedups for well-known classically hard problems such as GI.

In contrast, our new QLS-based algorithm for solving \textsc{MIS} provides a concrete time complexity analysis that depends on the structure of independent sets in the input graph (\cref{sec:MISpolysys}).
Our algorithm operates under specific structural conditions for which polynomial time performance can be proven. In addition, the concrete analysis can be extended directly to quadratic unconstrained Boolean optimization (QUBO) with a unique solution. This opens a new avenue for achieving large speedups by solving combinatorial optimization problems using our new instance-aware QLS algorithm. We leave a detailed investigation of this direction to future work.

\end{enumerate}

\subsection{Future Outlook} Our instance-aware QLS algorithm, combined with the new rescaling scheme, opens up new possibilities for quantum speedups across a variety of areas using QLS algorithms. 

\begin{enumerate}
    \item Graph isomorphism (GI) problem: 
Among the many reformulations of GI problem that preserve GI-completeness, we focus on those expressed as multivariate polynomial systems.  
One convenient viewpoint starts with the observation that two graphs on $n$ vertices are isomorphic iff the corresponding $M$-graph\footnote{An $M$-graph encodes all size-$n$ vertex correspondences between the two input graphs; see~\cite{kozen1978clique}.} contains a clique of size $n$.  Finding such a clique can be recast as finding a maximum independent set (MIS) of size $n$ in the complement of the $M$-graph. An alternative route encodes GI as a satisfiability (SAT) instance~\cite{toran2013resolution}, which can be written as a system of polynomial equations; our new quantum algorithm for solving multivariate polynomial systems also applies to this formulation.

Despite decades of study, GI remains one of the few natural problems neither known to be NP-complete nor in $\mathrm{P}$.  The best classical algorithm runs in quasipolynomial time~\cite{babai2016graph}, combining ideas from both combinatorial and group theoretic approaches~\cite{grohe2020graph}.  Quantum algorithm so far has largely attacked GI via non-abelian hidden-subgroup reductions, but strong barriers have been identified for that line of attack~\cite{hallgren2010limitations,moore2007impossibility,childs2007quantum}.  To the best of our knowledge, no prior quantum algorithm exploits both the combinatorial and group structure of GI instances.

Our instance-aware QLS algorithm, augmented by the rescaling scheme, bridges this gap: its running time depends on both the local and global structures of given graphs, pointing a potential new way of attacking GI using quantum computers.

\item Optimization problem: While many quantum algorithmic tools have been developed for optimization tasks—such as Grover's search~\cite{grover1996QSearch}, adiabatic quantum computation~\cite{farhi2000QCompAdiabatic}, quantum walks~\cite{szegedy2004QMarkovChainSearch,childs2003ExpSpeedupQW,belovs2013ElectricWalks}, quantum approximate optimization algorithm (QAOA)~\cite{farhi2014QAOA} and more recently a quantum convolution based algorithm \cite{jordan2024optimization}, a convincing large quantum advantage for practical optimization problems is still lacking, even though the potential exists~\cite{abbas2024challenges}.

In particular,~\cite[Section II.D]{abbas2024challenges} emphasize that for random instances such as $k$-SAT and the MIS problem, both of which exhibit the \emph{overlap gap property}, it has been shown that existing classical algorithms cannot efficiently solve these problems~\cite{gamarnik2021overlap}, including low-depth QAOA~\cite{basso2022performance}. An efficient quantum algorithm for these problems would likely yield a superpolynomial speedup. Since these problems can also be reformulated as systems of polynomial equations, our initial investigation of the planted MIS problem provides promising evidence in this direction. In particular, the time complexity of our new quantum algorithm for solving multivariate polynomial systems depends on the structure of the solution and is explicitly input-dependent. On the other hand, the best classical algorithms for certain problems is input-insensitive and cannot overcome the overlap gap barrier~\cite{gamarnik2021overlap}.

\item Post quantum cryptography: The underlying problems in both multivariate-based and lattice-based cryptography can be reformulated as systems of polynomial equations~\cite{ding2005rainbow,chen2025quantum}. These cryptosystems are believed to be hard against both classical and quantum attacks; however, relatively little quantum cryptanalysis has been conducted. In particular, similar to the factoring problem, the underlying problem in lattice-based cryptography lies in NP $\cap$ coNP \cite{aharonov2005lattice}, making it unlikely to be NP-complete and suggesting a potential quantum advantage might exists. However, only a few attempts have been made to design efficient quantum algorithms for investigating their security under quantum attacks \cite{eldar2022efficient,chen2022quantum,chen2022quantumlwe,chen2024quantum}. Our new quantum algorithm for solving polynomial systems offers a novel approach to reexamining the security of these post-quantum cryptosystems.

 \item Nonlinear differential equations:
The quantum Carleman linearization technique for nonlinear differential equations closely parallels the Macaulay‐matrix linearization used for multivariate polynomial systems~\cite{wu2025quantum,costa2025further,liu2021efficient,krovi2023improved}.  In both settings the central task is solving an exponentially large linear system through a QLS algorithm subroutine. Moreover, \cite{costa2025further} proposed a rescaling scheme to control the solution norm and the rescaling inflates the condition number to be exponentially large in certain cases, and they suggest that an alternative QLS algorithm might exists.

Our instance‐aware QLS algorithm, already validated on the planted MIS formulation of polynomial solving, bypasses the usual condition number barrier and therefore slots naturally into the Carleman framework. In particular, it opens the possibility of achieving quantum speedups for solving certain classes of nonlinear differential equations, such as the Navier–Stokes equations in fluid dynamics.
\end{enumerate}

The challenges of applying our new QLS algorithm to these potential applications lie in designing an appropriate rescaling matrix $D$ and analyzing the parameter $\ET$, which depends on solving the related linear system $(AD)^{\top} \vec{p} = \xvec = (AD)^+ \vec{b}$ and on the squared row norms of the augmented matrix $H=[AD , -\bvec]$. Our concrete analysis of the planted MIS problem through the solution of special polynomial systems serves as a first step toward further exploration in this direction.

Another direction is to adapt our \emph{instance-aware} QLS techniques to prepare ground states of \emph{structured} Hamiltonians.  
Let $H$ be a Hamiltonian, and let $\ket{\psi_0}$ be a \emph{guiding state}.  
Assume that the ground state energy $E_0$ is known and that $\ket{\psi_0}$ has inverse-polynomial overlap with a ground state.  
Existing algorithms prepare a ground state in time that is linearly dependent on the inverse of the spectral gap~\cite{ge2019faster,lin2020near}, where the spectral gap is the difference between the two lowest eigenvalues of $H$.  

The runtime can also depend on how $\ket{\psi_0}$ aligns with the eigenspaces of $H$. If $\ket{\psi_0}$ lies in a subspace whose effective Hamiltonian has a larger spectral gap, the complexity decreases accordingly. Pinpointing such initial state dependent parameters could potentially yield efficient ground state preparation algorithms for certain structured Hamiltonians, even when the spectral gap is exponentially small.
\subsection{Organization}

The remainder of this paper is organized as follows. In \cref{sec:linearalgebra}, 
we introduce the necessary background on linear algebra and present the vector decomposition theorem \cref{thm:vectordecomposition}, which serves as a foundational tool in our algorithm. In \cref{sec:QLSalgorithm}, we present our new QLS algorithm, including graph construction, the algorithm, and its complexity analysis.  In \cref{sec:example}, we illustrate the performance of the new QLS algorithm on the welded tree graph, highlighting its advantages over existing QLS algorithms. 
In \cref{sec:newalgpolysys}, we integrate a novel rescaling scheme with our improved QLS algorithm to solve multivariate polynomial equations using a family of weighted Boolean Macaulay linear systems. In particular, we show that our quantum algorithm yields a polynomial-time solution to the planted MIS problem under certain restrictions on the number of independent sets in the graph, while, to the best of our knowledge, no efficient classical algorithm is known.

\section{Linear Algebra }\label{sec:linearalgebra}
In this section, we introduce the linear system problem $A\xvec=\bvec$ and its quantum counterpart, highlighting the fundamental subspaces associated with this system. A central result is the vector decomposition theorem (\cref{thm:vectordecomposition}), which plays a crucial role in developing the new QLS algorithm.

\begin{problem}[LSP] Given a matrix $A \in \R^{M\times N}$ and vector $\bvec\in \R^{M}$, output a vector $\xvec \in \R^{N}$ such that  \[
A\xvec=\bvec.
\]     
\end{problem}

\begin{problem}[QLSP] \label{prob:QLSP}
Given a matrix $A \in \R^{M\times N}$ and a vector $\bvec\in \R^{M}$, output a quantum state $\ket{\xvec'} \in \R^{N}$ such that \[
\|\ket{\xvec'}\bra{\xvec'}-\ket{\xvec}\bra{\xvec}\|_1 \leq O(\epsilon) \text{ and } \quad A\xvec=\bvec
\] where $\ket{\xvec}=\frac{1}{\|\xvec\|} \sum_{i=1}^{N}\y_i\ket{i}$.  
\end{problem}

Let $H=[A, -\bvec]= \begin{bmatrix}
    \vec{r}_1 \\ \vec{r}_2 \\ \vdots \\ \vec{r}_M
    \end{bmatrix}$ be the augmented matrix of the linear system $A\xvec=\bvec$.
\begin{definition}[Row space of $H$]\label{def:spaces}
The row space of an 
 $M\times (N+1)$ matrix 
$H=[A | -\bvec]$ is the subspace of $\R^{N+1}$
spanned by its rows: \[
    \text{Row}(H) =\text{Span}
    \{ \rvec_1, \rvec_2,\ldots,\rvec_M\},
    \] where $\rvec_i$ is the $i$-th row of matrix $H$.
\end{definition}
\begin{definition}[Null space of $H$]\label{def:null}
The null space of $H$ is defined as follows: \[
\text{Null}(H) = \{\vec{u} \in \mathbb{R}^{N+1} \mid H \vec{u} = 0 \}.
\]
\end{definition}

By \cref{def:null}, for each $1 \leq i\leq M $, we have $\langle \rvec_i,\vec{u}\rangle=0$. 
Therefore, for every $\vec{v} =\sum_{i=1}^{M}c_i\rvec_i \in \text{Row}(H), \xvec\in \text{Null}(H)$, we have $\langle \vec{v},\vec{u}\rangle=\sum_{k=1}^{N+1} v_k\cdot u_k=0$.

\begin{lemma}[Fundamental Theorem of Linear Algebra Part 1 and 2 \cite{strang2022introduction}]\label{lem:ranknull}
    The row space and the null space are orthogonal complements in $\R^{N+1}$, that is, 
\[  \text{Row}(H) \oplus\text{Null}(H)=\R^{N+1}.
     \]
\end{lemma}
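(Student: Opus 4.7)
The plan is to prove the two halves separately—first orthogonality, then the dimension count—and then combine them to get the direct sum decomposition of $\mathbb{R}^{N+1}$.

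For orthogonality, the excerpt already does most of the work in the paragraph immediately preceding the lemma: if $\vec{u}\in\mathrm{Null}(H)$ then $H\vec{u}=0$, which is exactly the statement that $\langle \vec{r}_i,\vec{u}\rangle=0$ for every row $\vec{r}_i$ of $H$. Any $\vec{v}\in\mathrm{Row}(H)$ is a linear combination $\vec{v}=\sum_i c_i\vec{r}_i$, so by linearity $\langle \vec{v},\vec{u}\rangle=\sum_i c_i\langle \vec{r}_i,\vec{u}\rangle=0$. Hence $\mathrm{Row}(H)\perp\mathrm{Null}(H)$, which in particular gives $\mathrm{Row}(H)\cap\mathrm{Null}(H)=\{\vec{0}\}$, so the sum is direct.

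For the dimension count, I would invoke (or briefly rederive) the rank–nullity theorem for the linear map $\vec{u}\mapsto H\vec{u}$ from $\mathbb{R}^{N+1}$ to $\mathbb{R}^M$. Setting $r=\mathrm{rank}(H)$, one has $\dim(\mathrm{Null}(H))=(N+1)-r$. It is a standard fact that the row rank of $H$ equals its column rank (both equal $r$), so $\dim(\mathrm{Row}(H))=r$. Adding,
\[
\dim(\mathrm{Row}(H))+\dim(\mathrm{Null}(H))=r+(N+1-r)=N+1.
\]

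Combining the two ingredients: the subspaces $\mathrm{Row}(H)$ and $\mathrm{Null}(H)$ sit inside $\mathbb{R}^{N+1}$ with trivial intersection (by orthogonality) and with dimensions summing to $N+1$. Therefore $\mathrm{Row}(H)\oplus\mathrm{Null}(H)=\mathbb{R}^{N+1}$, completing the proof. There is no real obstacle here—this is a textbook fact—so the only drafting decision is whether to cite rank–nullity as a black box or to sketch why row rank equals column rank; I would cite both to keep the exposition crisp, matching the tone of the surrounding preliminaries.
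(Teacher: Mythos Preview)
Your proof is correct and entirely standard. Note, however, that the paper does not actually prove this lemma: it is stated as a citation to Strang's textbook, with no argument given. The only related content in the paper is the sentence just before the lemma observing that every row $\vec{r}_i$ is orthogonal to every $\vec{u}\in\mathrm{Null}(H)$, which you already incorporated. So there is nothing to compare against; your write-up simply supplies what the paper leaves as a black-box citation.
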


With the orthogonal complement property of the row space and column space. We have the following lemma:

\begin{lemma} \label{lem:uniquedecomposition}
For any vector $\vec{\psi}_0 \in \mathbb{R}^{N+1}$, there exist unique vectors $\vec{u}, \vec{v} \in \mathbb{R}^{N+1}$ such that:
\[
\vec{\psi}_0 = \vec{u} + \vec{v},
\]
where:
\begin{itemize}
    \item $\vec{u} \in \text{Null}(H)$,
    \item $\vec{v} \in \text{Row}(H)$, and
    \item $\langle \vec{u}, \vec{v} \rangle = 0$ (that is, $\vec{u}$ and $\vec{v}$ are orthogonal).
\end{itemize}
\end{lemma}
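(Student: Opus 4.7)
The plan is to derive this decomposition directly from \cref{lem:ranknull}, which already gives $\mathrm{Row}(H) \oplus \mathrm{Null}(H) = \mathbb{R}^{N+1}$ with the two subspaces being orthogonal complements. The only work is to unpack the existence and uniqueness statements, both of which are standard consequences of the orthogonal direct sum decomposition of a finite-dimensional inner product space.

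For \textbf{existence}, I would define $\vec{v}$ to be the orthogonal projection of $\vec{\psi}_0$ onto $\mathrm{Row}(H)$, which exists because $\mathrm{Row}(H)$ is a finite-dimensional (hence closed) subspace of $\mathbb{R}^{N+1}$. Concretely, if $\{\vec{r}_{i_1},\dots,\vec{r}_{i_k}\}$ is a basis of $\mathrm{Row}(H)$, one can take $\vec{v}$ to be the unique vector in $\mathrm{Row}(H)$ minimizing $\|\vec{\psi}_0 - \vec{v}\|$; equivalently, $\vec{v} = R(R^\top R)^{-1} R^\top \vec{\psi}_0$, where $R$ has these basis rows as columns. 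Then I would set $\vec{u} := \vec{\psi}_0 - \vec{v}$ and verify that $\vec{u}$ is orthogonal to every row $\vec{r}_i$ of $H$, so $H\vec{u} = 0$, meaning $\vec{u} \in \mathrm{Null}(H)$. The orthogonality condition $\langle \vec{u},\vec{v}\rangle = 0$ is then immediate from \cref{lem:ranknull}, since $\vec{u} \in \mathrm{Null}(H)$ and $\vec{v} \in \mathrm{Row}(H)$ are in orthogonal complements.

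For \textbf{uniqueness}, suppose $\vec{\psi}_0 = \vec{u} + \vec{v} = \vec{u}' + \vec{v}'$ are two such decompositions. Then $\vec{u} - \vec{u}' = \vec{v}' - \vec{v}$, where the left-hand side lies in $\mathrm{Null}(H)$ and the right-hand side lies in $\mathrm{Row}(H)$. By \cref{lem:ranknull}, $\mathrm{Null}(H) \cap \mathrm{Row}(H) = \{\vec{0}\}$ (since any common vector would be orthogonal to itself), so both sides must be $\vec{0}$, giving $\vec{u} = \vec{u}'$ and $\vec{v} = \vec{v}'$.

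I do not expect any real obstacle here, as this is a textbook consequence of \cref{lem:ranknull}. The only small subtlety is that the orthogonality condition $\langle \vec{u},\vec{v}\rangle = 0$ is stated as a separate bullet but is in fact automatic given the subspace memberships, once \cref{lem:ranknull} is invoked; I would make this explicit in the write-up so the reader sees that the three bullet conditions are not independent but rather follow from the first two plus the preceding lemma.
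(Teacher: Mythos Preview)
Your proposal is correct and follows essentially the same approach as the paper: both derive existence from the orthogonal direct sum of \cref{lem:ranknull} and prove uniqueness by noting that the difference of two decompositions lies in $\mathrm{Row}(H)\cap\mathrm{Null}(H)=\{\vec{0}\}$. Your existence step is slightly more explicit (constructing $\vec{v}$ as the orthogonal projection rather than simply invoking the direct sum), but this is a presentational detail rather than a different argument.
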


\begin{proof}

Since the row space and null space of $H$ form orthogonal complements in $\mathbb{R}^{N+1}$, we have the direct sum decomposition:
\[
\mathbb{R}^{N+1} = \text{Row}(H) \oplus \text{Null}(H).
\]
Thus, there exist vectors $\vec{u} \in \text{Null}(H)$ and $\vec{v} \in \text{Row}(H)$ ($
\langle \vec{u}, \vec{v} \rangle = 0
$) such that:\[
\psi_0 = \vec{u} + \vec{v}.
\]  
To show the uniqueness, suppose that there exist two different decompositions:
$
\psi_0 = \vec{u}_1 + \vec{v}_1 = \vec{u}_2 + \vec{v}_2,
$
where $\vec{u}_1, \vec{u}_2 \in \text{Null}(H)$, $\vec{v}_1, \vec{v}_2 \in \text{Row}(H)$. 
Note that $\vec{u}_1 - \vec{u}_2 \in \text{Null}(H)$ and $\vec{v}_2 - \vec{v}_1 \in \text{Row}(H)$, and the only vector in both the row space and the null space is the zero vector, it follows that:
$
\vec{u}_1 - \vec{u}_2 = 0 \quad \text{and} \quad \vec{v}_2 - \vec{v}_1 = 0.
$
Thus, $\vec{u}_1 = \vec{u}_2$ and $\vec{v}_1 = \vec{v}_2$.
\end{proof}

\begin{theorem}[Vector Decomposition Theorem]\label{thm:vectordecomposition} Let $H=[A, -\bvec]$ be the augmented matrix of the linear system $A\xvec=\bvec$. Let $\xvec=[\y_1,\y_2,\ldots,\y_N]^{\top}$ be a solution of the linear system. 

Then, for the particular vector $\vec{\psi}_0=[0,0,\ldots,0,1]^{\top}\in \R^{N+1}$, we have
\begin{equation}
    \vec{\psi}_0=\frac{1}{1+\|\xvec\|_2^2} \left(\vec{\theta}-\vec{\theta}^{\perp}\right)=\frac{1}{1+\|\xvec\|_2^2} \left(\begin{bmatrix}
        \xvec\\1
    \end{bmatrix}-\begin{bmatrix}
        \xvec\\-\|\xvec\|^2
    \end{bmatrix}\right),
\end{equation} where $\vec{\theta}\in \text{Null}(H)$ and $\vec{\theta}^{\perp} \in \text{Row}(H)$.

In particular, $\xvec$ has the minimum $\ell_2$ norm among all the solutions of the linear system $A\xvec=\bvec$. In other words, $\xvec=A^+\bvec$. In addition, there exists a vector $\pvec=[p_1,p_2,\ldots,p_M]^{\top}$ such that $A^{\top}\pvec=\xvec=A^+\bvec$.
\end{theorem}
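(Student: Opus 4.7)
\medskip

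\noindent\textbf{Proof plan.} The plan is to verify the displayed identity by direct computation, then show that each summand sits in the claimed subspace, and finally use the uniqueness from \cref{lem:uniquedecomposition} to extract the ``minimum $\ell_2$ norm'' conclusion together with the existence of $\vec{p}$.

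\smallskip

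\noindent\emph{Step 1 (algebraic identity).} First I would just compute
\[
\frac{1}{1+\|\vec{x}\|_2^{2}}\left(\begin{bmatrix}\vec{x}\\[2pt] 1\end{bmatrix}-\begin{bmatrix}\vec{x}\\[2pt] -\|\vec{x}\|^{2}\end{bmatrix}\right)=\frac{1}{1+\|\vec{x}\|_2^{2}}\begin{bmatrix}\vec{0}\\[2pt] 1+\|\vec{x}\|^{2}\end{bmatrix}=\vec{\psi}_{0},
\]
which holds for \emph{any} solution $\vec{x}$. Hence the only real work is to place the two pieces in the correct subspaces.

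\smallskip

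\noindent\emph{Step 2 ($\vec{\theta}\in\mathrm{Null}(H)$).} Using $H=[A,-\vec{b}]$ and $A\vec{x}=\vec{b}$,
\[
H\vec{\theta}=[A,-\vec{b}]\begin{bmatrix}\vec{x}\\[2pt] 1\end{bmatrix}=A\vec{x}-\vec{b}=\vec{0},
\]
so $\vec{\theta}\in\mathrm{Null}(H)$ unconditionally.

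\smallskip

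\noindent\emph{Step 3 ($\vec{\theta}^{\perp}\in\mathrm{Row}(H)$, requiring $\vec{x}=A^{+}\vec{b}$).} Take $\vec{x}$ to be the minimum-norm solution $A^{+}\vec{b}$; this is the unique solution lying in $\mathrm{Row}(A)=\mathrm{Col}(A^{\top})$, so there exists $\vec{p}\in\mathbb{R}^{M}$ with $A^{\top}\vec{p}=\vec{x}$. I would then compute
\[
H^{\top}\vec{p}=\begin{bmatrix}A^{\top}\vec{p}\\[2pt] -\vec{b}^{\top}\vec{p}\end{bmatrix}=\begin{bmatrix}\vec{x}\\[2pt] -\vec{b}^{\top}\vec{p}\end{bmatrix},
\]
and use $\vec{b}=A\vec{x}$ to rewrite $\vec{b}^{\top}\vec{p}=\vec{x}^{\top}A^{\top}\vec{p}=\vec{x}^{\top}\vec{x}=\|\vec{x}\|^{2}$. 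This gives $H^{\top}\vec{p}=\vec{\theta}^{\perp}$, so $\vec{\theta}^{\perp}\in\mathrm{Row}(H)$. A one-line check also yields $\langle\vec{\theta},\vec{\theta}^{\perp}\rangle=\|\vec{x}\|^{2}-\|\vec{x}\|^{2}=0$, consistent with \cref{lem:ranknull}.

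\smallskip

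\noindent\emph{Step 4 (uniqueness and the ``in particular'' clause).} Combining Steps 1--3 produces a decomposition of $\vec{\psi}_{0}$ as a sum of an element of $\mathrm{Null}(H)$ and an element of $\mathrm{Row}(H)$; by \cref{lem:uniquedecomposition} this is \emph{the} decomposition. Consequently, the vector $\vec{x}$ that appears in the formula must be the one for which $\vec{\theta}^{\perp}$ lies in $\mathrm{Row}(H)$, and Step~3 showed this forces $\vec{x}\in\mathrm{Row}(A)$, i.e.\ $\vec{x}=A^{+}\vec{b}$. The vector $\vec{p}$ constructed in Step~3 then witnesses $A^{\top}\vec{p}=\vec{x}=A^{+}\vec{b}$, proving the final assertion.

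\smallskip

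\noindent\emph{Where the difficulty (such as it is) sits.} There is no deep technical obstacle; the content of the theorem is really a careful bookkeeping exercise about the four fundamental subspaces of $H$ versus those of $A$. The only subtle point is the logical direction of the ``in particular'' clause: one must be careful that while Steps 1--2 work for any solution $\vec{x}$, the membership $\vec{\theta}^{\perp}\in\mathrm{Row}(H)$ in Step~3 does \emph{not}; it is precisely this requirement, combined with the uniqueness in \cref{lem:uniquedecomposition}, that pins $\vec{x}$ down to the minimum-norm solution and simultaneously produces the vector $\vec{p}$ needed later in the construction of the row and column star states.
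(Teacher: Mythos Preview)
Your proposal is correct and follows essentially the same route as the paper: verify the algebraic identity, check $\vec{\theta}\in\mathrm{Null}(H)$ directly, produce a vector $\vec{p}$ with $H^{\top}\vec{p}=\vec{\theta}^{\perp}$, and invoke \cref{lem:uniquedecomposition} for uniqueness. The only cosmetic difference is that the paper writes down the explicit choice $\vec{p}=(AA^{\top})^{+}\vec{b}$ (which it later reuses, e.g.\ in \cref{remark:pby} and the definition of $ET$), whereas you argue the existence of $\vec{p}$ from $\vec{x}\in\mathrm{Row}(A)$ and compute $\vec{b}^{\top}\vec{p}$ via $\vec{b}=A\vec{x}$ instead of via the pseudoinverse; both are equally valid.
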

\begin{proof}
We first show that the decomposition of $\vec{\psi}_0$ is true. Consider the augmented matrix:
\[
    H = [A , -\bvec] \in \mathbb{R}^{M \times (N+1)}.
\]
Define the vector:
\begin{equation}
    \vec{\theta} = \begin{bmatrix} \xvec \\ 1 \end{bmatrix}.
\end{equation}
We verify that $\vec{\theta} \in \text{Null}(H)$ by computing:
\[
    H\vec{\theta} = \begin{bmatrix} A , -\bvec \end{bmatrix} \begin{bmatrix} \xvec \\ 1 \end{bmatrix} = A\xvec - \bvec = 0.
\]
Thus, $\vec{\theta}$ belongs to the null space of $H$.

Next, we construct the row-space component:
\[
    \vec{\theta}^{\perp} = \begin{bmatrix} \xvec \\ -\|\xvec\|_2^2 \end{bmatrix}.
\]
To show that $\vec{\theta}^{\perp} \in \text{Row}(H)$, we need to verify that it can be written as $H^{\top} \pvec$ for some vector $\pvec$.
Let
\begin{equation}\label{eq:potential}
    \pvec= (AA^{\top})^{+} \bvec,
\end{equation}
we obtain:
\[
    H^{\top} \pvec = \begin{bmatrix} A^{\top} \\ -\bvec^{\top} \end{bmatrix} (AA^{\top})^{+} \bvec.
\]
Computing separately:
\begin{align*}
    A^{\top} \pvec &= A^{\top} (AA^{\top})^{+} \bvec = \xvec, \\
    -\bvec^{\top} \pvec &= -\bvec^{\top} (AA^{\top})^{+} \bvec = -\|\xvec\|_2^2.
\end{align*}
Thus,
\[
    H^{\top} \pvec = \begin{bmatrix} \xvec \\ -\|\xvec\|_2^2 \end{bmatrix} = \vec{\theta}^{\perp}.
\]

Now, we compute:
\[
    \vec{\theta} - \vec{\theta}^{\perp} = \begin{bmatrix} \xvec \\ 1 \end{bmatrix} - \begin{bmatrix} \xvec \\ -\|\xvec\|_2^2 \end{bmatrix} = \begin{bmatrix} 0 \\ 1 + \|\xvec\|_2^2 \end{bmatrix}.
\]
Multiplying by $\frac{1}{1 + \|\xvec\|_2^2}$, we obtain:
\[
    \frac{1}{1+\|\xvec\|_2^2} \left( \vec{\theta} - \vec{\theta}^{\perp} \right) = \begin{bmatrix} 0 \\ 1 \end{bmatrix} = \vec{\psi}_0.
\]
This completes the decomposition and the decomposition is unique by \cref{lem:uniquedecomposition}.

Finally, we prove that $\xvec=A^+\bvec$ has the minimum Euclidean norm among all solutions to $A\xvec = \bvec$. The general solution of the linear system $A\xvec=\bvec$ is given by:
\[
    \xvec' = \xvec + \vec{v}, \quad \vec{v} \in \text{Null}(A).
\]
Since  $\pvec= (AA^{\top})^{+} \bvec$, we have  $A^{\top}\pvec=A^+\bvec=\xvec$. Therefore, $\xvec$ must lie in the row space of $A$ (equivalently, the column space of $A^{\top}$), which implies that it is the minimum $\ell_2$-norm solution to the linear system $A\xvec = \bvec$, as it is orthogonal to $\text{Null}(A)$.

\end{proof}

\begin{remark} \label{remark:pby}  By the definition of $\pvec= (AA^{\top})^{+}\bvec$ (\cref{eq:potential}), we have \begin{equation}\label{eq:psolutionnorm}
\pvec^{\top} \bvec = \bvec^{\top}(A^{\top})^{+}A^{+}\bvec= \|\xvec\|_2^2.  
\end{equation}
Using the Cauchy–Schwarz inequality, \[\|\pvec\|\|\bvec\|\geq \|\xvec \|_2^2.\]
\end{remark}

\section{The New QLS Algorithm}\label{sec:QLSalgorithm}
In this section, we introduce our new QLS algorithm. First, we interpret the augmented matrix as the biadjacency matrix of a weighted bipartite graph in \cref{sec:graph}. Then, using this graph-based perspective, we define two subspaces, $ \mathcal{A} $ and $ \mathcal{B} $, of a Hilbert space $ \mathcal{H} $ in \cref{sec:algorithmsetup}. These subspaces allow us to construct the unitary operator  $U_{\mathcal{AB}} = ( 2\Pi_{\mathcal{A}}-I)( 2\Pi_{\mathcal{B}}-I)$,
which plays a central role in our approach. Additionally, we define three quantum states, $ \ket{\psi_0^*} $, $ \ket{\f^*} $, and $ \ket{\p} $, satisfying the following conditions: Initial state decomposition (\cref{lemm:statedecompositonwalk}), fixed-point condition (\cref{lem:fixedpoint}), and projection property (\cref{lem:projectionB}).  Together, these conditions establish a framework in which the solution of the QLS problem can be obtained by applying quantum phase estimation of $U_{\cal AB}$ to the initial state $\ket{\psi_0^*}$, using the result from \cref{thm:statepreparation}.  Finally, in \cref{sec:aglorithm}, we present our new QLS algorithm and prove its complexity.

\subsection{Graph Construction}\label{sec:graph}

In this section, we construct a graph $G$ from the augmented matrix $H=[A, -\bvec]$, providing useful intuition for designing the unitary $U_{\cal AB}$ and the initial state $\ket{\psi_0^*}$. 

Given an $M \times (N+1)$ sparse matrix $H = [A, -\bvec]$, we construct a graph $G = (V, E, w)$ as follows:

\begin{enumerate}
    \item \textbf{Vertices associated with rows:} Let $\{u_1, u_2, u_3, \ldots, u_M\}$ be the set of vertices corresponding to the rows of matrix $A$, where each row $i$ is associated with the vertex $u_i$.

    \item \textbf{Vertices associated with columns:} Let $\{v_1, v_2, v_3, \ldots, v_N\}$ be the set of vertices corresponding to the columns of matrix $A$, where each column $j$ is associated with the vertex $v_j$.

    \item \textbf{Vertex for the last column vector:} Let $b$ be a vertex representing the last column vector $-\bvec$.

    \item \textbf{Bipartite graph construction:} Construct a bipartite graph between the vertex sets $\{u_1, u_2, \ldots, u_M\}$ and $\{v_1, v_2, \ldots, v_N\}$:
    \begin{itemize}
        \item Add an edge $e_{i,j} = \{u_i, v_j\}$ if $A_{i,j} \neq 0$.
        \item Assign the edge $e_{i,j}$ a weight:
        \[
        w_{i,j} =  |A_{i,j}|^2 |\N(v_j)|,
        \]
        where $\N(v_j)$ denotes the neighborhood of vertex $v_j$.
    \end{itemize}

    \item \textbf{Edges to the vertex $b$:} For all $i$ such that $b_i \neq 0$, add an edge $e_{i,b} = \{u_i, b\}$:
    \begin{itemize}
        \item Assign the edge $e_{i,b}$ a weight:
        \[
        w_{i,b} = |b_i|^2 |\N(b)|,
        \]
        where $\N(b)$ denotes the neighborhood of vertex $b$.
    \end{itemize}

    \item \textbf{Resulting graph:} The graph $G = (V, E, w)$ is defined as:
    \begin{itemize}
        \item $V = \{u_1, u_2, \ldots, u_M, v_1, v_2, \ldots, v_N, b\}$,
        \item $E = \{e_{i,j} \mid A_{i,j} \neq 0\} \cup \{e_{i,b} \mid b_i \neq 0\}$,
        \item Edge weights:
        \[
        w_{i,j} =  |A_{i,j}|^2 |\N(v_j)|, \quad 
        w_{i,b} = |b_i|^2 |\N(b)|.
        \]
    \end{itemize}
\end{enumerate}

\begin{example}
    Let  $H= [A \mid -\bvec]=\begin{bmatrix}
        1 &1 &0 &1\\
        0 &1 &0 &0\\
        0 &0 &1 &1
    \end{bmatrix}$, the construction of the graph $G$ can be displayed as follows: 
\begin{figure}[ht!]
    \centering
    \begin{tikzpicture}[scale=1.2, every node/.style={draw, circle, minimum size=1cm}]
        \foreach \i in {1, 2, 3} {
            \node[fill=blue!20] (u\i) at (0, -\i) {$u_{\i}$};
        }

        \foreach \j in {1, 2, 3} {
            \node[fill=green!20] (v\j) at (3, -\j) {$v_{\j}$};
        }

        \node[fill=red!20] (b) at (6, -2) {$b$};

        \foreach \i/\j in {1/1, 1/2, 2/2, 3/3} {
            \draw[thick, -] (u\i) -- (v\j);
        }

        \foreach \i in {1, 3} {
            \draw[thick, dashed, -] (u\i) -- (b);
        }
    \end{tikzpicture}
    \caption{Illustration of the graph $G$ constructed from $H$. Blue nodes correspond to rows ($u_i$), green nodes to columns ($v_j$), the red node represents $b$, and $w_{1,1}=1, w_{1,2}=4,w_{2,2}=4,w_{3,3}=1, w_{1,b}=4,w_{3,b}=4$. }
    \label{fig:graph-construction}
\end{figure}
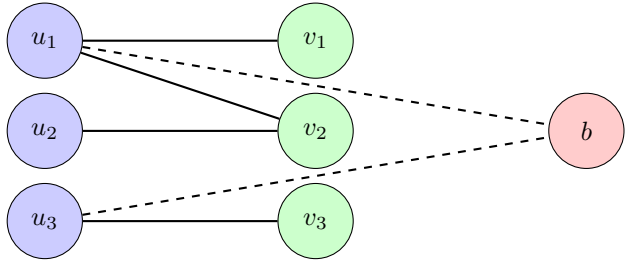

\end{example}
 \subsection{Algorithm Set-Up} \label{sec:algorithmsetup}

 For the constructed network $G = (V,E,w)$, let 
\[
\mathcal{H} = \mathrm{span}\{\ket{e_{i,j}} \mid \{u_i,v_j\} \in E\} \oplus \mathrm{span}\{\ket{e_{i,b}} \mid \{u_i,b\} \in E\}
\]
be the associated vector space of its edges. Each edge $\{u_i,v_j\} \in E$ is associated with a unique basis state $\ket{e_{i,j}}$, and $\{u_i,b\}$ is uniquely associated with $\ket{e_{i,b}}$. Edges are oriented as in \cite{jeffery2023multidimensional,li2025multidimensional};  the direction depends on the sign of the corresponding matrix element.  We then use it to define the star states. For each vertex $u_i \in V$, let 
\[
d_{u_i} = \sum_{v_j \in \mathcal{N}(u_i)} w_{i,j} + w_{i,b}
\]
be the weighted degree of $u_i$. 

\begin{definition}[Row star states]\label{def:rowstates}
The (normalized) \textit{star state} of $u_i$ is defined as:
\begin{equation}\label{eq:ithrowH}
    \ket{\Psi_i} = \frac{1}{\sqrt{d_{u_i}}} \left( \sum_{v_j \in \mathcal{N}(u_i)} (-1)^{\Delta_{u_i,v_j}} \sqrt{w_{i,j}} \ket{e_{i,j}} 
    + (-1)^{\Delta_{u_i,b}} \sqrt{w_{i,b}} \ket{e_{i,b}} \right).
\end{equation}
Here for $\{u_i,v_j\} \in E$, let $\Delta_{u_i,v_j} = 0$ if $A_{i,j}>0$ and $\Delta_{u_i,v_j} = 1$ if $A_{i,j}<0$.  For $\{u_i,b\}\in E$, we have $\Delta_{u_i,b} = 0$ if $b_i>0$ and $\Delta_{u_i,b} = 1$ if $b_i<0$.
    
\end{definition}

\begin{definition}[Column star states]\label{def:colstates}
    For each vertex $v_j \in V$, define the uniform quantum state:
\[
\ket{\Phi_j} = \frac{1}{\sqrt{|\mathcal{N}(v_j)|}} \sum_{u_i \in \mathcal{N}(v_j)} \ket{e_{i,j}}.
\]

In particular, for the vertex $b$, we have 
\[
\ket{\Phi_b} = \frac{1}{\sqrt{|\mathcal{N}(b)|}} \sum_{u_i \in \mathcal{N}(b)} \ket{e_{i,b}}.
\]
\end{definition}

Let $\mathcal{A}$ be the subspace orthogonal to the span of $\{\ket{\Phi_j} \mid 1 \leq j \leq N\}\cup \{\ket{\Phi_b}\}$, and let $\mathcal{B}$ be the subspace spanned by $\{\ket{\Psi_i} \mid 1 \leq i \leq M\}$. The two projectors, $\Pi_{\mathcal{A}}$ and $\Pi_{\mathcal{B}}$, are defined as follows:
\begin{equation}\label{eq:piA}
\Pi_{\A} = I - \sum_{v_j \in V} \ket{\Phi_j}\bra{\Phi_j} - \ket{\Phi_b}\bra{\Phi_b},
\end{equation}
\begin{equation}\label{eq:piB}
\Pi_{\mathcal{B}} = \sum_{u_i \in V} \ket{\Psi_i}\bra{\Psi_i}.
\end{equation}



The unitary operator $U_{\cal AB}$ is then defined as:
\begin{equation}\label{eq:unitaryAB}
    U_{\cal AB} = (2\Pi_{\mathcal{A}} - I)(2\Pi_{\mathcal{B}} - I).
\end{equation}
Recall that $\pvec^{\top}\bvec =\|\xvec\|_2^2$, where $\pvec$ satisfies $A^{\top}\pvec=\xvec$. We define three quantum states:
\begin{enumerate}
    \item \textbf{Initial state:}
   \begin{equation}\label{eq:initialstate}
    \ket{\psi_0^*} = \ket{\Phi_b}.
    \end{equation}
    \item \textbf{State $\ket{\theta^*}$:}
    \begin{equation}\label{eq:solutionstate}
    \ket{\theta^*} = \frac{1}{\sqrt{1+\pvec^{\top}\bvec}} \left( \sum_{v_j} \frac{\y_j}{\sqrt{|\N(v_j)|}} \sum_{u_i \in \mathcal{N}(v_j)} \ket{e_{i,j}} 
    +  \frac{1}{\sqrt{|\N(b)|}}\sum_{u_i \in \mathcal{N}(b)} \ket{e_{i,b}} \right).
   \end{equation}
    Equivalently:
    \[
    \ket{\theta^*} = \frac{1}{\sqrt{1+\pvec^{\top}\bvec}} \left( \sum_{v_j} \y_j  \ket{\Phi_j} 
    + \ket{\Phi_b} \right).
    \]

    In particular, we have $\|\xvec\|_2^2=1+p^{\top}\bvec$ and $ \ket{\theta^*} $ is closely related to the desired solution of the QLS problem.
    \item \textbf{State $\ket{\p}$:}
    \begin{equation}\label{eq:potential}
          \ket{\p} = \frac{1}{1+\pvec^{\top}\bvec}\sum_{u_i \in V} p_i \sqrt{d_{u_i}} \ket{\Psi_i}.
    \end{equation}
\end{enumerate}

The lemma below shows that, under suitable conditions, applying phase estimation to the pair $(U_{\cal AB},\ket{\Phi_b})$ yields a state close to $\ket{\theta^*}$. These conditions have been previously studied in \cite{piddock2019electricfind,apers2022elfs,li2025multidimensional}.
Here we restate the result \cite[Lemma 2.12]{li2025multidimensional}, which modified Lemma 8 in~\cite{piddock2019electricfind} and Lemma 10 in~\cite{apers2022elfs}. The statements in these lemmas are essentially equivalent, differing only in constants and precision bounds.


\begin{lemma}[Lemma 2.12 in \cite{li2025multidimensional}] \label{lem:phaseprojection}
  Define the unitary $U_{\cal AB} = (2\Pi_{\cal A} - 1)(2\Pi_{\cal B} - 1)$ acting on a Hilbert space ${\cal H}$ for projectors $\Pi_{\cal A},\Pi_{\cal B}$ onto some subspaces ${\cal A}$ and ${\cal B}$ of ${\cal H}$ respectively.

  If the following three conditions are satisfied
 \begin{enumerate}
     \item  \textit{Initial state decomposition}:  $\ket{\psi_0^*} = \sqrt{\gamma}\ket{\f^*} + (I-\Pi_{\cal A})\ket{\p}$,
     \item \textit{Fixed-point condition}: $U_{\cal AB}\ket{\f^*} = \ket{\f^*}$,
     \item \textit{Projection property}: $\Pi_{\cal B}\ket{\p} = \ket{\p}$, where $\ket{\p}$ is a (unnormalized) vector,
 \end{enumerate}
  then performing phase estimation on the state $\ket{\psi_0^*}$ with operator $U_{\cal AB}$ and precision $\delta$ outputs ``$0$'' with probability $p \in [\frac{4}{\pi^2}\gamma,\gamma + 
\frac{17\pi^2\delta\|\ket{\p}\|}{16}]$, 
leaving a state $\ket{\f'}$ satisfying
  $$ \frac{1}{2}\|\ket{\f'}\bra{\f'} - \ket{\f^*}\bra{\f^*}\|_1 \leq \sqrt{\frac{17\pi^2\delta\|\ket{\p}\|}{16\gamma}}. $$
  Consequently, when the precision is $O\left( \frac{\gamma \cdot \epsilon^2}{\|\ket{\p}\| }\right)$, the resulting state $\ket{\f'}$ satisfies
  $$ \frac{1}{2}\|\ket{\f'}\bra{\f'} - \ket{\f^*}\bra{\f^*}\|_1 \leq \epsilon. $$
\end{lemma}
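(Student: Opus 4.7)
The plan is to combine a standard analysis of quantum phase estimation with the effective spectral gap lemma (\cref{lem:effective-spectral-gap}). The three hypotheses are designed so that the initial state decomposes cleanly into a $0$-phase eigencomponent of $U_{\cal AB}$ plus a residual lying in the range of $(I-\Pi_{\cal A})$; the former survives phase estimation intact, while the latter has controllably small overlap with the $0$-phase subspace.

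First I would let $\Pi_{\delta}$ denote the (ideal) projector onto the span of eigenvectors of $U_{\cal AB}$ with eigenphases $\theta$ satisfying $|\theta|\le\delta$. By linearity and condition (1),
$$ \Pi_{\delta}\ket{\psi_0^*} \;=\; \sqrt{\gamma}\,\Pi_{\delta}\ket{\f^*} \;+\; \Pi_{\delta}(I-\Pi_{\cal A})\ket{\p}. $$
Condition (2) tells us $\ket{\f^*}$ is a fixed point of $U_{\cal AB}$, hence a $0$-phase eigenvector, so $\Pi_{\delta}\ket{\f^*}=\ket{\f^*}$. Condition (3) ($\Pi_{\cal B}\ket{\p}=\ket{\p}$) is exactly the hypothesis needed to invoke \cref{lem:effective-spectral-gap}, which yields
$$ \bigl\|\Pi_{\delta}(I-\Pi_{\cal A})\ket{\p}\bigr\| \;\le\; \tfrac{\delta}{2}\,\|\ket{\p}\|. $$
These two facts are the whole engine of the proof.

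Second, I would translate these statements into bounds on the actual phase estimation output. Since phase estimation with precision $\delta$ projects approximately onto $\Pi_{\delta}$ (with the standard $4/\pi^2$ overlap loss inherent to QFT-based estimation), the probability of outcome ``$0$'' is essentially $p=\|\Pi_{\delta}\ket{\psi_0^*}\|^2$. Applying the triangle inequality to the decomposition above gives
$$ \sqrt{\gamma} - \tfrac{\delta}{2}\|\ket{\p}\| \;\le\; \|\Pi_{\delta}\ket{\psi_0^*}\| \;\le\; \sqrt{\gamma} + \tfrac{\delta}{2}\|\ket{\p}\|, $$
which after squaring and absorbing the QFT spreading constants yields the claimed interval $[\tfrac{4}{\pi^2}\gamma,\; \gamma + \tfrac{17\pi^2}{16}\delta\|\ket{\p}\|]$. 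For the trace-distance bound, I would normalize $\ket{\f'}=\Pi_{\delta}\ket{\psi_0^*}/\|\Pi_{\delta}\ket{\psi_0^*}\|$ and compare with $\ket{\f^*}$ via the Fuchs--van de Graaf inequality, using $\|\Pi_{\delta}\ket{\psi_0^*} - \sqrt{\gamma}\ket{\f^*}\|\le \tfrac{\delta}{2}\|\ket{\p}\|$. Dividing by $\|\Pi_{\delta}\ket{\psi_0^*}\|\ge \sqrt{\gamma}$ (up to lower-order corrections) produces a bound proportional to $\sqrt{\delta\|\ket{\p}\|/\gamma}$, matching the stated form. The corollary then follows by solving $\sqrt{17\pi^2\delta\|\ket{\p}\|/(16\gamma)}\le \epsilon$ for $\delta$.

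The main obstacle I anticipate is bookkeeping the exact constants $4/\pi^2$ and $17\pi^2/16$: these depend on how precisely one models the QFT-induced leakage outside the window $|\theta|\le\delta$ and on whether one bounds $\|\Pi_{\delta}\ket{\psi_0^*}\|$ from below by $\sqrt{\gamma}$ directly or by $\sqrt{\gamma}-\tfrac{\delta}{2}\|\ket{\p}\|$ (requiring the precision regime $\delta\|\ket{\p}\|\ll\sqrt{\gamma}$). The conceptual content — conditions (1)--(3) plus the effective spectral gap lemma imply near-perfect recovery of $\ket{\f^*}$ by phase estimation at precision $\delta=O(\gamma\epsilon^2/\|\ket{\p}\|)$ — is straightforward; since the statement is essentially a restatement of \cite[Lemma~2.12]{li2025multidimensional}, refined from \cite[Lemma~8]{piddock2019electricfind} and \cite[Lemma~10]{apers2022elfs}, I would invoke those computations for the exact constants rather than rederive them.
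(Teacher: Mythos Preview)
Your proposal is correct and matches the paper's approach exactly: the paper does not give a self-contained proof but simply notes that the argument relies on the effective spectral gap lemma (\cref{lem:effective-spectral-gap}) and refers to \cite[Appendix~A]{li2025multidimensional} for the details, which is precisely the route you outline. Your decomposition $\Pi_{\delta}\ket{\psi_0^*}=\sqrt{\gamma}\ket{\f^*}+\Pi_{\delta}(I-\Pi_{\cal A})\ket{\p}$ together with the bound from \cref{lem:effective-spectral-gap} is the whole content, and your final remark about invoking the cited references for the exact constants is exactly what the paper itself does.
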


The proof of the above \cref{lem:phaseprojection}, relies on the effective spectral gap lemma \ref{lem:effective-spectral-gap}, which provides a way to bound the error introduced when the projector is implemented by phase estimation with precision $\delta$. For a complete proof, we refer the reader to~\cite[Appendix~A]{li2025multidimensional}.

\begin{lemma}[Effective spectral gap lemma~\cite{lee2011QQueryCompStateConv}]\label{lem:effective-spectral-gap}
Fix $\delta\in [0,\pi)$, and let $\Pi_\delta$ be the orthogonal projector onto the $e^{i\theta}$-eigenspaces of $U_{\cal AB}$ with $|\theta|\leq \delta$.
If $\ket{\p} \in {\cal B}$, then 
$$\|\Pi_\delta(I - \Pi_{\cal A})\ket{\p}\| \leq \frac{\delta}{2}\|\ket{\p}\|.$$
\end{lemma}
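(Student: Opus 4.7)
The plan is to invoke Jordan's lemma to decompose $\mathcal{H}$ into invariant subspaces of the pair $(\Pi_{\mathcal{A}}, \Pi_{\mathcal{B}})$, each of dimension at most two, and then estimate $\|\Pi_\delta(I - \Pi_{\mathcal{A}})\ket{\p}\|$ one subspace at a time. In a two-dimensional invariant subspace $V_j$, both projectors are rank-one, onto unit vectors $\ket{a_j}$ and $\ket{b_j}$ with $|\langle a_j | b_j\rangle| = \cos(\theta_j/2)$. A direct $2\times 2$ computation shows that $U_{\cal AB}$ restricted to $V_j$ is a plane rotation with eigenvalues $e^{\pm i\theta_j}$, so $\Pi_\delta$ acts as the identity on $V_j$ when $|\theta_j| \leq \delta$ and annihilates $V_j$ otherwise.

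The one-dimensional invariant subspaces require a separate but brief check. If $\Pi_{\mathcal{A}} = \Pi_{\mathcal{B}} = I$ on such a subspace, then $(I-\Pi_{\mathcal{A}})\ket{\p}$ has no component there. If $\Pi_{\mathcal{B}} = I$ but $\Pi_{\mathcal{A}} = 0$, then $U_{\cal AB} = -I$ on this subspace, giving eigenphase $\pi$, which is filtered out by $\Pi_\delta$ since $\delta < \pi$. The remaining cases have $\Pi_{\mathcal{B}} = 0$, so no component of $\ket{\p}$ lies in them by the assumption $\ket{\p}\in\mathcal{B}$.

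For the main two-dimensional subspaces, the hypothesis $\ket{\p}\in\mathcal{B}$ forces the component $\ket{\p_j}$ of $\ket{\p}$ in $V_j$ to be of the form $\alpha_j \ket{b_j}$ for some scalar $\alpha_j$. A brief computation in the basis $\{\ket{a_j},\ket{a_j}^{\perp}\}$ yields $\|(I-\Pi_{\mathcal{A}})\ket{\p_j}\|^2 = |\alpha_j|^2 \sin^2(\theta_j/2)$. Putting everything together, and using the mutual orthogonality of the $V_j$ to invoke Pythagoras,
\begin{equation*}
\|\Pi_\delta(I - \Pi_{\mathcal{A}})\ket{\p}\|^2 \;=\; \sum_{j:\,|\theta_j|\leq\delta} |\alpha_j|^2 \sin^2(\theta_j/2) \;\leq\; \sin^2(\delta/2)\sum_{j} |\alpha_j|^2 \;\leq\; \frac{\delta^2}{4}\,\|\ket{\p}\|^2,
\end{equation*}
where the last step uses $\sin x \leq x$ on $[0,\pi/2]$, valid because $\delta/2 < \pi/2$. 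Taking square roots delivers the claimed bound.

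The main obstacle I anticipate is not the trigonometric bound itself but the careful bookkeeping surrounding Jordan's decomposition: verifying that the two-dimensional blocks align with the eigenspaces of $U_{\cal AB}$ so that $\Pi_\delta$ really is block-diagonal in this basis, identifying the correct relation $\theta_j/2 = \arccos|\langle a_j|b_j\rangle|$ between the block parameter and the eigenphase, and systematically ruling out each degenerate one-dimensional configuration to ensure nothing in $(I-\Pi_{\mathcal{A}})\ket{\p}$ escapes the analysis. Once this structural setup is in place, the trigonometric estimate above is essentially immediate.
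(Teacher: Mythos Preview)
Your proposal is correct and follows essentially the same approach the paper indicates: apply Jordan's lemma to decompose $\mathcal{H}$ into one- and two-dimensional invariant subspaces of $(\Pi_{\mathcal{A}},\Pi_{\mathcal{B}})$, then carry out the estimate blockwise. The paper only sketches this (``within each subspace, a straightforward calculation completes the proof''), whereas you have supplied the details, including the case analysis for the one-dimensional blocks and the trigonometric bound $\sin(\theta_j/2)\le\delta/2$; all of this is standard and accurate.
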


The proof of the effective spectral gap lemma mainly follows from \emph{Jordan's Lemma}, which states that for any two projectors $\Pi_{\A}$ and $\Pi_{\B}$, the Hilbert space can be decomposed into one- or two-dimensional subspaces that are invariant under both projectors. Consequently, the state $\ket{\p}$ can be decomposed into these invariant subspaces, and within each subspace, a straightforward calculation completes the proof.

Next, we show that the unitary $U_{\cal AB}$, along with the states $\ket{\theta^*}, \ket{\Phi_b}$, and $\ket{\p}$, defined in our setting, indeed satisfy the conditions required by \cref{lem:phaseprojection}.

\begin{lemma}[Initial state decomposition]\label{lemm:statedecompositonwalk}
\[
  \ket{\Phi_b}= \frac{1}{\sqrt{1+\pvec^{\top}\bvec}}\ket{\theta^*}-(I-\Pi_{\A}) \ket{\p}.
 \]   
 This expresses $ \ket{\Phi_b} $ as a combination of two components: one aligned with $ \ket{\f^*} $, scaled by $ \sqrt{1+\pvec^{\top}\bvec} $, and another that removes the component of $ \ket{\p} $ in $ \mathcal{A} $, ensuring orthogonality.
\end{lemma}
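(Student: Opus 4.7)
The plan is to reduce the identity to the classical vector decomposition of $\vec{\psi}_0=[0,\ldots,0,1]^\top$ from \cref{thm:vectordecomposition}. First I would observe that the column star states $\{\ket{\Phi_j}\}_{j=1}^{N}\cup\{\ket{\Phi_b}\}$ are pairwise orthonormal, since $\ket{\Phi_j}$ is supported on edges incident to $v_j$ and $\ket{\Phi_b}$ on edges incident to $b$, so their supports are disjoint. Consequently the map
\[
[v_1,\ldots,v_N,v_{N+1}]^\top \;\longmapsto\; \sum_{j=1}^{N}v_j\ket{\Phi_j}+v_{N+1}\ket{\Phi_b}
\]
is a linear isometry from $\mathbb{R}^{N+1}$ onto $\mathrm{Range}(I-\Pi_{\A})$. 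Under it, $\vec{\psi}_0$ is sent to $\ket{\Phi_b}$ and the null-space vector $[\xvec;1]$ of \cref{thm:vectordecomposition} is sent to $\sqrt{1+\pvec^\top\bvec}\,\ket{\theta^*}$ by \cref{eq:solutionstate}.

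The remaining task is to show that $(I-\Pi_{\A})\ket{\p}$ is the image under this isometry of the row-space vector $\tfrac{1}{1+\|\xvec\|_2^2}[\xvec;-\|\xvec\|_2^2]$. Since $I-\Pi_{\A}=\sum_{j=1}^N \ket{\Phi_j}\bra{\Phi_j}+\ket{\Phi_b}\bra{\Phi_b}$ from \cref{eq:piA}, this reduces to evaluating the overlaps $\langle\Phi_j|\p\rangle$ and $\langle\Phi_b|\p\rangle$. Combining \cref{def:rowstates} and \cref{def:colstates} with $\sqrt{w_{i,j}}=|A_{i,j}|\sqrt{|\mathcal{N}(v_j)|}$ and $\sqrt{w_{i,b}}=|b_i|\sqrt{|\mathcal{N}(b)|}$, the $\sqrt{|\mathcal{N}(\cdot)|}$ factors cancel and the signs encoded by $\Delta$ absorb into the entries, collapsing the basic overlaps to $\langle\Phi_j|\Psi_i\rangle = A_{i,j}/\sqrt{d_{u_i}}$ and $\langle\Phi_b|\Psi_i\rangle = -b_i/\sqrt{d_{u_i}}$, where the minus sign reflects the $-\bvec$ column of $H$. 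Substituting the definition of $\ket{\p}$ from \cref{eq:potential}, the $\sqrt{d_{u_i}}$ in its coefficients cancels against the $1/\sqrt{d_{u_i}}$ of the overlaps, leaving $\sum_i p_i A_{i,j}=(A^\top\pvec)_j=\y_j$ and $\sum_i p_i b_i=\pvec^\top\bvec=\|\xvec\|_2^2$ by \cref{thm:vectordecomposition} and \cref{remark:pby}. Therefore
\[
(I-\Pi_{\A})\ket{\p} \;=\; \sum_{j=1}^{N}\frac{\y_j}{1+\pvec^\top\bvec}\ket{\Phi_j} \;-\; \frac{\pvec^\top\bvec}{1+\pvec^\top\bvec}\ket{\Phi_b},
\]
which is precisely the image of $\tfrac{1}{1+\|\xvec\|_2^2}[\xvec;-\|\xvec\|_2^2]$.

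Assembling the two images and invoking the classical decomposition $\vec{\psi}_0 = \tfrac{1}{1+\|\xvec\|_2^2}([\xvec;1]-[\xvec;-\|\xvec\|_2^2])$ from \cref{thm:vectordecomposition} then yields the claimed identity. The main subtlety is the sign bookkeeping for the $b$-column: the $-\bvec$ in the augmented matrix $H$ must translate to an opposite sign in $\langle\Phi_b|\Psi_i\rangle$ relative to $\langle\Phi_j|\Psi_i\rangle$, and this sign is exactly what lets the $\ket{\Phi_b}$ coefficient collapse to $1$ rather than cancel. Once the convention for $\Delta_{u_i,b}$ is aligned with the $-\bvec$ column of $H$, the rest of the algebra is routine and no further obstacle arises.
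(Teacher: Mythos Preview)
Your proof is correct and takes essentially the same approach as the paper: both compute $(I-\Pi_{\A})\ket{\p}$ by evaluating the overlaps $\langle\Phi_j|\Psi_i\rangle=A_{i,j}/\sqrt{d_{u_i}}$ and $\langle\Phi_b|\Psi_i\rangle=-b_i/\sqrt{d_{u_i}}$, then assemble the decomposition. Your explicit framing via the isometry $\mathbb{R}^{N+1}\to\mathrm{Range}(I-\Pi_{\A})$ and direct appeal to \cref{thm:vectordecomposition} is precisely the ``basis change from $\ket{e_j}$ to $\ket{\Phi_j}$'' that the paper notes in the remark immediately following its proof.
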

\begin{proof}
     Since $w_{i,j}=|A_{i,j}|^2 |\N(v_j)|$,   $w_{i,b} = |b_i|^2 |\N(b)|$, so $(-1)^{\Delta_{u_i,v_j}}\sqrt{\frac{w_{i,j}}{|\N(v_j)|}} = A_{i,j}$, $(-1)^{\Delta_{u_i,b}}\sqrt{\frac{w_{i,b}}{|\N(b)|}} =-b_i$.

    \begin{align*}
    (I-\Pi_{\A}) \ket{\p} &= \frac{1}{1+\pvec^{\top}\bvec}\left(\sum_{v_j\in V} \ket{\Phi_j}\bra{\Phi_j} \ket{\p} +\ket{\Phi_b}\bra{\Phi_b} \ket{\p} \right) \\
    &=\frac{1}{1+\pvec^{\top}\bvec}\sum_{u_i\in V} p_i  \left(\sum_{v_j\in \N(u_i)} (-1)^{\Delta_{u_i,v_j}} \sqrt{w_{i,j}} \frac{1}{\sqrt{|\N(v_j)|}} \ket{\Phi_j} + (-1)^{\Delta_{u_i,b}}\frac{\sqrt{w_{i,b}}}{\sqrt{|\N(b)|}}\ket{\Phi_b}\right)\\
    &=\frac{1}{1+\pvec^{\top}\bvec}\sum_{u_i\in V} p_i  \left(\sum_{v_j\in \N(u_i)}  A_{i,j}\ket{\Phi_j}- b_i \ket{\Phi_b}\right)\\
    &=\frac{1}{1+\pvec^{\top}\bvec}\left(\sum_{v_j\in V} \left( \sum_{u_i\in \N(v_j)} p_i A_{i,j}\right) \ket{\Phi_j}- \sum_{u_i} p_i\cdot b_i\ket{\Phi_b}\right)\\
    &=\frac{1}{1+\pvec^{\top}\bvec}\left(\sum_{v_j} (\pvec^{\top} A)_j\ket{\Phi_j} -\pvec^{\top}\bvec \ket{\Phi_b}\right)\\
    &=\frac{1}{1+\pvec^{\top}\bvec}\left(\sum_{v_j}\y_j\ket{\Phi_j} -\pvec^{\top}\bvec \ket{\Phi_b}\right)
\end{align*}

Therefore, \[
  \ket{\Phi_b}= \frac{1}{\sqrt{1+\pvec^{\top}\bvec}}\ket{\theta^*}-(I-\Pi_{\A}) \ket{\p}.
 \]    
 \end{proof}

This action of $I-\Pi_{\A}$ on $\ket{\p}$ gives us an analogy of the vector $\theta^{\perp}$ in the row space is a linear combination of row vectors of the matrix $H$. The only difference is the change of the basis from $\ket{e_j}$ to $\ket{\Phi_j}$ and $\ket{e_{n+1}}$ to $\ket{\Phi_b}$.

 \begin{lemma}[Fixed-point condition]\label{lem:fixedpoint}
     \[U_{\mathcal{AB}} \ket{\theta^*} = \ket{\theta^*}.\]
 This implies that $ \ket{\theta^*} $ is a fixed point of $ U_{\mathcal{AB}} $, meaning it remains unchanged under the sequence of reflections.
 \end{lemma}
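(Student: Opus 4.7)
The plan is to show that $\ket{\theta^*}$ is a $(-1)$-eigenvector of both reflections $(2\Pi_{\mathcal{A}}-I)$ and $(2\Pi_{\mathcal{B}}-I)$; their product then fixes $\ket{\theta^*}$. Concretely, I will establish that $\ket{\theta^*}$ lies in the orthogonal complement of $\mathcal{A}$ (so $\Pi_{\mathcal{A}}\ket{\theta^*}=0$) and that $\ket{\theta^*}$ is orthogonal to every row star state $\ket{\Psi_i}$ (so $\Pi_{\mathcal{B}}\ket{\theta^*}=0$). Applying $(2\Pi_{\mathcal{A}}-I)(2\Pi_{\mathcal{B}}-I)$ then yields $(-1)(-1)\ket{\theta^*}=\ket{\theta^*}$.

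For the first part, by \cref{eq:solutionstate}, $\ket{\theta^*}$ is a linear combination of the column star states $\ket{\Phi_j}$ and $\ket{\Phi_b}$. Since each $\ket{\Phi_j}$ is supported on basis vectors $\ket{e_{i,j}}$ with a fixed column index $j$, and $\ket{\Phi_b}$ is supported on the $\ket{e_{i,b}}$'s, these vectors have pairwise disjoint supports and hence form an orthonormal set. By the definition of $\Pi_{\mathcal{A}}$ in \cref{eq:piA}, $\ket{\theta^*}$ therefore lies entirely in $\mathcal{A}^{\perp}$, giving $\Pi_{\mathcal{A}}\ket{\theta^*}=0$.

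For the second part, I will compute $\langle\Psi_i\ket{\theta^*}$ directly. Since $\ket{\Psi_i}$ is supported only on edges incident to $u_i$, only the $\ket{e_{i,j}}$ and $\ket{e_{i,b}}$ components of $\ket{\theta^*}$ contribute. Using $w_{i,j}=|A_{i,j}|^2|\N(v_j)|$ and $w_{i,b}=|b_i|^2|\N(b)|$, together with the sign conventions $(-1)^{\Delta_{u_i,v_j}}\sqrt{w_{i,j}/|\N(v_j)|}=A_{i,j}$ and $(-1)^{\Delta_{u_i,b}}\sqrt{w_{i,b}/|\N(b)|}=-b_i$ already used in the proof of \cref{lemm:statedecompositonwalk}, the inner product telescopes to
\begin{equation*}
\langle\Psi_i|\theta^*\rangle \;=\; \frac{1}{\sqrt{d_{u_i}(1+\pvec^{\top}\bvec)}}\Bigl(\sum_{v_j\in\N(u_i)} A_{i,j}\y_j \;-\; b_i\Bigr) \;=\; \frac{(A\xvec-\bvec)_i}{\sqrt{d_{u_i}(1+\pvec^{\top}\bvec)}} \;=\; 0,
\end{equation*}
where the last equality uses $A\xvec=\bvec$. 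Hence $\Pi_{\mathcal{B}}\ket{\theta^*}=0$.

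Combining the two observations, $(2\Pi_{\mathcal{B}}-I)\ket{\theta^*}=-\ket{\theta^*}$ and then $(2\Pi_{\mathcal{A}}-I)(-\ket{\theta^*})=\ket{\theta^*}$, so $U_{\mathcal{AB}}\ket{\theta^*}=\ket{\theta^*}$. The only nontrivial step is the row-star orthogonality calculation, which I expect to be the main bookkeeping obstacle because it requires tracking the sign conventions $\Delta_{u_i,v_j}$, $\Delta_{u_i,b}$ and the edge weights so that the amplitudes correctly recover the $i$-th row of the augmented matrix $H=[A,-\bvec]$; once this is done, the cancellation is exactly the linear system constraint $A\xvec=\bvec$.
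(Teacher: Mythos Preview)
Your proposal is correct and follows essentially the same approach as the paper: both arguments show $\Pi_{\mathcal{A}}\ket{\theta^*}=0$ (since $\ket{\theta^*}$ is a combination of the column star states spanning $\mathcal{A}^\perp$) and $\Pi_{\mathcal{B}}\ket{\theta^*}=0$ (by computing the inner products with the row star states and using $A\xvec=\bvec$), then conclude $(2\Pi_{\mathcal{A}}-I)(2\Pi_{\mathcal{B}}-I)\ket{\theta^*}=(-1)(-1)\ket{\theta^*}$. The only cosmetic difference is that the paper expands $\Pi_{\mathcal{B}}\ket{\theta^*}$ as a sum over $\ket{\Psi_i}$ whereas you compute each $\langle\Psi_i|\theta^*\rangle$ separately, which is equivalent.
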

 \begin{proof}
 We prove the result by showing that $\Pi_{\A}\ket{\theta^*}=0$ and $\Pi_{\B}\ket{\theta^*}=0$.
 \begin{enumerate}
         \item Because 
         $ (I-\Pi_{\A})\ket{\theta^*}=\ket{\theta^*}$, we have \[\Pi_{\A}\ket{\theta^*}=0.\]
         \item \begin{align*}
         \Pi_{\B}\ket{\theta^*}&= \frac{1}{\sqrt{1+\pvec^{\top}\bvec}} 
 \sum_{u_i \in V} \ket{\Psi_i}\bra{\Psi_i} \left( \sum_{v_j} \y_j  \ket{\Phi_j} 
    + \ket{\Phi_b} \right)  \\
   &= \frac{1}{\sqrt{1+\pvec^{\top}\bvec}} \sum_{u_i \in V} \frac{1}{\sqrt{d_{u_i}}}\left(\sum_{v_j\in V} \y_j\langle \Psi_i,\Phi_j\rangle \ket{\Psi_i} + \langle \Psi_i,\Phi_b\rangle \ket{\Psi_i}\right)\\
   &= \frac{1}{\sqrt{1+\pvec^{\top}\bvec}} \sum_{u_i \in V} \frac{1}{\sqrt{d_{u_i}}}\left(\sum_{v_j\in \N(u_i)} \y_j A_{i,j}\ket{\Psi_i} -b_i\ket{\Psi_i}\right)\\
   &=\frac{1}{\sqrt{1+\pvec^{\top}\bvec}} \sum_{u_i \in V} \frac{1}{\sqrt{d_{u_i}}}\left(\sum_{v_j\in \N(u_i)} \y_j A_{i,j} -b_i \right)\ket{\Psi_i}\\
   &=\frac{1}{\sqrt{1+\pvec^{\top}\bvec}} \sum_{u_i \in V} \frac{1}{\sqrt{d_{u_i}}}
   \left( [A\xvec]_i -b_i \right)\ket{\Psi_i}\\
   &=0
         \end{align*}

     The last equality comes from $A\xvec=\bvec$, so $\Pi_{\B}\ket{\theta^*}=0$.
     \end{enumerate}
 \end{proof}


\begin{lemma}[Projection property]\label{lem:projectionB}
    \[\Pi_{\B}\ket{\p}=\ket{\p}.
    \]
This ensures that $ \ket{\p} $ lies entirely within subspace $ \mathcal{B} $.
\end{lemma}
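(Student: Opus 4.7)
The plan is to show that $\{\ket{\Psi_i}\}_{i=1}^{M}$ is an orthonormal family of vectors in $\mathcal{H}$, so that $\Pi_{\mathcal{B}} = \sum_{u_i} \ket{\Psi_i}\bra{\Psi_i}$ is genuinely the orthogonal projector onto their span. Since $\ket{\p}$ is defined as a linear combination of these $\ket{\Psi_i}$, the claim $\Pi_{\mathcal{B}}\ket{\p}=\ket{\p}$ then follows by a one-line calculation.

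First I would verify orthogonality. By \cref{def:rowstates}, the state $\ket{\Psi_i}$ is supported on the basis vectors $\{\ket{e_{i,j}} : v_j \in \mathcal{N}(u_i)\} \cup \{\ket{e_{i,b}} : u_i \in \mathcal{N}(b)\}$, all of which carry the row-index $i$ as their first label. Because $G$ is bipartite and every edge $e_{i,j}$ (or $e_{i,b}$) is uniquely identified by its row endpoint $u_i$ and column endpoint $v_j$ (or $b$), the supports of $\ket{\Psi_i}$ and $\ket{\Psi_{i'}}$ are disjoint whenever $i \neq i'$. Hence $\langle \Psi_i|\Psi_{i'}\rangle = 0$ for $i\neq i'$.

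Next I would check normalization. Using $w_{i,j}=|A_{i,j}|^2|\mathcal{N}(v_j)|$ and $w_{i,b}=|b_i|^2|\mathcal{N}(b)|$, we get
\[
\|\ket{\Psi_i}\|^2 = \frac{1}{d_{u_i}}\left(\sum_{v_j\in\mathcal{N}(u_i)} w_{i,j} + w_{i,b}\right) = \frac{d_{u_i}}{d_{u_i}} = 1,
\]
by the definition of $d_{u_i}$. Combined with the previous step, $\{\ket{\Psi_i}\}$ is an orthonormal set.

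Finally, using orthonormality, $\bra{\Psi_k}\ket{\Psi_i}=\delta_{ki}$, so
\[
\Pi_{\mathcal{B}}\ket{\p} = \sum_{u_k \in V}\ket{\Psi_k}\bra{\Psi_k}\left(\frac{1}{1+\pvec^{\top}\bvec}\sum_{u_i\in V} p_i\sqrt{d_{u_i}}\,\ket{\Psi_i}\right) = \frac{1}{1+\pvec^{\top}\bvec}\sum_{u_i\in V} p_i\sqrt{d_{u_i}}\,\ket{\Psi_i} = \ket{\p},
\]
which is exactly the claim. I do not expect any real obstacle here; the only subtle point is justifying disjointness of supports carefully, which is immediate from the bipartite graph construction in \cref{sec:graph} that assigns distinct basis labels $\ket{e_{i,j}}$ and $\ket{e_{i,b}}$ to distinct edges.
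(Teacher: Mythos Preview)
Your proof is correct and follows the same approach as the paper: both observe that $\ket{\p}$ is, by definition (\cref{eq:potential}), a linear combination of the $\ket{\Psi_i}$ spanning $\mathcal{B}$, so it is fixed by $\Pi_{\mathcal{B}}$. You simply make explicit the orthonormality of $\{\ket{\Psi_i}\}$ that the paper takes for granted when writing $\Pi_{\mathcal{B}}=\sum_i\ket{\Psi_i}\bra{\Psi_i}$.
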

\begin{proof}
 By \cref{eq:potential}, $\ket{\p}$ is a linear combination of vectors in the subspace $\B$, so $\Pi_{\B}\ket{\p}=\ket{\p}$.
\end{proof}

The following \cref{thm:statepreparation} establishes that, under verified conditions, applying phase estimation to the pair $(U_{\cal AB}, \ket{\Phi_b})$, the resulting quantum state is close to \[
    \ket{\theta^*} = \frac{1}{\sqrt{1+\|\xvec\|_2^2}} \left( \sum_{v_j} \y_j  \ket{\Phi_j} 
    + \ket{\Phi_b} \right)
    \] when the phase value is equal to $0$.

\begin{theorem}
 \label{thm:statepreparation}Let $U_{\cal AB}=(2\Pi_{\mathcal{A}} - I)(2\Pi_{\mathcal{B}} - I)$ as defined in \cref{eq:unitaryAB} and let the initial state be $\ket{\Phi_b}$.  If the following three conditions are satisfied \begin{itemize} 
    \item Initial state decomposition: \[
  \ket{\Phi_b}= \frac{1}{\sqrt{1+\|\xvec\|_2^2}}\ket{\theta^*}-(I-\Pi_{\A}) \ket{\p}.
 \]    
        \item Fixed-point condition: \[U_{\cal AB}\ket{\theta^*}=\ket{\theta^*}\]
        \item Projection property: \[\Pi_{\B}\ket{\p}=\ket{\p}\]
    \end{itemize}
then performing phase estimation on the state $\ket{\Phi_b}$ with operator $U_{\cal AB}$ and precision $$O\left(\frac{\epsilon^2}{ \|(1+\|\xvec\|_2^2)\ket{\p}\|}\right),$$  outputs phase ``$0$'' with probability at least $\frac{1}{1+\|\xvec\|_2^2}$ returns an $\epsilon$-close state $\ket{\theta'}$ such that \[\frac{1}{2}\|\ket{\theta'}\bra{\theta'}-\ket{\theta^*}\bra{\theta^*}\|_1\leq \epsilon.
\]
In particular, the quantum state $\ket{\theta'}$ can be prepared using  $$O\left(\frac{\sqrt{\s \cdot \ET}}{\epsilon^2}\right).$$ queries to controlled applications of $U_{\cal AB}$.
\end{theorem}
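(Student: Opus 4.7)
The plan is to invoke \cref{lem:phaseprojection} directly, since all three of its hypotheses have already been verified for the specific triple $(\ket{\Phi_b},\ket{\theta^*},\ket{\p})$ in \cref{lemm:statedecompositonwalk}, \cref{lem:fixedpoint}, and \cref{lem:projectionB}. Matching parameters, I would set $\sqrt{\gamma}=1/\sqrt{1+\|\xvec\|_2^2}$. The lemma then guarantees that outcome ``$0$'' occurs with probability at least $\frac{4}{\pi^2}\gamma=\Omega\!\bigl(1/(1+\|\xvec\|_2^2)\bigr)$, matching the claimed success-probability bound up to an absolute constant. Choosing the phase-estimation precision $\delta=O\!\bigl(\gamma\epsilon^2/\|\ket{\p}\|\bigr)=O\!\bigl(\epsilon^2/((1+\|\xvec\|_2^2)\|\ket{\p}\|)\bigr)$, as prescribed by the lemma, drives the trace distance $\tfrac12\|\ket{\theta'}\bra{\theta'}-\ket{\theta^*}\bra{\theta^*}\|_1$ below $\epsilon$.

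The only remaining step is to convert the standard $O(1/\delta)$ controlled-$U_{\cal AB}$ cost of phase estimation into the claimed $O(\sqrt{\s\cdot\ET}/\epsilon^2)$, which reduces to proving $(1+\|\xvec\|_2^2)\,\|\ket{\p}\|=O(\sqrt{\s\cdot\ET})$. The key observation is that the row star states $\{\ket{\Psi_i}\}_{i=1}^M$ are mutually orthonormal: by construction each has unit norm, and distinct $\ket{\Psi_i},\ket{\Psi_{i'}}$ live on disjoint sets of edge basis vectors $\ket{e_{i,\cdot}}$ versus $\ket{e_{i',\cdot}}$. Hence
\[
\|\ket{\p}\|^2 \;=\; \Bigl\|\tfrac{1}{1+\pvec^{\top}\bvec}\sum_{u_i\in V} p_i\sqrt{d_{u_i}}\,\ket{\Psi_i}\Bigr\|^2 \;=\; \frac{1}{(1+\pvec^{\top}\bvec)^2}\sum_{i=1}^{M} p_i^2\,d_{u_i},
\]
where $d_{u_i}=\sum_{v_j\in\N(u_i)}|A_{i,j}|^2|\N(v_j)|+|b_i|^2|\N(b)|$ is the weighted degree appearing in the definition of $\ket{\Psi_i}$.

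To bridge the weighted degree $d_{u_i}$ and the unweighted squared row norm $d_i=\sum_j|A_{i,j}|^2+|b_i|^2$, I would use the sparsity assumption: since $\s$ bounds the number of nonzero entries in any row or column of $H$, we have $|\N(v_j)|\le\s$ and $|\N(b)|\le\s$, giving
\[
d_{u_i}\;\le\;\s\Bigl(\sum_j|A_{i,j}|^2+|b_i|^2\Bigr)\;=\;\s\cdot d_i .
\]
Combined with $1+\pvec^{\top}\bvec=1+\|\xvec\|_2^2$ from \cref{remark:pby}, this yields
\[
(1+\|\xvec\|_2^2)\,\|\ket{\p}\| \;\le\; \sqrt{\s\sum_{i=1}^{M} p_i^2\,d_i} \;=\; \sqrt{\s\cdot\ET},
\]
and substituting into $O(1/\delta)=O\!\bigl((1+\|\xvec\|_2^2)\|\ket{\p}\|/\epsilon^2\bigr)$ delivers the stated $O(\sqrt{\s\cdot\ET}/\epsilon^2)$ query count.

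I do not anticipate a serious technical obstacle: the heavy lifting is already packaged into \cref{lem:phaseprojection} and the effective spectral gap lemma, and the verification of the three structural hypotheses was done earlier. The most error-prone point is purely notational, namely keeping the weighted degree $d_{u_i}$ (governing the norm of $\ket{\p}$) cleanly separate from the unweighted squared row norm $d_i$ that appears inside $\ET$; the sparsity inequality $d_{u_i}\le\s\,d_i$ is precisely the bridge that absorbs the extra factor of $\s$ into the final square root.
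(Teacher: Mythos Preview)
Your proposal is correct and follows essentially the same route as the paper: invoke \cref{lem:phaseprojection} with $\gamma=1/(1+\|\xvec\|_2^2)$ using the three verified hypotheses, then bound $(1+\|\xvec\|_2^2)\|\ket{\p}\|$ by $\sqrt{\s\cdot\ET}$ via the orthonormality of the row star states and the sparsity inequality $d_{u_i}\le\s\,d_i$. The paper's proof is identical in substance, including the same bridge between the weighted degree $d_{u_i}$ and the unweighted squared row norm $d_i$.
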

\begin{proof}
    The first part of the theorem follows directly from \cref{lem:phaseprojection} by plugging parameters using \cref{lemm:statedecompositonwalk} for initial state decomposition, \cref{lem:fixedpoint} for fixed point condition and \cref{lem:projectionB} for projection property.

    To bound the query complexity, note that the success probability of phase estimation is $\gamma=\frac{1}{1+\|\xvec\|_2^2}$ and the required precision is $\delta=O\left(\frac{\epsilon^2}{ \|(1+\|\xvec\|_2^2)\ket{\p}\|}\right).$ Therefore, the total number of queries to controlled $U_{\cal AB}$ is $$O\left(\frac{(1+\|\xvec\|_2^2)\|\ket{\p}\|}{\epsilon^2 }\right).$$

 For each weighted degree $d_{u_i}$, we have 
    $$d_{u_i} =\sum_{v_j \in \mathcal{N}(u_i)} w_{i,j} + w_{i,b}=\sum_{v_j \in \mathcal{N}(u_i)} |A_{i,j}|^2 |\N(v_j)| +  |b_i|^2 |\N(b)| \leq \s \left(\sum_{v_j \in \mathcal{N}(u_i)} |A_{i,j}|^2 + |b_i|^2 \right)=\s \cdot d_i, $$  where $\s$ is the sparsity of the matrix $H$ and $d_i$ is the squared $\ell_2$ norm of the $i$-th row of $H$. Then we have 
    $$\|\p\|_2^2=\frac{1}{(1+\|\xvec\|_2^2)^2} \sum_{i=1}^{M}p_i^2 d_{u_i} \leq \frac{\s}{(1+\|\xvec\|_2^2)^2} \sum_{i=1}^{M}p_i^2 d_{i} = \frac{\s \cdot \ET}{(1+\|\xvec\|_2^2)^2}.$$
   This simplifies the query complexity to be $$O\left(\frac{\sqrt{\s \cdot \ET}}{\epsilon^2}\right).$$
\end{proof}

The overall time complexity of our QLS algorithm therefore depends on the cost of implementing the unitary operator $U_{\cal AB}$. In the following, we show that given sparse access to the matrix $ H = [A, -\bvec]$, the unitary $U_{\cal AB} = (2\Pi_{\A} - I)(2\Pi_{\B} - I)$ can be implemented in time $\widetilde{O}(\s)$, where $\s$ denotes the sparsity of the matrix $H$.

\begin{definition}[Sparse access oracle {\cite[Lemma~48]{gilyen2018QSingValTransf}}]
\label{defn:sparse_access_oracle}
Let $H=[A, -\bvec]$ be a matrix that is $\s$ sparse,  where each row and each column have at most $\s$ nonzero entries.

Let $h_{ij}$ be a $a$-bit binary description of the $ij$-matrix element of $H$. The sparse-access oracle $\Or_H$ is defined as follows:
\[
 \Or_H: \ket{i}\ket{j}\ket{0}^{\otimes a} \longrightarrow \ket{i}\ket{j}\ket{h_{ij}}. 
\] 

Let $r_{ik}$ ($c_{ik}$) be the index for the $k$-th nonzero entry of the $i$-th row (column) of $H$. The sparse access oracle $\Or_{\mathrm{r}}$  and $\Or_{\mathrm{c}}$ are defined as follows.
\[ \Or_\mathrm{r} : \ket{i}\ket{k} \longrightarrow \ket{i}\ket{r_{ik}}.\]

\[ \Or_\mathrm{c} : \ket{\ell}\ket{j} \longrightarrow \ket{c_{\ell j}}\ket{j}.\]
\end{definition}

Given the sparse access oracles to the matrix $H$, we can efficiently compute the positions and values of nonzero entries in each row of $H$, as well as the indices of nonzero entries in each column using at most $O(\s)$ queries to the oracles $\Or_\mathrm{H}$, $\Or_\mathrm{r}$, and $\Or_\mathrm{c}$. For a sparse vector, given the positions and values of its nonzero entries, a normalized quantum state can be prepared in time $\widetilde{O}(\s)$, as shown in~\cite{grover2002SuperposEffIntegrProbDistr,ramacciotti2024simple,luo2024circuit}. Since the sparsity of $H$ is $\s$, the number of nonzero entries in orthonormal bases $ \{ \ket{\Psi_i} \} $ and $ \{ \ket{\Phi_j} \} \cup \{ \ket{\Phi_b} \} $, which span subspaces $ \mathcal{B} $ and $ \mathcal{A}^\perp $, respectively, is also bounded by $\s$. This enables efficient preparation of the star states in both rows and columns in time $\widetilde{O}(s)$, which is required for the construction of the unitary $U_{\mathcal{AB}}$.

\begin{theorem}
    \label{thm:unitary-from-reflections}
Let $ \{ \ket{\Psi_i} \} $ and $ \{ \ket{\Phi_j} \} \cup \{\ket{\Phi_b}\} $ be orthonormal bases that span the subspaces $ \mathcal{B} $ and $ \mathcal{A}^\perp $ (\cref{def:colstates,def:rowstates} ), respectively. Then the unitary operator
\[
U_{\cal AB} = (2\Pi_{\mathcal{A}} - I)(2\Pi_{\mathcal{B}} - I)
\]
can be implemented in time $  \widetilde{O}(\s)$.
\end{theorem}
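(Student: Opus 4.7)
The plan is to implement each of the two reflections $R_{\mathcal{B}}=2\Pi_{\mathcal{B}}-I$ and $R_{\mathcal{A}}=2\Pi_{\mathcal{A}}-I$ separately and then compose them to obtain $U_{\mathcal{AB}}$. Because the row star states $\{\ket{\Psi_i}\}_{i=1}^{M}$ are orthonormal (their edge supports are disjoint, indexed by distinct row vertices $u_i$) and the column star states $\{\ket{\Phi_j}\}_{j=1}^{N}\cup\{\ket{\Phi_b}\}$ are likewise orthonormal with disjoint column supports, each reflection is a reflection about the span of an orthonormal family. The textbook ``prepare--reflect--unprepare'' recipe applies: if I can build a unitary $W_{\mathcal{B}}$ on $\mathcal{H}$ together with an ancilla register such that $W_{\mathcal{B}}(\ket{i}_a\ket{0})=\ket{\Psi_i}$, then
\[
R_{\mathcal{B}} \;=\; W_{\mathcal{B}}\bigl(I\otimes(2\ket{0}\!\bra{0}-I)\bigr)W_{\mathcal{B}}^{\dagger},
\]
and analogously $2\Pi_{\mathcal{A}^\perp}-I = -R_{\mathcal{A}}$ can be built from a column-side preparer $W_{\mathcal{A}}$. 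The ancilla reflection itself costs $O(\log(MN))$, so the entire task reduces to implementing $W_{\mathcal{B}}$ and $W_{\mathcal{A}}$ in time $\widetilde{O}(\s)$.

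Each star state is $\s$-sparse in the edge basis, so I would prepare it using standard $\s$-sparse state preparation~\cite{grover2002SuperposEffIntegrProbDistr,ramacciotti2024simple,luo2024circuit}, which runs in $\widetilde{O}(\s)$ gates once the nonzero positions, signs, and square-root amplitudes have been written into a quantum register. For $W_{\mathcal{B}}$, controlled on the row index $\ket{i}$, I would (i) use at most $\s$ queries to $\Or_{\mathrm{r}}$ to enumerate the nonzero column indices $\{j:A_{i,j}\neq 0\}$, together with the endpoint $b$ when $b_i\neq 0$; (ii) query $\Or_{H}$ on each neighbor to read off $A_{i,j}$ and $b_i$, extracting the signs $(-1)^{\Delta_{u_i,v_j}}$, $(-1)^{\Delta_{u_i,b}}$ and the magnitudes $|A_{i,j}|^2$, $|b_i|^2$; and (iii) compute each column neighborhood size $|\mathcal{N}(v_j)|$ and the single global $|\mathcal{N}(b)|$ needed to form $w_{i,j}=|A_{i,j}|^2|\mathcal{N}(v_j)|$ and $w_{i,b}=|b_i|^2|\mathcal{N}(b)|$. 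Step (iii) is the only subquery that is not immediate from a single oracle call: since every column of $H$ has at most $\s$ nonzeros, each $|\mathcal{N}(v_j)|$ can be extracted in $O(\log \s)$ calls to $\Or_{\mathrm{c}}$ by binary searching for the first invalid row index, and $|\mathcal{N}(b)|$ is precomputed once and reused. All of this assembles the amplitude list for row $i$ in $\widetilde{O}(\s)$ time, and the subsequent sparse preparation contributes another $\widetilde{O}(\s)$.

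The column preparer $W_{\mathcal{A}}$ is structurally easier: $\ket{\Phi_j}$ is a uniform superposition over $\mathcal{N}(v_j)$ with no signs or nontrivial weights, so only a version of step (i) via $\Or_{\mathrm{c}}$ is needed, still in $\widetilde{O}(\s)$, and the unweighted $\ket{\Phi_b}$ is a one-time precomputation. Composing $W_{\mathcal{B}},W_{\mathcal{B}}^{\dagger},W_{\mathcal{A}},W_{\mathcal{A}}^{\dagger}$ with two ancilla reflections yields $U_{\mathcal{AB}}$ in total time $\widetilde{O}(\s)$.

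The main obstacle, though ultimately routine, is verifying that ``prepare--reflect--unprepare'' really implements the global reflection $2\Pi_{\mathcal{B}}-I$ on all of $\mathcal{H}$ rather than only on the image of $W_{\mathcal{B}}$. The clean route is to exploit the disjoint-support structure of the star states: since distinct $\ket{\Psi_i}$ live on disjoint edge sets (indexed by distinct row vertices), I can take the ancilla register to be the row coordinate itself, so every edge state $\ket{e_{i,j}}$ automatically sits in the $\ket{i}$ branch, and the relation $W_{\mathcal{B}}(\ket{i}\ket{0})=\ket{\Psi_i}$ extends to a genuine unitary whose projection onto the reference ancilla state equals $\Pi_{\mathcal{B}}$. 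The same argument on the column side (using the column coordinate as the ancilla) handles $W_{\mathcal{A}}$, and this is precisely where the orthogonality guaranteed by Definitions~\ref{def:rowstates}--\ref{def:colstates} is used. Once this correspondence is in hand, the $\widetilde{O}(\s)$ bound on $U_{\mathcal{AB}}$ follows directly from the per-register preparer costs.
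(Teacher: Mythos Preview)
Your proposal is correct and takes essentially the same approach as the paper: the paper's brief proof cites \cite[Claims~2.10,~2.11]{jeffery2025multidimensional} for the fact that a reflection about an orthogonal family of efficiently preparable states costs asymptotically the same as preparing a single member, and your prepare--reflect--unprepare construction (with the row/column coordinate serving as the control register, enabled by the disjoint edge supports of the star states) is precisely what those claims encapsulate. Your oracle-level accounting for the neighborhood sizes $|\mathcal{N}(v_j)|$ via binary search on $\Or_{\mathrm{c}}$ makes explicit a step the paper relegates to the paragraph preceding the theorem.
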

\begin{proof}
Since each basis state is $\s$ sparse and can therefore be prepared in time $ \widetilde{O}(\s)$ according to~\cite{grover2002SuperposEffIntegrProbDistr,ramacciotti2024simple,luo2024circuit}. By~\cite[Claim~2.10, 2.11]{jeffery2025multidimensional}, the unitary $U_{\mathcal{AB}}$ can be implemented with a complexity that is asymptotically equal to the cost of generating the corresponding row and column star states. Hence, the overall complexity of implementing $U_{\mathcal{AB}}$ is $\widetilde{O}(\s)$, where $\s$ is the sparsity of the matrix $H$.
\end{proof}

\subsection{The Algorithm and Its Complexity}  \label{sec:aglorithm}

In this section, we present our new QLS algorithm, which explicitly incorporates the structure of the vector $\bvec$. The algorithm proceeds in three steps. First, we prepare an initial state $\ket{\psi_0}$ and construct a unitary $U_{\cal AB}$ given the sparse access of the matrix $H=[A, -\bvec]$. In the second step, we
perform phase estimation on the pair ($U_{\cal AB}, \ket{\psi_0})$ to get a quantum state $\ket{\theta'}$, which encodes a $+1$ eigenstate of $U_{\cal AB}$. Finally, we apply a projective measurement that filters out components orthogonal to the solution $\ket{\xvec}$ of the linear system $A\xvec=\bvec$, resulting in a quantum state $\ket{\xvec'}$ that is $\epsilon$-close to $\ket{\xvec}$.
\begin{algorithm}
\caption{New QLS algorithm}
\begin{algorithmic}[1]\label{alg:newQLS2}
\REQUIRE Sparse access of the matrix $A$ and vector $\bvec$, target precision $\epsilon$.

\ENSURE A quantum state $\ket{\xvec'}$ such that \[
\|\ket{\xvec'}\bra{\xvec'}-\ket{\xvec}\bra{\xvec}\|_1\leq O(\epsilon).
\]
 
 \STATE  Prepare the initial state $\ket{\psi_0}=\ket{\Phi_{b}}$ and construct the unitary $U_{\cal AB}$.
 \STATE Let $\gamma= \frac{\|\xvec\|_2^2}{\|\xvec\|_2^2+1}$, perform Phase Estimation on $(U_{\cal AB},\ket{\psi_0})$ with precision $O\left(\frac{(\epsilon\cdot \gamma)^2}{\sqrt{\s \cdot \ET}}\right)$, measure the phase value ``0'' to obtain the resulting quantum state $\ket{\theta'}$.
 \STATE By performing the projective measurement $\{I-\ket{\Phi_{b}}\bra{\Phi_{b}}, \ket{\Phi_{b}}\bra{\Phi_{b}}\}$ on the resulting state $\ket{\theta'}$, obtain the quantum state $\ket{\xvec'}$. Repeat until successful.
\end{algorithmic}
\end{algorithm} 

 \begin{theorem} \label{thm:QLSalgorithmcomplexity}With high probability, the output of \cref{alg:newQLS2} is a quantum state $\ket{\xvec'}$ that is $\epsilon$ close to the solution state $\ket{\xvec}$ of the linear system $A\xvec=\bvec$. 
 \begin{itemize}
     \item When the norm of $\xvec$ is unknown, the algorithm requires $
O\!\bigl( \left(\frac{1}{\|\xvec\|_2^6}+\|\xvec\|_2^2\right) \sqrt{\s \cdot ET} / \epsilon^{2}\bigr)
$ queries of the controlled unitary of $U_{\cal AB}$ and the time complexity of our new QLS algorithm is $
\widetilde{O}\!\bigl( (\frac{1}{\|\xvec\|_2^6}+\|\xvec\|_2^2) \sqrt{\s^3 \cdot ET} / \epsilon^{2}\bigr),
$.
     \item When the norm of $\xvec$ is known, the algorithm requires $
O\!\bigl( \sqrt{\s \cdot ET} / \epsilon^{2}\bigr)
$ queries of the controlled unitary of $U_{\cal AB}$ and the time complexity of our new QLS algorithm is $\widetilde{O}\!\bigl( \sqrt{\s^3 \cdot ET} / \epsilon^{2}\bigr).$
 \end{itemize}
\end{theorem}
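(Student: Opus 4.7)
The plan is to chain together the three stages of \cref{alg:newQLS2} by carefully tracking how the accuracy and success probability of each stage compose. The heart of the argument is already contained in \cref{thm:statepreparation}: if we run phase estimation on $(U_{\cal AB}, \ket{\Phi_b})$ with precision $\delta$ and measure phase $0$, we obtain a state $\ket{\theta'}$ satisfying $\frac{1}{2}\|\ket{\theta'}\bra{\theta'} - \ket{\theta^*}\bra{\theta^*}\|_1 \leq \epsilon'$ provided $\delta = O(\epsilon'^2/\sqrt{\s\cdot \ET})$, and the success probability of obtaining phase $0$ is at least $\gamma_1 := 1/(1+\|\xvec\|_2^2)$. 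So my first task is just to invoke that theorem with the precision chosen in line~2 of the algorithm.

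Second, I need to analyze how the error propagates through the final projective measurement $\{I - \ket{\Phi_b}\bra{\Phi_b},\ket{\Phi_b}\bra{\Phi_b}\}$. On the ideal state $\ket{\theta^*}$, this measurement yields the ``not-$\ket{\Phi_b}$'' branch with probability $\gamma := \|\xvec\|_2^2/(1+\|\xvec\|_2^2)$, and the post-measurement state is exactly $\ket{\xvec}$. A standard gentle-measurement style bound then gives that the post-measurement state on the perturbed $\ket{\theta'}$ is within $O(\epsilon'/\sqrt{\gamma})$ of $\ket{\xvec}$ in trace distance. Since the algorithm picks precision so that Theorem~\ref{thm:statepreparation} yields $\epsilon' = O(\epsilon \gamma)$, the resulting state is within $O(\epsilon\sqrt{\gamma}) \leq O(\epsilon)$ of $\ket{\xvec}$, confirming correctness.

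Third, I combine costs. A single attempt (one run of phase estimation followed by a projective measurement) succeeds with probability $\gamma_1\cdot\gamma = \|\xvec\|_2^2/(1+\|\xvec\|_2^2)^2$, so the expected number of repetitions is $O((1+\|\xvec\|_2^2)^2/\|\xvec\|_2^2)$. The query cost per attempt, by \cref{thm:statepreparation} with $\epsilon'=\epsilon\gamma$, is $O(\sqrt{\s\cdot \ET}/(\epsilon\gamma)^2)$. Multiplying and simplifying using $1/\gamma = (1+\|\xvec\|_2^2)/\|\xvec\|_2^2$ gives a total query count proportional to $(1+\|\xvec\|_2^2)^4/\|\xvec\|_2^6$, which splits into the two regimes $\|\xvec\|_2^2 \geq 1$ and $\|\xvec\|_2^2<1$ to yield the claimed factor $\bigl(\|\xvec\|_2^2 + 1/\|\xvec\|_2^6\bigr)$. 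Multiplying queries by the per-query implementation cost $\widetilde{O}(\s)$ from \cref{thm:unitary-from-reflections} gives the time complexity $\widetilde{O}\!\bigl((\|\xvec\|_2^2+1/\|\xvec\|_2^6)\sqrt{\s^3\cdot \ET}/\epsilon^2\bigr)$.

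Finally, for the case where $\|\xvec\|$ is known, I rescale $A\xvec = \bvec$ to $A\xvec = \bvec/\|\xvec\|$, so that the new solution has unit norm. The parameters $\s$ and $\ET$ are unaffected up to constants (rescaling $\bvec$ uniformly only rescales $\pvec$), and the prefactor $(\|\xvec\|_2^2 + 1/\|\xvec\|_2^6)$ becomes $O(1)$, giving the improved bounds. The main obstacle I anticipate is the error-propagation step: one must verify that the gentle-measurement bound gives the right $1/\sqrt{\gamma}$ factor rather than a worse $1/\gamma$, as this single point determines whether the final exponent is $6$ or something larger; once that is nailed down, the rest is bookkeeping combining \cref{thm:statepreparation} and \cref{thm:unitary-from-reflections}.
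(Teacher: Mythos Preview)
Your proposal is essentially correct and follows the same three-stage analysis as the paper. One clarification is in order, though, because your stated worry at the end is misplaced. The post-measurement error blow-up is in fact $O(\epsilon'/\gamma)$, not $O(\epsilon'/\sqrt{\gamma})$: the paper proves this directly via the decomposition
\[
\widetilde{\rho}-\widetilde{\sigma}=\frac{\Pi(\rho-\sigma)\Pi}{\gamma}+\Bigl(\frac{1}{\gamma}-\frac{1}{\gamma'}\Bigr)\Pi\sigma\Pi
\]
together with $\|\Pi X\Pi\|_1\le\|X\|_1$. There is no ``gentle measurement'' lemma that yields $1/\sqrt{\gamma}$ here. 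But this does \emph{not} change the exponent: since line~2 of the algorithm already fixes the phase-estimation precision so that $\epsilon'=O(\epsilon\gamma)$, the $1/\gamma$ blow-up still gives final error $O(\epsilon)$, and your per-attempt query cost $O\bigl(\sqrt{\s\cdot \ET}/(\epsilon\gamma)^2\bigr)$ is already correct. Multiplying by the $O\bigl((1+\|\xvec\|_2^2)^2/\|\xvec\|_2^2\bigr)$ repetitions gives $(1+\|\xvec\|_2^2)^4/\|\xvec\|_2^6$ either way, so the exponent~$6$ is robust to which blow-up you use; your arithmetic in the third paragraph is fine and matches the paper exactly.

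A minor caveat on the known-norm case: rescaling $\bvec\mapsto\bvec/\|\xvec\|$ rescales $\pvec$ by $1/\|\xvec\|$ and also alters the $b_i^2$ contribution to each $d_i$, so $\ET$ is \emph{not} preserved up to constants (contrary to your parenthetical). The theorem's $\ET$ in that bullet should be read as the value for the rescaled system; with that reading, the bound follows immediately from the general case specialized to $\|\xvec\|=1$, which is also how the paper argues it.
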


\begin{proof}
    We analyze the complexity of \cref{alg:newQLS2} for each step:

\begin{enumerate}
    \item \textbf{Preparation of the initial state and unitary $U_{\cal AB}$ (Step 1):} \\
   By \cref{thm:unitary-from-reflections}, the time complexity of preparing $\ket{\Phi_{b}}$ and implementing $U_{\cal AB}$ is $\widetilde{O}(\s)$.

    Define the following quantum states:
     \[
    \ket{\theta^*} = \frac{1}{\sqrt{1+\pvec^{\top}\bvec}} \left( \sum_{v_j} \theta_j  \ket{\Phi_j} 
    + \ket{\Phi_b} \right).
    \]
    \[
    \ket{\p} = \frac{1}{1+\pvec^{\top}\bvec}\sum_{u_i \in V} p_i \sqrt{d_{u_i}} \ket{\Psi_i}.
    \]

    By \cref{lemm:statedecompositonwalk}, we have:
    \[
  \ket{\Phi_b}= \frac{1}{\sqrt{1+\pvec^{\top}\bvec}}\ket{\theta^*}-(I-\Pi_{\A}) \ket{\p}.
 \]    

    \item \textbf{Projection via Phase Estimation (Step 2):} \\
    By \cref{thm:statepreparation}, using $O\left(\frac{\sqrt{\s \cdot \ET}}{(\epsilon \cdot \gamma)^2}\right)$ queries of $ U_{\cal AB} $, with probability at least $ 1-\gamma=\frac{1}{1+\|\xvec\|_2^2} $, we obtain a quantum state $ \ket{\theta'} $ satisfying:
    \[
    \|\ket{\theta'}\bra{\theta'} - \ket{\theta^*}\bra{\theta^*}\|_1 \leq O(\epsilon \cdot \gamma).
    \]

    \item \textbf{Projective Measurement (Step 3):} \\
Together with the unitary that prepares the quantum state $\ket{\Phi_b}$, the projective measurement 
\[
\{\,\ket{\Phi_{b}}\bra{\Phi_{b}},\; I-\ket{\Phi_{b}}\bra{\Phi_{b}}\,\}
\]
can be implemented using a single multi-controlled Toffoli gate. When this projective measurement is applied to the ideal state $\ket{\theta^*}$, it succeeds with probability
\[
\gamma= Tr\!\bigl[(I-\ket{\Phi_b}\bra{\Phi_b})\ket{\theta^*}\bra{\theta^*}\bigr] \;=\; \frac{\|\xvec\|_2^2}{1+\|\xvec\|_2^2}.
\]
The resulting normalized state is
\[
\ket{\xvec} \;=\; \frac{(I-\ket{\Phi_{b}}\bra{\Phi_{b}})\ket{\theta^*}}{\|(I-\ket{\Phi_{b}}\bra{\Phi_{b}})\ket{\theta^*}\|},
\]
which encodes the solution $\xvec$ to the linear system $A\xvec=\bvec$.

\medskip
Now suppose the input state is only approximately correct: let $\ket{\theta'}$ be such that
 \[
    \|\ket{\theta'}\bra{\theta'} - \ket{\theta^*}\bra{\theta^*}\|_1 \leq O(\epsilon \cdot \gamma).
    \]
    We show that: 
(i) the success probability with $\ket{\theta'}$ is close to $\gamma$, and 
(ii) the postselected state $\ket{\xvec'}$ is $\epsilon$-close to $\ket{\xvec}$ in trace distance.

\paragraph{(i) Success probability stability.}
Let $\gamma=Tr(\Pi\rho)$ and $\gamma'=Tr(\Pi\sigma)$ with $\Pi := I-\ket{\Phi_b}\bra{\Phi_b}$, $\rho:=\ket{\theta^*}\bra{\theta^*}$, and $\sigma:=\ket{\theta'}\bra{\theta'}$. 

By the \cite[Theorem 9.1]{nielsen2002QCQI}, the trace distance between two quantum states upper bounds the difference in their measurement outcome probabilities. 
Thus the success probabilities are 
\[
(1-\epsilon)\gamma \;\leq\; \gamma' \;\leq\; (1+\epsilon)\gamma.
\]

\paragraph{(ii) Postselected state closeness.}
Define the normalized postselected states
\[
\widetilde{\rho} \;=\; \frac{\Pi\rho\Pi}{\gamma} \;=\; \ket{\xvec}\bra{\xvec}, 
\qquad 
\widetilde{\sigma} \;=\; \frac{\Pi\sigma\Pi}{\gamma'} \;=\; \ket{\xvec'}\bra{\xvec'}.
\]

Write
\[
\widetilde{\rho}-\widetilde{\sigma} \;=\; \frac{\Pi(\rho-\sigma)\Pi}{\gamma} \;+\; \Bigl(\frac{1}{\gamma}-\frac{1}{\gamma'}\Bigr)\Pi\sigma\Pi.
\]
 Taking the trace norm and using the triangle inequality gives
\[
\|\widetilde{\rho}-\widetilde{\sigma}\|_1 
\;\leq\; \frac{\|\Pi(\rho-\sigma)\Pi\|_1}{\gamma} \;+\; \Bigl|\frac{1}{\gamma}-\frac{1}{\gamma'}\Bigr|\,\|\Pi\sigma\Pi\|_1.
\]

For the first term, note that for any operator $X$ and projector $\Pi$,
$\|\Pi X \Pi\|_1 \le \|X\|_1$. Hence
\[
\|\Pi(\rho-\sigma)\Pi\|_1 \;\leq\; \|\rho-\sigma\|_1.
\]

For the second term, $\|\Pi\sigma\Pi\|_1 = \gamma'$, and
\[
\Bigl|\frac{1}{\gamma}-\frac{1}{\gamma'}\Bigr| = \frac{|\gamma'-\gamma|}{
\gamma\gamma'}.
\]
So the second term equals $|\gamma'-\gamma|/\gamma$.

Combining, we have
\[
\|\widetilde{\rho}-\widetilde{\sigma}\|_1 \;\leq\; \frac{\|\rho-\sigma\|_1}{\gamma} + \frac{|\gamma'-\gamma|}{\gamma}.
\]

 Since $\|\rho-\sigma\|_1 \leq O(\epsilon \cdot \gamma)$ and $|\gamma'-\gamma|\leq O( \epsilon \cdot \gamma)$. Thus
\[
\|\widetilde{\rho}-\widetilde{\sigma}\|_1 \;\leq O(\epsilon).
\]

Therefore, 
 \[
    \|\ket{\xvec'}\bra{\xvec'} - \ket{\xvec}\bra{\xvec}\|_1 \leq O(\epsilon).
    \]
\end{enumerate}

\begin{itemize}
\item 
When the norm is known, we can always rescale the linear system such that the new linear system's solution has  
$\|\xvec\|$ equals 1, the success probability of obtaining the desired state 
$\ket{\xvec'}$ in each iteration is at least 
$1/2$ for both Step 2 and Step 3. Therefore, repeat $O(1)$ iterations will return an $\epsilon$-close state $\ket{\xvec'}$ such that \[
    \|\ket{\xvec'}\bra{\xvec'} - \ket{\xvec}\bra{\xvec}\|_1 \leq O(\epsilon).
    \] Thus the query complexity is $O\left(\sqrt{\s \cdot \ET}/\epsilon^2\right)$ and the time complexity is $\widetilde{O}\!\bigl( \sqrt{\s^3 \cdot ET} / \epsilon^{2}\bigr).$

\item When the norm $\|\xvec\|$ is unkown and $ \gamma=\frac{\|\xvec\|_2^2}{1+\|\xvec\|_2^2}$, the success probability of Step 2 and 3 can be boost to a constant after repeating $O(\frac{1}{\|\xvec\|_2^2}+\|\xvec\|_2^2)$ times. Thus the query complexity is $
O\!\bigl( (\frac{1}{\|\xvec\|_2^2}+\|\xvec\|_2^2)\cdot \left(\frac{1}{\|\xvec\|_2^4}+1\right) \sqrt{\s \cdot ET} / \epsilon^{2}\bigr),
$ and the time complexity is $\widetilde{O}\left(\left(\frac{1}{\|\xvec\|_2^2}+\|\xvec\|_2^2\right) \cdot \left(\frac{1}{\|\xvec\|_2^4}+1\right)\sqrt{\s^3 \cdot \ET}/\epsilon^2 \right)$ .
\end{itemize}
\end{proof}

\section{Welded Tree Graph Example} \label{sec:example}
To illustrate the advantages of our new QLS algorithm over existing approaches, we apply it to a linear system $B\xvec=\bvec$ associated with the welded tree graph. This example demonstrates how using the structure of the input vector $\bvec$ can lead to better performance in specific instances of problems.  In particular, we show that our new QLS algorithm achieves the complexity $O(1/\epsilon^2)$ while $\kappa(B)$ is exponentially large.

\begin{definition}[Welded Tree Graph]
A \emph{welded tree graph} $G$ is constructed by taking two complete binary trees of depth $n$ and connecting the leaves of one tree to the leaves of the other using a random alternating cycle. The root of the first tree is designated as the entry vertex $u$, and the root of the second tree as the exit vertex $v$.
\end{definition}

In particular, following \cite[Section 4]{vishnoi2013lx}, we will interpret the welded tree graph as an electrical network and each edge having unit resistance.

\begin{definition}[Adjacency Matrix]
Let $G = (V, E)$ be an undirected graph with $N$ vertices labeled $v_1, v_2, \ldots, v_N$. The \emph{adjacency matrix} $A$ of $G$ is defined as
\[
A_{ij} = \begin{cases}
1 & \text{if } \{v_i, v_j\} \in E, \\
0 & \text{otherwise}.
\end{cases}
\]
\end{definition}

\begin{definition}[Laplacian Matrix]
Let $G = (V, E)$ be an undirected graph with $N$ vertices. The \emph{Laplacian matrix} $L_{N\times N}$ of $G$ is defined as
\[
L = D - A,
\]
where $A$ is the adjacency matrix of $G$, and $D$ is the diagonal degree matrix with $D_{ii}$ equal to the degree of the vertex $v_i$.
\end{definition}

\begin{definition}[Vertex-Edge Incidence Matrix]\label{def:incidence}
  Let $G=(V,E)$ be an undirected graph.
  Assign an arbitrary orientation to each edge $e\in E$ to get a directed graph $G'$.
  The incidence matrix $B_{N\times M}$ of $G'$ with rows indexed by vertices and columns indexed by edges is defined as follows:
  \begin{equation*}
    B(v,e)= \begin{cases}
      1 & \quad  \text{ if $v$ is the head of $ e$} \\
      -1 & \quad   \text{ if $v$ is the tail of $e$}\\
      0 &\quad \text{ Otherwise. }
    \end{cases}
  \end{equation*}
\end{definition}

\begin{lemma}[Lemma 4.1 in \cite{vishnoi2013lx}]\label{lemm:LapInci}
Let $G$ be a graph with (arbitrarily chosen) incidence
matrix $B$ and Laplacian matrix $L$. Then, $L=BB^{\top}$.
\end{lemma}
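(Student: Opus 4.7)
The plan is a direct entry-by-entry computation of $BB^{\top}$ and a comparison with $L = D - A$. Since $B$ is $N \times M$, the product $BB^{\top}$ is $N \times N$, and its $(i,j)$-entry can be written as $(BB^{\top})_{ij} = \sum_{e \in E} B(i,e)\,B(j,e)$. Because each column of $B$ has exactly two nonzero entries (one $+1$ at the head and one $-1$ at the tail of the oriented edge), only the edges incident to both $v_i$ and $v_j$ contribute to this sum, which makes the case analysis small.

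First I would handle the diagonal entries $i = j$. For each edge $e$ incident to $v_i$, the contribution is $B(i,e)^2 = 1$ regardless of orientation, and edges not incident to $v_i$ contribute $0$. Hence $(BB^{\top})_{ii}$ equals the number of edges incident to $v_i$, which is $\deg(v_i) = D_{ii}$.

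Next I would handle the off-diagonal entries $i \neq j$. An edge $e$ contributes a nonzero term only if both $v_i$ and $v_j$ are endpoints of $e$, i.e., only if $e = \{v_i, v_j\} \in E$. In that case, by the orientation convention in \cref{def:incidence}, one of $v_i, v_j$ is the head and the other is the tail, so $B(i,e)\,B(j,e) = (+1)(-1) = -1$. Therefore $(BB^{\top})_{ij} = -1$ if $\{v_i,v_j\} \in E$ and $0$ otherwise, which is precisely $-A_{ij}$.

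Combining the two cases gives $BB^{\top} = D - A = L$, as required. I do not anticipate any real obstacle here; the only subtlety is confirming that the identity is independent of the arbitrary orientation, which is immediate because the diagonal terms are squared and each off-diagonal edge contributes exactly one $(+1)(-1)$ pair no matter which endpoint is designated as the head.
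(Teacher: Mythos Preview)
Your proof is correct and is the standard entry-by-entry computation; the paper itself does not give a proof of this lemma, merely citing it as Lemma~4.1 in \cite{vishnoi2013lx}. There is nothing to compare against, and your argument stands on its own.
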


Let  $\bvec_{u,v}$ be a vector that is $1$ at $s$, $-1$ at $t$ and $0$ otherwise. Let $B$ be the incidence matrix of the welded tree graph.  We define the incidence linear system as 
\begin{equation}\label{eqn:incidenc}
B \xvec = \bvec_{u,v}.
\end{equation}

The $u$-$v$ electrical flow can be obtained by solving the incidence linear system $\xvec = B^{+}\bvec$~\cite[Lemma 6]{wang2017efficient}. As illustrated in \cref{fig:welded}, the resulting $u$-$v$ electrical flow splits evenly from each parent vertex to its child vertices in the left binary tree and symmetrically merges from child vertices to the parent in the right tree, consistent with Kirchhoff’s law and Ohm’s law in a unit resistance network.

\begin{figure}[ht!]
    \centering
    \includegraphics[width=0.7\textwidth]{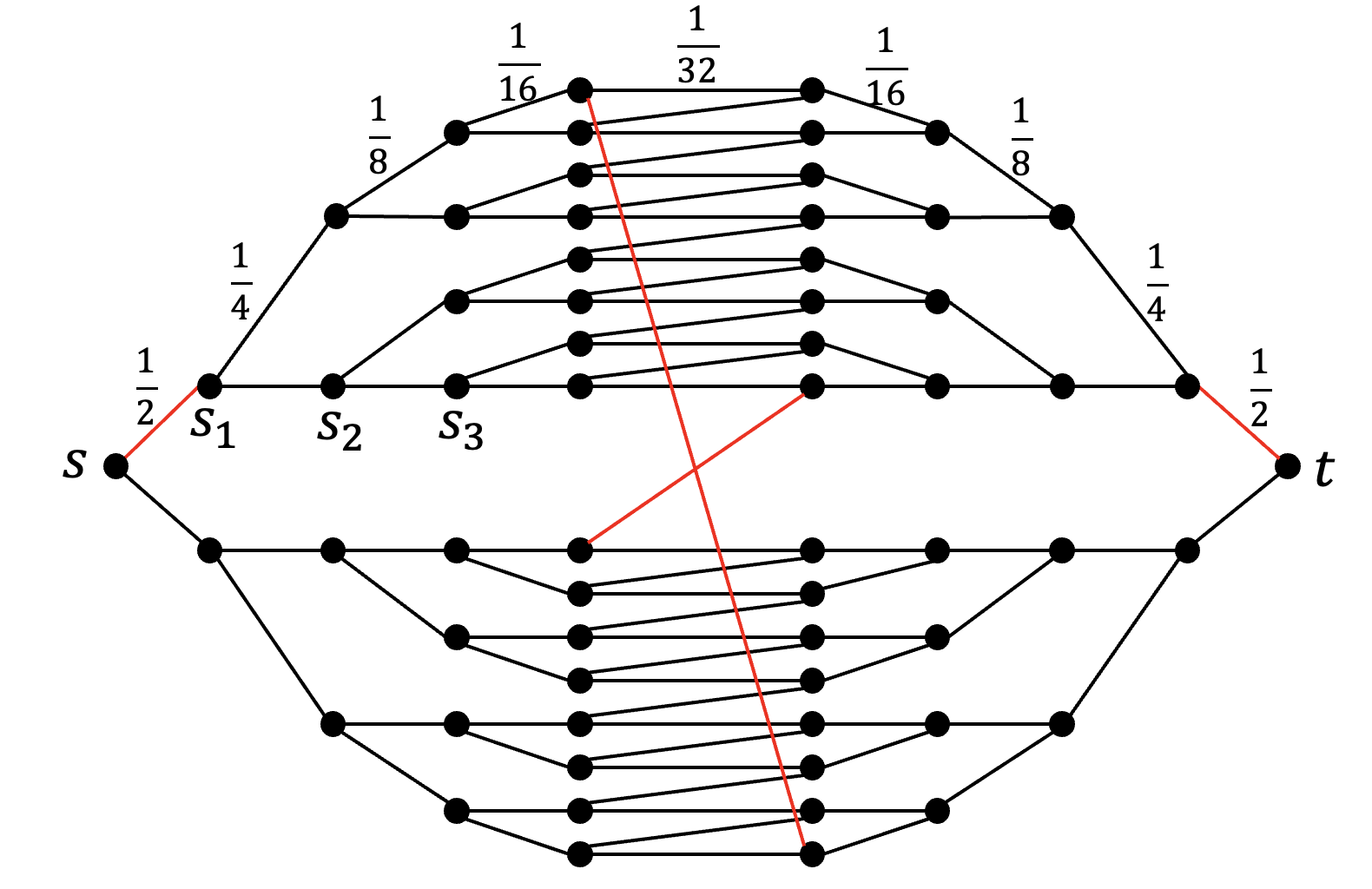}
    \caption{An example welded tree graph with $n=4$. The electrical flow is the solution to the incidence linear system $B\xvec=\bvec_{u,v}$, where each entry of $\xvec$ represents the flow along an edge. The magnitude of the flow corresponds to the amount of electrical current, and the sign is determined by the orientation specified in the vertex-edge incidence matrix $B$. In particular, these four red edges forms a cut for welded tree graph of this type. }
  \label{fig:welded}
\end{figure}

According to \cite[Section 12]{spielman2019spectral} and \cite[Section 2]{li2025multidimensional}, $\pvec$ corresponds to \emph{induced potential vector} of the electrical network resulting from injecting a unit of electrical flow into the vertex $u$ and extracting it at the vertex $v$. Therefore, we can compute it on the welded tree graph using Series Law and Parallel Law in ~\cite[Section 2.3: I. Series Law;  II. Parallel Law]{lyons2017probability} together with Kirchhoff's law and Ohm's law.  

 Our new QLS algorithm achieves complexity $\widetilde{O}\left((\frac{1}{\|\xvec\|_2^6}+\|\xvec\|_2^2)\sqrt{\s^3 \cdot \ET}/\epsilon^2\right) $, where $\s$ is the sparsity of the augmented matrix $H=[B, -\bvec]$, $\xvec=B^{+}\bvec_{s,t}$, and $\pvec = (BB^{\top})^{+} \bvec_{u,v}$. In the welded tree graph, the degree is at most $3$ and $\bvec_{u,v}$ has only two nonzero entries, so the sparsity $\s$ of the augmented matrix $H$ is bounded by a constant.  We next show that $\ET$ and $\|\xvec\|$ are constants, while $\kappa(B)$ can be exponentially large.

\begin{enumerate}
    \item The graph consists of two complete binary trees of depth $n$, each with $2^n$ leaves.  Each edge has unit resistance. Since we are sending on unit of electrical flow from $s$ to $t$, using Series Law and Parallel Law, the effective resistance $\mathcal{R}_{u,v}=\|\xvec\|_2^2= \bvec_{u,v}^{\top}(BB^{\top})^{+}\bvec_{u,v}$ \cite[Definition 4.1]{vishnoi2013lx} between $u$ and $v$ is equal to
    $$2\sum_{i=1}^{n} 2^{-i}+ 2^{-(n+1)} = 2- 2^{-(n-1)}+2^{-(n+1)}=2-\frac{3}{2}\cdot 2^{-n}.$$
    \item There are infinity number of potential vectors satisfy Kirchoff's Law and Ohm's Law \cite[Section 4.1]{vishnoi2013lx}, in our case we choose the one that has the minimum $\ell_2$ norm. We have the potential at the source node $u$ is:\[
    p_u = 1 - \frac{3}{4} \cdot 2^{-n}, \quad \text{so} \quad (p_u)^2 = 1 - \frac{3}{2} \cdot 2^{-n}+\frac{9}{16} \cdot 2^{-2n}.
    \]
     The potential at the sink node $v$ is:
    \[
    p_t = -1 + \frac{3}{4} \cdot 2^{-n}, \quad \text{so} \quad (p_t)^2 = = 1 - \frac{3}{2} \cdot 2^{-n}+\frac{9}{16} \cdot 2^{-2n}.
    \]
    The number of vertices at depth $k$ in either tree is $2^k$ and for vertices at depth $k \leq n$ from $s$, the potential is:
    \[
    p_x = 2^{-k} - \frac{3}{4} \cdot 2^{-n},
    \]
    and similarly, for those at depth $k$ from $t$:
    \[
    p_y = -2^{-k} + \frac{3}{4} \cdot 2^{-n}.
    \]
One can verify that this choice of potential vector has the minimum $\ell_2$ norm, as it lies in the column space of $B$ and is orthogonal to the null space of $B^\top$, which is spanned by the all ones vector $[1, 1, \ldots, 1]^{\top}$.

\item
We compute the squared norm as:
\[
\|\pvec\|_2^2 = \sum_{x \in V} (p_x)^2.
\]

The contribution of the two root nodes is:
\[
(p_s)^2 + (p_t)^2 \approx 1 + 1 = 2.
\]

For internal nodes at depth $k = 1$ to $n$, on both sides of the graph, the dominant term in the potential is $2^{-k}$, so:
\[
v_k^2 \approx (2^{-k})^2 = 4^{-k}.
\]

Each level has $2^k$ nodes, and both trees contribute, so the total contribution is:
\[
\sum_{k=1}^n 2 \cdot 2^k \cdot 4^{-k} = 2 \sum_{k=1}^n 2^k \cdot 2^{-2k} = 2 \sum_{k=1}^n 2^{-k} \leq 2.
\]

Therefore
\[
\|\pvec\|_2^2 \approx 2 + 2 = 4 \quad \Rightarrow \quad \|\pvec\| \approx \sqrt{4} = 2.
\] 

\end{enumerate}

Since the squared $\ell_2$ norm of the row indexed by $x$ of $H=[B \mid \bvec]$ is constant, that is, $d_x = O(1)$ for every $x \in V$, we have
\[
\ET = \sum_{x \in V} p_x^2 d_x=O(1).
\]
Therefore, applying our new QLS algorithm with complexity $O(1/\epsilon^2)$ returns a quantum state that is $\epsilon$-close to the $u$-$v$ electrical flow state $\ket{\xvec}$. 

Using existing QLS algorithms, solving this incidence linear system incurs complexity $O(\kappa(B)\log(1/\epsilon))$, where $\kappa(B)$ is the condition number of the incidence matrix $B$. We next show that in certain instances of the welded tree graph, the condition number $\kappa(B)$ of the incidence matrix is exponentially large at the depth of the tree $n$. As a result, no existing QLS algorithm can efficiently prepare the solution as a quantum state by directly solving the incidence linear system $B\vec{y} = \vec{b}_{u,v}$. This follows from bounding the condition number of the Laplacian matrix $L = BB^\top$, since
\[
\kappa(B) = \sqrt{\kappa(L)} = \sqrt{\|L\| \cdot \|L^+\|}.
\]

\begin{itemize}
    \item \textbf{Bound on $\|L\|$.} Since the maximum degree of the welded tree graph is at most $3$, we have
    \[
    \|L\| \leq 3.
    \]
    \item \textbf{Bound on $\|L^+\|$ via the isoperimetric ratio.} Let $S \subseteq V$ be any subset of vertices. The \emph{edge boundary} of $S$ is defined as
    \[
    \partial(S) := \{ (a,b) \in E : a \in S, b \notin S \}.
    \]
    The \emph{isoperimetric ratio} of a graph $G = (V, E)$ is
    \[
    \theta_G := \min_{|S| \leq \frac{|V|}{2}} \frac{|\partial(S)|}{|S|}.
    \]
    By \cite[Theorem 20.1.1]{spielman2019spectral}, we have
    \[
    \|L^+\| \geq \frac{2}{\theta_G}.
    \]
    For certain welded tree graph (see \cref{fig:welded} as an example), consider $S$ to be the set of all non-root vertices in one of the two binary trees. Then $|S| = 2^{n+1} - 2$, and $|\partial(S)| = 4$ (two edges connecting to the root and two edges in the middle layer), giving
    \[
    \theta_G \leq \frac{4}{2^{n+1} - 2} = O(2^{-n}).
    \]
    Therefore,
    \[
    \|L^+\| \geq \frac{2}{\theta_G} = \Omega(2^n).
    \]
   \end{itemize}



Combining both bounds, we have
    \[
    \kappa(B) = \sqrt{\|L\| \cdot \|L^+\|}  = \Omega(2^{n/2}).
    \]

Hence, existing QLS algorithms cannot efficiently solve this problem due to the exponentially large condition number. Moreover, it is unclear how the approach \cite{HHL09} that attempts to invert only the well-conditioned component of $\bvec$ would apply here, since we have no information on how $\bvec$ aligns with the top eigenspaces of $B$. 

Furthermore, the existing quantum walk approach based on electrical networks~\cite{belovs2013ElectricWalks,piddock2019electricfind, apers2022elfs} that incorporate structural information from $\bvec$ still requires exponential time in this setting. The reason lies in the choice of the potential vector \footnote{There are infinite number of choices of the potential vector satisfies both Kirchoff's Law and Ohm's Law in an electrical network.}: their analysis requires the target potential $p_v = 0$, whereas in our setting the potential in the sink is set to $p_v = -1 + \frac{3}{4} \cdot 2^{-n}$. In fact, our choice of potential $\pvec$ has the minimum $\ell_2$ norm among all the possible choices. This distinction arises because, in our setting, we assume knowledge of the target vertex $v$, which allows us to inject and extract flow precisely at the desired locations. In contrast, most quantum walk frameworks are designed for search problems, where the goal is to find the marked vertex $v$ not to assume it is already known. This subtle but critical difference causes the $\ell_2$ norm of their potential vector to be exponentially larger than the minimum-$\ell_2$ solution, resulting in exponentially worse complexity bounds. 

In contrast, an alternative $u$-$v$ electrical flow in the welded tree graph can be efficiently prepared as a quantum state using the multidimensional electrical network framework~\cite{li2025multidimensional}, which builds the connection between multidimensional quantum walk \cite{jeffery2023multidimensional} and linear systems. The key idea in that framework is to modify the linear system rather than directly solving the incidence linear system $B\vec{y} = \vec{b}_{u,v}$ and, therefore, the resulting state is different from the state we prepared here.

Our new QLS algorithm enables efficient preparation of the $u$–$v$ electrical flow state in the welded tree graph, providing an exponential speedup over existing approaches. The natural question that arises is whether this capability can be leveraged to solve specific computational problems. One potential direction is to apply this result to the task of finding an $u$–$v$ path in the welded tree graph. Although exponential quantum speedups have been demonstrated for pathfinding problems in several classes of graphs \cite{li2025exponential,li2025multidimensional,li2024exponential}, the complexity of finding an $u$–$v$ path in the welded tree graph remains a big open question in the field of quantum query complexity \cite{aaronson2021open}. Our improved method for generating the $u$–$v$ electrical flow state may offer new insights into this open problem.

\newcommand{\mlmon}{m}
\newcommand{\mlmonb}{m'}
\newcommand{\svec}{\vec{y}} 
\newcommand{\tvec}{\vec{z}}

\section{New Quantum Algorithm for Multivariate Polynomial Systems}\label{sec:newalgpolysys}

In this section, we illustrate how our new QLS algorithm can be leveraged to solve systems of multivariate polynomials. This problem can be reduced to solving an exponentially large linear system known as the (Boolean) Macaulay linear system $A\xvec=\bvec$ \cite{chen2022quantum,ding2023limitations}, where the sparse-access oracle access to the augmented matrix
$H=[A ,-\bvec]$ can be instantiated via an implicit function. A direct use of any QLS algorithm is blocked by the tight lower bound on the truncated QLS condition number established by \cite{ding2023limitations}, which also applies to our new QLS algorithm and traditional preconditioning techniques. 

To overcome this barrier, we introduce a new instance-aware rescaling scheme that produces the weighted Boolean Macaulay linear system. Although the condition number of the weighted matrix remains exponentially large, still ruling out all previous QLS algorithms, we show that our new QLS algorithm could potentially work in this regime for certain polynomial systems. Consequently, the door reopens to meaningful quantum speedups for problems that reduce to multivariate polynomial solving, including the graph isomorphism problem.

Next, in \cref{subsec:BooleanLinear}, we restate the problem of solving systems of polynomial equations and define the associated Boolean Macaulay linear system as in \cite{ding2023limitations}. In \cref{subsec:weightBoolinear}, we proposed a new rescaling strategy to bypass the truncated QLS condition number lower bound established in~\cite{ding2023limitations} by introducing a weighted Boolean Macaulay linear system. In \cref{subsec:polysysalgorithm}, we present our new quantum algorithm and its time complexity for solving multivariate polynomial systems, which exploits the structure of the weighted Boolean Macaulay linear system using our new QLS algorithm. In \cref{subsec:polyexample}, we illustrate the effectiveness of our new QLS based algorithm using a toy polynomial system, showing that it achieves an exponential advantage over existing QLS based approaches. In \cref{subsec:MISpolysys}, we further extend the analysis to the planted MIS problem and identify the conditions under which our algorithm runs in polynomial time.

\subsection{Boolean Macaulay Linear System from Boolean Polynomial Systems} \label{subsec:BooleanLinear}

In this section, we revisit the Boolean Macaulay linear systems over $\CC$ that arise from multivariate polynomial equations, restating the formulation and key properties introduced in \cite{ding2023limitations}.

\begin{problem}[Boolean Multivariate Quadratic System over $\mathbb{C}$] \label{prob:MQC}
    \textbf{Input:} A set of quadratic polynomials $\mathcal{F} \subseteq \mathbb{C}[x_1, \dots, x_n]$ with $\mathcal{F} = \{f_1, \dots, f_m\}$ and $\deg(f_i) \leq 2$ for all $i \in \{1, \dots, m\}$.\\
    \textbf{Output:} A vector $\bs \in \{0,1\}^n$ such that $f_1(\bs) = \cdots = f_m(\bs) = 0$ over $\mathbb{C}$, if a solution exists.
\end{problem}

Throughout this section, we assume, without loss of generality, that the polynomial system has a unique Boolean solution. If multiple solutions exist, one can apply the Valiant–Vazirani affine-hashing technique \cite{valiant1986NPEasyAsDetectingUniqueSols}, as used in \cite[Section 3, Red1]{ding2023limitations}, to isolate a single solution with high probability before invoking our methods. In particular, we assume the Hamming weight of the Boolean solution to be $h$.


\begin{definition}[Boolean Macaulay Matrix over $\CC$] \label{def: BooleanMacaulay}
  The \emph{Boolean Macaulay matrix} $\bmacmat$ of degree $d$ for a system $\mathcal{F}_1 = \{f_1,\ldots,f_m\} \subseteq \mathbb{C}[x_1, \dots, x_n]$ is defined as follows. Each row is indexed by a pair $(\mlmon, f)$, where $f \in \mathcal{F}_1$ and $\mlmon$ is a multilinear monomial such that the degree of $\psi(\mlmon f)$ is at most $d$. The entry in the column labeled by a multilinear monomial $\mlmonb$ is the coefficient of $\mlmonb$ in $\psi(\mlmon f)$. The columns are ordered according to a fixed monomial ordering and span all multilinear monomials of degree at most $d$.
\end{definition}

Because the degree of any multilinear monomial in $n$ variables is at most~$n$, we may restrict the Macaulay degree to $d\le n$.  For clarity, we set $d=n$.  Under this choice, every multilinear monomial of degree at most $n$ indexes a column, so the Boolean Macaulay matrix has dimension
\[
\bmacmat \;\in\; \mathbb{C}^{\bigl(m\,2^{\,n}\bigr)\times 2^{\,n}},
\]
that is, $m\cdot 2^{n}$ rows for a pair $(\psi,f_i)$ with $\psi$ a multilinear monomial and $f_i\in\mathcal{F}$, and $2^{n}$ columns.

\begin{definition}\label{def:booleanMacaulaylinearsystem}
  Let $\bmacmat$ be the Boolean Macaulay matrix of a given polynomial system,
  with the last column~$-\bvec$ corresponding to the constant terms of the polynomials.  Let $\bmacmat = [ A , -\bvec ]$.  Then equation
$ A \xvec = \bvec$
is called the {\em Boolean Macaulay linear system}.
\end{definition}

The Boolean Macaulay formulation offers two key advantages for quantum algorithms that solve multivariate polynomial systems.

\begin{enumerate}
  \item \textbf{Structured Sparsity \cite[Lemma 5.4]{ding2023limitations}.}  
        The Boolean Macaulay matrix is $\mathrm{poly}(n)$-sparse, and the locations as well as the values of its nonzero entries can be computed implicitly.  Hence, the matrix fits naturally into the sparse-access model required by QLS algorithms.

  \item \textbf{Efficient Solution Extraction \cite[Theorem 6.1]{ding2023limitations}.}  
        Given samples of the quantum state that encode the solution of the Boolean Macaulay linear system, one can recover a satisfying assignment for the original polynomial system in time polynomial in~$n$.
\end{enumerate}

The principal obstacle is \emph{truncated QLS condition number} \cite[Corollary 5.6]{ding2023limitations}
\[
  \kappa_{\bvec}(A)=\Theta\!\bigl(\,\Vert \xvec \Vert\bigr)
  \;=\;
  \Omega\!\bigl(2^{h/2}\bigr),
\]
which grows exponentially (here $h$ denotes the Hamming weight of the hidden Boolean solution).  
This exponential lower bound on the solution norm indicates that any QLS algorithm, along with traditional preconditioning techniques, is ineffective for generic polynomial systems when approached via the Boolean Macaulay linear system.



\subsection{Weighted Boolean Macaulay Linear System} \label{subsec:weightBoolinear}
In this section, we introduce a new rescaling strategy inspired by the approach in~\cite{li2025multidimensional}, which used an alternative neighborhood technique to demonstrate an exponential oracle separation for the pathfinding problem in welded tree circuit graphs.

Classical preconditioning typically left-multiplies a linear system with a matrix $D$ to maintain its solution: $DA\xvec = D\bvec$. In contrast, our approach modifies the system by right-multiplying with a matrix $D$, leading to the transformed system
\[
AD\zvec = \bvec.
\]

This approach is different from the method in~\cite{li2025multidimensional}, where only adjacency oracle access was available. In that setting, the authors modified the linear system using the alternative neighborhood technique to alter the solution norm. In our case, we have implicit access to the augmented matrix, as in the Boolean Macaulay linear system, which allows us to directly apply a right-multiplicative rescaling strategy. Although the rescaling strategies appear different in form, row augmentation in~\cite{li2025multidimensional} versus right-multiplication here, they serve the same fundamental purpose: modifying the norm of the solution vector relative to the linear system to facilitate efficient quantum state preparation. In this sense, our rescaling scheme can be viewed as an alternative to the technique in~\cite{li2025multidimensional}, extended to implicitly defined Boolean Macaulay linear systems that are derived from polynomial systems.

In~\cite{li2025multidimensional}, the rescaling scheme increased the solution norm to make state preparation more efficient. This allowed their algorithm to prepare a useful quantum state and demonstrate an exponential speedup for $s$-$t$ pathfinding in the welded tree circuit graph under oracle access. In our case, however, we \emph{decrease} the solution norm $\|\xvec\|$ by carefully choosing a right-multiplied diagonal matrix~$D$, circumventing the tight condition number lower bound that obstructs QLS-based approaches to solving Boolean Macaulay systems.

\begin{definition}[Weighted Boolean Macaulay Linear System] \label{def:weighteddiagonal}
Let $D \in \mathbb{R}^{(2^n - 1) \times (2^n - 1)}$ be a diagonal weighting matrix defined as:
\[
D = \operatorname{diag}\Bigl(
\underbrace{\sqrt{\binom{h}{1}}, \dots, \sqrt{\binom{h}{1}}}_{\binom{n}{1} \text{ times}},\,
\ldots,\,
\underbrace{\sqrt{\binom{h}{h}}, \ldots, \sqrt{\binom{h}{h}}}_{\binom{n}{h} \text{ times}},\,
\underbrace{1, \ldots, 1}_{\sum_{i = h+1}^{n} \binom{n}{i} \text{ times}}
\Bigr),
\]
where $h$ is the Hamming weight of the (unique) Boolean solution.

Given the Boolean Macaulay system $A\xvec = \bvec$, we define \emph{weighted Boolean Macaulay linear system} as:
\[
AD\zvec = \bvec.
\]
\end{definition}

The solutions of the original and weighted systems are related by:
\[
\xvec = D\zvec.
\]

This change has the effect of reducing the norm of the solution. Specifically, while the norm of $\xvec$ in the unweighted system is $\|\xvec\| = \sqrt{2^h - 1}$, the transformed solution has the norm $\|\zvec\| = \sqrt{h}$, as shown by direct calculation:
\[
\|\zvec\|_2^2 = \sum_{i=1}^{h} \frac{\binom{h}{i}}{\binom{h}{i}} = h.
\]




\begin{remark}
  The definition of the rescaling matrix in \cref{def:weighteddiagonal} is not unique; alternative choices may be appropriate depending on the specific application or problem. However, in addition to balancing the complexity of our new QLS algorithm, any valid choice of the rescaling matrix $D$ should at least satisfy the following conditions:
  \begin{itemize}
      \item The rescaling matrix $D$ itself must be sparse and invertible;  
\item The matrix $AD$ should remain sparse; and 
\item The norm of the solution after rescaling should be bounded by $\mathrm{poly}(n)$.
  \end{itemize}
\end{remark}

The relationship between the solution $\zvec$ for the weighted Boolean Macaulay linear system and the solution $\xvec$ of the Boolean Macaulay linear system is related in the following way:
\[
D\zvec=\xvec.
\]
In particular, the correspondence between solutions is one-to-one since $D$ is an invertible matrix. Moreover, since the Boolean Macaulay matrix is $\mathrm{poly}(n)$-sparse and $D$ is diagonal in this paper, the weighted Boolean Macaulay matrix $[AD, -\bvec]$ is also $\mathrm{poly}(n)$-sparse. More importantly, the unique Boolean solution to the polynomial system can still be efficiently extracted from the quantum state prepared as a solution to the weighted Boolean Macaulay linear system.

\paragraph{Efficient Solution Extraction.}Suppose the unique Boolean solution $\bs \in \{0,1\}^n$ to the polynomial system has Hamming weight $h$. Let $S \subseteq U = \{x_1, \dots, x_n\}$ denote the set of variables such that $\bs_k = 1$ for all $x_k \in S$, and $S_i$ denote the set of all subsets of $S$ of size exactly $i$. The union $\bigcup_{i=1}^h S_i$ corresponds to the support of $\xvec$ and has cardinality $2^h - 1$. Then the corresponding vector $\xvec$ in the Boolean Macaulay linear system has $\ell_2$ norm $\|\xvec\| = \sqrt{2^h - 1}$, since it encodes all nonempty subsets of the support of $\bs$. Accordingly, the norm of the $\zvec=D^{-1}\xvec$ is equal to $\sqrt{h}$ because the weight parameter $\sqrt{h\choose i}$ ensures that all the support on $S_i$ is scaled to be $1$ for each $1\leq i\leq h$.

The normalized quantum state $\ket{\zvec}$ can be written as:
\[
\left| \zvec \right\rangle = \frac{1}{\sqrt{h}} \sum_{i=1}^{h} \left| H_i \right\rangle,
\quad \text{where} \quad
\left| H_i \right\rangle = \frac{1}{\sqrt{\binom{h}{i}}} \sum_{T \in S_i} \ket{T}.
\]

Measuring the quantum state $\ket{\zvec}$ yields a random subset $R \in \bigcup_{i=1}^h S_i$, where each element $x_k \in R$ satisfies $\s_k = 1$. Given access to copies of $\ket{\zvec}$, our goal is to recover the full set $S$.

\begin{lemma}[Recovery of Boolean Solution via Measurements] \label{lem:solutionextractionlogn}
Let $S \subseteq U = \{x_1, \dots, x_n\}$ be an unknown subset of size $h$, and let $S_i$ denote the collection of all size-$i$ nonempty subsets of $S$. Suppose we are given access to copies of the quantum state
\[
\left| \zvec \right\rangle = \frac{1}{\sqrt{h}} \sum_{i=1}^{h} \left| H_i \right\rangle,
\quad \text{where} \quad
\left| H_i \right\rangle = \frac{1}{\sqrt{\binom{h}{i}}} \sum_{T \in S_i} \ket{T}.
\]
Let $T \subseteq S$ be the outcome of measuring $\left| \zvec \right\rangle$ in the computational basis. 
 With high probability, $O(\log n + \log(1/\delta))$ measurements suffice to recover $S$ with probability at least $1 - \delta$.
\end{lemma}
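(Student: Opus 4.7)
The plan is to exploit the fact that each element of the hidden set $S$ appears in a measured subset with constant probability (roughly $1/2$), so a coupon-collector style argument over $h \le n$ elements will give the stated bound.

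First, I would compute the marginal probability that a fixed element $x_k \in S$ belongs to a random outcome $T$ of measuring $\ket{\zvec}$ in the computational basis. Since each $T_0 \in S_i$ is observed with probability $\tfrac{1}{h\binom{h}{i}}$, and the number of $i$-subsets of $S$ containing $x_k$ is $\binom{h-1}{i-1}$, summing over $i$ gives
\[
\Pr[x_k \in T] \;=\; \sum_{i=1}^{h} \frac{1}{h\binom{h}{i}}\binom{h-1}{i-1} \;=\; \frac{1}{h^2}\sum_{i=1}^{h} i \;=\; \frac{h+1}{2h} \;\ge\; \tfrac{1}{2}.
\]
This is the key quantitative input: every element of $S$ has a constant lower bound on its per-measurement discovery probability, independent of $h$.

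Next, I would run the measurement $m$ times independently to obtain subsets $T^{(1)},\dots,T^{(m)} \subseteq S$ and output $\hat S = \bigcup_{j=1}^{m} T^{(j)}$. For a fixed $x_k \in S$,
\[
\Pr[x_k \notin \hat S] \;=\; \bigl(1 - \tfrac{h+1}{2h}\bigr)^{m} \;=\; \bigl(\tfrac{h-1}{2h}\bigr)^{m} \;\le\; 2^{-m}.
\]
A union bound over the at most $h \le n$ elements of $S$ then gives
\[
\Pr[\hat S \ne S] \;\le\; n \cdot 2^{-m}.
\]
Choosing $m = \lceil \log_2(n/\delta) \rceil = O(\log n + \log(1/\delta))$ makes this at most $\delta$, which is exactly the claimed bound. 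Note also that $\hat S \subseteq S$ always holds, since every measurement outcome is supported on subsets of $S$, so recovering every element is equivalent to exact recovery.

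There is essentially no hard step: the argument is a one line marginal calculation plus a union bound / coupon-collector estimate. The only mildly delicate point is verifying that the marginal $\Pr[x_k \in T]$ is bounded below by a constant uniformly in $h$, which is confirmed by the identity $\binom{h-1}{i-1}/\binom{h}{i} = i/h$ used above. Once this lower bound is in hand, the logarithmic sample complexity follows immediately.
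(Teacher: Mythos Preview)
Your proposal is correct and follows essentially the same approach as the paper: both compute the marginal $\Pr[x_k \in T] = \tfrac{h+1}{2h} \ge \tfrac12$ via the identity $\binom{h-1}{i-1}/\binom{h}{i} = i/h$, then apply a union bound over the $h \le n$ elements to get the $O(\log(n/\delta))$ sample bound. The only cosmetic difference is that the paper phrases the tail bound through geometric random variables $X_j$ with $\Pr[X_j > t] \le 2^{-t}$, whereas you bound $\Pr[x_k \notin \hat S] \le 2^{-m}$ directly; these are the same inequality.
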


\begin{proof}
Each measurement of the quantum state
\[
\left| \zvec \right\rangle = \frac{1}{\sqrt{h}} \sum_{i=1}^{h} \left| H_i \right\rangle,
\quad \text{where} \quad
\left| H_i \right\rangle = \frac{1}{\sqrt{\binom{h}{i}}} \sum_{T \in S_i} \ket{T},
\]
produces a subset $T \subseteq S$ of size $i$ with probability $1/h$, for each $i \in \{1, \dots, h\}$.

Fix an element $x_j \in S$. The probability that $x_j$ is included in the measured subset $T$ is
\[
\Pr[x_j \in T] = \sum_{i=1}^{h} \frac{1}{h} \cdot \frac{i}{h} = \frac{1}{h^2} \sum_{i=1}^h i = \frac{h+1}{2h}.
\]
Thus, each $x_j \in S$ appears in a measurement outcome with constant probability strictly greater than $1/2$.

Let $X_j$ be the number of measurements needed to observe $x_j$ in the sampled subsets. Since each sample includes $x_j$ with probability $p = \frac{h+1}{2h}$, the random variable $X_j$ is geometrically distributed with expectation at most $1/p = \frac{2h}{h+1} < 2$.

To recover the full set $S$, we require all $x_j \in S$ to appear in at least one measurement outcome. Define $X = \max_{j \in [h]} X_j$. Using the union bound and the tail bound for geometric variables:
\[
\Pr[X_j > t] \leq (1 - p)^t \leq \left(1 - \frac{1}{2}\right)^t = 2^{-t}.
\]
Therefore,
\[
\Pr\left[ \exists j \in [h] \text{ s.t.\ } X_j > t \right] \leq h \cdot 2^{-t}.
\]
Setting the right-hand side to be at most $\delta$ and using the fact that $h\leq n$, it suffices to take $t = \log_2(n/\delta)$. This shows that $O(\log_2(n/\delta))$ samples suffice to recover $S$ with probability at least $1 - \delta$.
\end{proof}

\paragraph{New Challenge.} Although this transformation bypasses the truncated QLS condition number lower bound from~\cite{ding2023limitations} and satisfies the input/output constraints required for applying the QLS algorithm, the condition number of the modified matrix $AD$ remains exponentially large.

\begin{proposition}[Exponential Condition Number of the Weighted Boolean Macaulay Matrix]
\label{prop:weighted_condition_number}
Let $A \in \mathbb{C}^{m \cdot 2^n \times (2^n - 1)}$ be the matrix in the Boolean Macaulay linear system $A\xvec=\bvec$, and let $D \in \mathbb{R}^{(2^n - 1) \times (2^n - 1)}$ be the diagonal weighting matrix in~\cref{def:weighteddiagonal}.  Then, the condition number of the weighted matrix $AD$ satisfies
\[
\kappa(AD) = \Omega(2^{h/2}),
\]
where $h$ is the Hamming weight of the unique Boolean solution. In particular, $\kappa(AD)$ is exponentially large whenever $h = \Theta(n)$.
\end{proposition}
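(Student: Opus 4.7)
The plan is to lower bound $\kappa(AD)=\sigma_{\max}(AD)/\sigma^{+}_{\min}(AD)$ by estimating the extremal singular values on two carefully chosen test vectors in $\mathbb{R}^{2^n-1}$: the solution $\zvec$ supplies a small-singular-value witness in the row space of $AD$, and a standard basis vector indexed by a middle-degree monomial supplies a large-singular-value witness for $\|AD\|$. Because $D$ is diagonal and positive, both halves reduce to explicit arithmetic on the block sizes appearing in \cref{def:weighteddiagonal}.

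For the upper bound on $\sigma^{+}_{\min}(AD)$, I would invoke the computation already performed in \cref{def:weighteddiagonal}: the minimum-norm solution $\zvec=(AD)^{+}\bvec$ lies in $\mathrm{row}(AD)$ and satisfies $\|\zvec\|_2=\sqrt{h}$, because each block of $D$ of size $\binom{n}{i}$ contains $\binom{h}{i}$ indices in the support of $\xvec$ and the $D$-weight cancels the $\sqrt{\binom{h}{i}}$ factor. Since $AD\zvec=\bvec$, one concludes
\[
\sigma^{+}_{\min}(AD)\;\le\;\frac{\|AD\zvec\|_2}{\|\zvec\|_2}\;=\;\frac{\|\bvec\|_2}{\sqrt{h}}.
\]

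For the lower bound on $\sigma_{\max}(AD)$, I would pick an index $j^{\star}$ corresponding to a multilinear monomial of degree $\lfloor h/2\rfloor$; by Stirling the diagonal entry $D_{j^{\star}j^{\star}}=\sqrt{\binom{h}{\lfloor h/2\rfloor}}=\Theta(2^{h/2}/h^{1/4})$ is maximal. Then $ADe_{j^{\star}}=D_{j^{\star}j^{\star}}\cdot Ae_{j^{\star}}$, and one needs to verify that the column $Ae_{j^{\star}}$ of the original Boolean Macaulay matrix has $\ell_2$ norm bounded below by a constant. This is the structural step: for any polynomial $f_i\in\mathcal{F}$ containing a nonzero term, there is a multilinear multiplier $m'$ such that $\psi(m'f_i)$ places a coefficient of magnitude $\Omega(1)$ on the monomial indexed by $j^{\star}$; this row of $A$ contributes to $\|Ae_{j^{\star}}\|_2^2$. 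Consequently $\sigma_{\max}(AD)=\Omega(2^{h/2}/h^{1/4})$.

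Combining the two estimates gives
\[
\kappa(AD)\;\ge\;\frac{\sigma_{\max}(AD)}{\sigma^{+}_{\min}(AD)}\;=\;\Omega\!\Bigl(\tfrac{2^{h/2}\,h^{1/4}}{\|\bvec\|_2}\Bigr).
\]
For polynomial systems with coefficients of polynomial bit size, $\|\bvec\|_2=\mathrm{poly}(n)$, and the $2^{h/2}$ factor dominates any polynomial in $n$ whenever $h=\Theta(n)$, yielding the claimed exponential bound $\kappa(AD)=\Omega(2^{h/2})$. The main obstacle is the column non-degeneracy claim $\|Ae_{j^{\star}}\|_2=\Omega(1)$ in full generality; this requires either a uniform non-degeneracy assumption on $\mathcal{F}$ or, as needed for the application in \cref{sec:MISpolysys}, a direct case-by-case verification on the explicit MIS Macaulay construction, where the required row can be exhibited from the edge polynomials $x_ix_j$ together with the cardinality constraint.
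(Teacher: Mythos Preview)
Your argument is correct and yields the same $\widetilde\Omega(2^{h/2})$ bound. The large-singular-value witness you choose (a standard basis vector $e_{j^\star}$ indexed by a degree-$\lfloor h/2\rfloor$ monomial) is exactly what the paper uses. The difference is on the small-singular-value side: the paper picks a second standard basis vector $e_{j_{\mathrm{small}}}$ indexed by a monomial of degree $>h$, so that $D_{j_{\mathrm{small}}j_{\mathrm{small}}}=1$ and $\|AD\,e_{j_{\mathrm{small}}}\|=\|A\,e_{j_{\mathrm{small}}}\|\le\mathrm{poly}(n)$, whereas you instead use the solution $\zvec=(AD)^+\bvec$ itself, giving $\sigma^{+}_{\min}(AD)\le\|\bvec\|_2/\sqrt{h}$. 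Your choice has the advantage that $\zvec$ automatically lies in $\mathrm{row}(AD)$ (so no full-column-rank assumption is needed), at the price of introducing a $\|\bvec\|_2$ factor that you then absorb as $\mathrm{poly}(n)$. The paper's choice trades that dependence for an implicit assumption that $A$ has full column rank. Both routes end up with the same structural obstacle you identify explicitly and the paper glosses over: one must check that the column $Ae_{j^\star}$ is not degenerate, i.e.\ $\|Ae_{j^\star}\|_2\ge 1/\mathrm{poly}(n)$. The paper simply asserts $\|Ae_{j^\star}\|=\mathrm{poly}(n)$ without justification; your discussion of how to verify this (via a nondegeneracy assumption on $\mathcal F$ or a direct check in the MIS instance) is a genuine improvement in rigor.
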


\begin{proof}
We construct two unit vectors $\vec{x}_1$ and $\vec{x}_2$ such that $\|AD \vec{x}_1\| = \mathrm{poly}(n)$ and $\|AD \vec{x}_2\| = \Omega(2^{h/2})$. The ratio of these two quantities then gives a lower bound on the condition number.

\paragraph{Step 1: Bounds on column norms of $A$.}
Since $\|A\| = \mathrm{poly}(n)$ \footnote{$\|A\| = \mathrm{poly}(n)$ can be ensured by rescaling each polynomial $f_i,1\leq  i
\leq m$ without affecting the solution set.}, we have
\[
\|A \vec{e}_j\|_2 \leq \mathrm{poly}(n)
\]
for every standard basis vector $\vec{e}_j$ corresponding to a column of $A$. That is, all column norms are upper bounded by a polynomial.  

\paragraph{Step 2: Structure of the diagonal matrix $D$.}
By definition, the diagonal entries of $D$ scale the monomials of degree $k \leq h$ by $\sqrt{\binom{h}{k}}$, and leave the higher degree terms unchanged. Thus,
\[
d_{\max} = \max_j D_{j,j} = \sqrt{\binom{h}{\lfloor h/2 \rfloor}} = \Theta(2^{h/2}),
\quad \text{and} \quad
d_{\min} = 1.
\]

\paragraph{Step 3: Constructing test vectors.}
Let $j_{\mathrm{small}}$ be an index corresponding to a monomial of degree greater than $h$. Then $d_{j_{\mathrm{small}}} = 1$. Define $\vec{x}_1 = \vec{e}_{j_{\mathrm{small}}}$ (a unit vector), and we have
\[
\|A D \vec{e}_{j_{\mathrm{small}}}\| = \|A \vec{e}_{j_{\mathrm{small}}}\| = \mathrm{poly}(n).
\]

Next, let $j_{\mathrm{large}}$ correspond to a monomial of degree $k = \lfloor h/2 \rfloor$, so that $d_{j_{\mathrm{large}}} = d_{\max} = \Theta(2^{h/2})$. Define $\vec{x}_2 = \vec{e}_{j_{\mathrm{large}}}$, and we have
\[
 \|A D \vec{e}_{j_{\mathrm{large}}}\| = d_{\max} \cdot \|A \vec{e}_{j_{\mathrm{large}}}\| =\Theta(2^{h/2}) \cdot \mathrm{poly}(n).
\]

\paragraph{Step 4: Bounding the condition number.}
We now lower bound the condition number using the two constructed vectors:
\[
\kappa(AD) = \frac{\sigma_{\max}(AD)}{\sigma_{\min}(AD)}
\geq \frac{\|AD \vec{x}_2\|}{\|AD \vec{x}_1\|}
= \frac{\Theta(2^{h/2}) \cdot \mathrm{poly}(n)}{\mathrm{poly}(n)}
= \widetilde{\Omega}(2^{h/2}).
\]

\noindent Thus, the condition number of the weighted Boolean Macaulay matrix is exponential in the Hamming weight $h$. In particular, when $h = \Theta(n)$, this implies $\kappa(AD) = \widetilde{\Omega}(2^{n/2})$.
\end{proof}

This new challenge holds for all previous QLS algorithms, whose complexity scales linearly with the condition number of the matrix $AD$. In contrast, our new QLS algorithm is independent of the condition number, relying instead on a more refined instance-dependent analysis. As a result, it might remain effective even when the condition number is exponentially large for certain polynomial systems, thereby opening the door to meaningful quantum speedups in regimes previously considered intractable for QLS-based algorithms.

\subsection{The Algorithm} \label{subsec:polysysalgorithm}

In this section, we present a new quantum algorithm for solving multivariate polynomial systems. Our approach follows the framework of \cite[Algorithm 1]{ding2023limitations}, but replaces the standard QLS subroutine for Boolean Macaulay linear systems with our instance-aware QLS algorithm applied to the weighted Boolean Macaulay linear system.

\begin{algorithm}[H]
\caption{Quantum Algorithm for Solving Boolean Polynomial Systems over $\mathbb{C}$}
\label{alg:MQC}
\begin{algorithmic}[1]
\REQUIRE A set $\mathcal{F} = \{f_1, \dots, f_m\} \subseteq \mathbb{C}[x_1, \dots, x_n]$ with $\deg(f_i) \leq 2$ for all $i$.
\ENSURE A solution $\bs \in \{0,1\}^n$ such that $f_1(\bs) = \cdots = f_m(\bs) = 0$ over $\mathbb{C}$, if one exists.
\STATE Construct the weighted Boolean Macaulay linear system $AD \zvec = \bvec$ of total degree $n$.
\STATE Apply new QLS \cref{alg:newQLS2}   to obtain a  quantum state that is $\epsilon$-close to $\ket{\zvec}$:
\[
\left| \zvec \right\rangle = \frac{1}{\sqrt{h}} \sum_{i=1}^{h} \left| H_i \right\rangle,
\quad \text{where} \quad
\left| H_i \right\rangle = \frac{1}{\sqrt{\binom{h}{i}}} \sum_{T\in S_i} \ket{T}.
\]
\STATE Measure $\left| \zvec \right\rangle$ to obtain an outcome $\ket{T}$ and set all variables indexed by $T$ to 1.
\STATE Repeat Steps 2 and 3 for $O(\log(n/\delta))$ rounds, and set the remaining variables to 0.
\STATE Return the assignment $\bs\in \{0,1\}^n$.
\end{algorithmic}
\end{algorithm}

\begin{theorem} \label{thm:qlspolytime}
Let $\mathcal{F} \subseteq \mathbb{C}[x_1, \dots, x_n]$ be a system of quadratic polynomials with a unique Boolean solution $\bs \in \{0,1\}^n$ of Hamming weight $h$. Then, with probability at least $1 - \delta$, Algorithm~\ref{alg:MQC} recovers $\bs$ correctly in time $\widetilde{O}(\sqrt{ET} \cdot \log(1/\delta))$.
\end{theorem}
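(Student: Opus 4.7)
The plan is to decompose the proof into two independent analyses, corresponding to the two main phases of \cref{alg:MQC}: (i) preparing a state close to the ideal solution state $\ket{\zvec}$ of the weighted Boolean Macaulay linear system $AD\zvec=\bvec$, and (ii) recovering the Boolean solution $\bs$ by repeated computational-basis measurements. The correctness and runtime then follow by combining the complexity of the two phases and applying a union bound over all failure modes.

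For the state-preparation phase, I would invoke \cref{thm:QLSalgorithmcomplexity} on the instance $AD\zvec=\bvec$. The key observation here is that $\|\zvec\|_2=\sqrt{h}$ is \emph{known} from the definition of $D$, so the ``norm-known'' branch of that theorem applies and yields the cleaner bound $\widetilde{O}(\sqrt{\s^3\cdot ET}/\epsilon^2)$. Since the Boolean Macaulay matrix $A$ is $\mathrm{poly}(n)$-sparse by \cite[Lemma 5.4]{ding2023limitations} and $D$ is diagonal, the augmented matrix $H=[AD,-\bvec]$ remains $\mathrm{poly}(n)$-sparse, so the $\s^3$ factor is absorbed into $\widetilde{O}(\cdot)$. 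This produces a state $\ket{\zvec'}$ with $\|\ket{\zvec'}-\ket{\zvec}\|_1\le\epsilon$ in time $\widetilde{O}(\sqrt{ET}/\epsilon^2)$.

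For the recovery phase, I would apply \cref{lem:solutionextractionlogn}, which shows that $O(\log(n/\delta))$ ideal samples from $\ket{\zvec}$ suffice to recover the support $S$ of $\bs$ with probability at least $1-\delta$. To handle the fact that the samples come from $\ket{\zvec'}$ rather than $\ket{\zvec}$, I would choose the QLS precision $\epsilon=\Theta(1/\log(n/\delta))$ and use the standard fact that trace distance upper-bounds the total variation distance of measurement outcomes: the total sampling error after $O(\log(n/\delta))$ rounds is bounded by $O(\epsilon\cdot \log(n/\delta))=O(1)$, and can be driven below $\delta/2$ by adjusting the hidden constant. Once $S$ is recovered, the algorithm sets the remaining variables to zero; uniqueness of the Boolean solution implies that the resulting assignment equals $\bs$ (and can additionally be verified by substitution into $\mathcal{F}$ in $\mathrm{poly}(n,m)$ time, though this step is not strictly necessary given uniqueness).

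Combining the two phases, each of the $O(\log(n/\delta))$ rounds costs $\widetilde{O}(\sqrt{ET})$ (since $1/\epsilon^2=\mathrm{poly}\log(n/\delta)$ is absorbed into $\widetilde{O}$), giving a total runtime of $\widetilde{O}(\sqrt{ET}\cdot\log(1/\delta))$ as claimed; a union bound over the failure probabilities of the QLS rounds and of \cref{lem:solutionextractionlogn} yields overall success probability at least $1-\delta$. The main technical subtlety I anticipate is the error-propagation argument in the recovery step: one must be careful that the $\epsilon$-error in trace distance translates linearly into sampling error across multiple rounds, and that the measurement outcomes across rounds can be treated as (approximately) independent draws from the target distribution, so that the geometric tail bound from \cref{lem:solutionextractionlogn} still applies. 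Everything else is a bookkeeping combination of previously proven lemmas.
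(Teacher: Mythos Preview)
Your two-phase decomposition matches the paper's proof exactly: invoke \cref{thm:QLSalgorithmcomplexity} in its norm-known branch (with $\|\zvec\|_2^2=h$ and $\s=\mathrm{poly}(n)$ absorbed into $\widetilde{O}$) for state preparation, then \cref{lem:solutionextractionlogn} for recovery, and conclude via uniqueness of the Boolean solution. The only substantive difference is in how you handle the QLS approximation error, and there your argument has a gap.

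You set $\epsilon=\Theta(1/\log(n/\delta))$ and claim the accumulated total-variation error over $O(\log(n/\delta))$ rounds is $O(1)$ and ``can be driven below $\delta/2$ by adjusting the hidden constant.'' But an $O(1)$ quantity cannot be pushed below $\delta/2$ (which tends to $0$ with $\delta$) by tuning a constant; to make a union bound over rounds yield $\delta/2$ you would need $\epsilon\propto\delta/\log(n/\delta)$, and then $1/\epsilon^2$ picks up a $1/\delta^2$ factor that is \emph{not} in the stated runtime $\widetilde{O}(\sqrt{ET}\cdot\log(1/\delta))$. The paper sidesteps this entirely by setting $\epsilon=1/\mathrm{poly}(n)$ (any $\epsilon<\tfrac{1}{2h}$ works) and arguing \emph{per round}: the trace-distance bound implies each element $x_j\in S$ still appears in a single measurement with probability at least $\tfrac{h+1}{2h}-\epsilon>\tfrac12$, so the geometric tail bound inside \cref{lem:solutionextractionlogn} goes through unchanged with the same $O(\log(n/\delta))$ rounds. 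Since $1/\epsilon^2=\mathrm{poly}(n)=\mathrm{polylog}(2^n)$ is already absorbed by $\widetilde{O}$, no $\delta$-dependence enters the per-round cost. This is precisely the subtlety you flagged at the end; the fix is the per-round correction to the hitting probability rather than the across-rounds union bound you proposed.
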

\begin{proof}

Let $\vec{p} \in \mathbb{R}^{m \cdot 2^n}$ satisfy $(AD)^{\top} \vec{p} = \zvec$, where $H=[AD, -\bvec]$ is the weighted Boolean Macaulay matrix and $D$ is the diagonal rescaling matrix \cref{def:weighteddiagonal}. Let $\vec{d} \in \mathbb{R}^{m \cdot 2^n}$ be the vector whose $i$-th entry $d_i$ is the squared $\ell_2$ norm of the $i$-th row of the matrix $AD$, that is, $d_i = \| (H)_{i,*} \|_2^2$. Define
\[
\ET = \sum_{i=1}^{m \cdot 2^n} p_i^2 \cdot d_i.
\]
Then, the new QLS algorithm solves the weighted Boolean Macaulay linear system $AD \zvec = \bvec$ in time $\widetilde{O}(\sqrt{ET}/\epsilon^2)$ by \cref{thm:QLSalgorithmcomplexity}. The overall complexity also depends on $\|\zvec\|_2^2 = h \leq n$ and sparsity $\s=\text{poly}(n)$, which we hide in the $\widetilde{O}$ notation. The instance-aware QLS algorithm \cref{alg:newQLS2} prepares a quantum state $\ket{\widetilde{z}}$ that is $\epsilon$-close (in trace distance) to the normalized solution state $\ket{\zvec} := \zvec / \|\zvec\|_2$.

The nonzero support of $\zvec$ corresponds to the $2^h - 1$ nonempty subsets of the true solution set $S = \{ x_j : \bs_j = 1 \}$. Measuring $\ket{\zvec}$ yields a random subset $T \subseteq S$ such that each $x_j \in S$ appears with probability $\frac{h+1}{2h} > \frac{1}{2}$. Since $\ket{\widetilde{z}}$ is $\epsilon$-close to $\ket{\zvec}$ and let $\epsilon=1/\text{poly}(n)$, the property of $\frac{h+1}{2h}-\epsilon > \frac{1}{2}$ still holds. By Lemma~\ref{lem:solutionextractionlogn}, $O(\log(n/\delta))$ independent measurements of $\ket{\widetilde{z}}$ suffice to recover all elements of $S$ with probability at least $1 - \delta$. Let $\widehat{S}$ be the recovered subset, and define $\widehat{\bs} \in \{0,1\}^n$ by setting $\widehat{\bs}_j = 1$ if $x_j \in \widehat{S}$, and $0$ otherwise.

Since $\mathcal{F}$ has a unique Boolean solution, it follows that $\widehat{\bs} = \bs$ with probability at least $1 - \delta$. The overall runtime is $\widetilde{O}(\sqrt{ET})$ for solving the weighted Boolean Macaulay linear system, plus $O(\log(n/\delta))$ for repeated measurements. The total complexity is thus $\widetilde{O}(\sqrt{ET} \cdot \log(1/\delta))$.
\end{proof}

\begin{remark}
   Throughout \cref{sec:newalgpolysys} we assume that the Hamming weight $h$ of the Boolean solution is known. This incurs no loss of generality since there are only $n$ possible values, so we can iterate over all $h \in \{1,\dots,n\}$, run the algorithm for each guess, and keep the output that satisfies the original polynomial system, incurring at most a linear overhead.
\end{remark}

\begin{remark}
We initially assumed a unique Boolean solution of the given polynomial systems by appending random equations. However, a similar algorithm following \cite{chen2022quantum} naturally handles the case of multiple Boolean solutions. The key modification is that, after each measurement, we update both the polynomial system and its corresponding weighted Boolean Macaulay linear system by substituting newly fixed variables. This iterative process continues for at most $n$ steps, each time reducing at least one variable, until the polynomial system contains no variables. From this, a complete Boolean solution of the original polynomial system can be reconstructed. The overall time complexity is dominated by the largest cost of our new QLS algorithm among the series of weighted Boolean Macaulay linear systems. 
\end{remark}

\subsection{Polynomial System Example} \label{subsec:polyexample}
In this section, we demonstrate the effectiveness of our new QLS algorithm plus the rescaling strategy to solve a special multivariate polynomial system. Although the underlying problem is classically easy, it poses a bottleneck for existing QLS algorithms based on the standard Macaulay matrix formulation. In contrast, leveraging our instance-aware QLS algorithm in conjunction with the proposed rescaling scheme through the weighted Boolean Macaulay linear system, \cref{alg:MQC} solves this system in polynomial time.

We consider a simple yet representative example: the polynomial
$$
f = x_1 + x_2 + \cdots + x_n - n.
$$

It is easy to verify that the unique Boolean solution is
$$
x_1 = x_2 = \cdots = x_n = 1,
$$
which has the Hamming weight $n$.

Constructing the Boolean Macaulay linear system $A\xvec = \bvec$, where $\xvec$ is indexed by all nontrivial multilinear monomials (of which there are $2^n - 1$), the truncated QLS condition number of is lower bounded by
$$
\|\xvec\| = \sqrt{2^{n}-1} \quad .
$$
As a result, all existing QLS algorithms, including our own, require exponential time to solve the system $A\xvec = \bvec$ in this form. To address this, we apply a novel rescaling technique, which results in the weighted Boolean Macaulay system
$$
AD\zvec = \bvec.
$$
Here $D$ is defined as in \cref{def:weighteddiagonal}:
\[
D = \operatorname{diag}\Bigl(
\underbrace{\sqrt{\binom{n}{1}}, \dots, \sqrt{\binom{n}{1}}}_{\binom{n}{1} \text{ times}},\,
\ldots,\,
\underbrace{\sqrt{\binom{n}{i}}, \ldots, \sqrt{\binom{n}{i}}}_{\binom{n}{i} \text{ times}}, \ldots, \,
\underbrace{\binom{n}{n}}_{ \binom{n}{n} \text{ times}}
\Bigr),
\]
where $n$ is the Hamming weight of the (unique) Boolean solution. Although the condition number of $AD$ remains exponential \cref{prop:weighted_condition_number}, our new QLS algorithm exploits the structure introduced by the diagonal matrix $D$ to solve the system in polynomial time, as we will show next by proving that $\ET = \text{poly}(n)$.

Since $D\zvec=\xvec$, the norm of \[
\|\zvec\|_2^2 = \sum_{i=1}^{n} \frac{\binom{n}{i}}{\binom{n}{i}} = n.
\]

Note that in this weighted Boolean Macaulay linear system $\bvec = [n, 0, 0, \dots, 0]^\top$—that is, only the first entry is nonzero with value $n$, and all remaining entries are zero.

\newcommand{\m}{\mathsf{m}}
Since we only have one polynomial $f$, the Boolean Macaulay matrix is a $2^n \times 2^n$ matrix, where each column is indexed by a multilinear monomial $\m$, and each row is indexed by a pair $(\m, f)$. Our goal is to upper bound the quantity $ET$ that arises when using the QLS algorithm to solve the weighted Boolean Macaulay linear system.

For each row of $AD$ indexed by $(\m_i, f)$ with $\m_i$ of degree $i$, the squared row norm indexed by $\m_i$ satisfies $d_{\m_i} = \widetilde{O}\left(\binom{n}{i}\right)$. To bound the parameter $$\ET 
= \sum_{k=0}^{n} p_{\m_k}^2 \cdot d_{\m_k},$$ we also need to analyze the vector $\pvec$ that satisfies  
$
A^{\top}\vec{p} = D^{-1} \zvec.
$

Indeed, the vector $\zvec$ is explicitly given as:
\[
\zvec = \Biggl(
\underbrace{\frac{1}{\sqrt{\binom{n}{1}}},\, \dots,\, \frac{1}{\sqrt{\binom{n}{1}}}}_{\binom{n}{1} \text{ terms}},\,
\ldots,\,
\underbrace{\frac{1}{\sqrt{\binom{n}{i}}},\, \dots,\, \frac{1}{\sqrt{\binom{n}{i}}}}_{\binom{n}{i} \text{ terms}},\,
\ldots,\,
\underbrace{\frac{1}{\sqrt{\binom{n}{n}}}}_{\binom{n}{n} \text{ terms}}
\Biggr),
\]
and hence,
\[
\vec{\zeta} = D^{-1} \zvec= \Biggl(
\underbrace{\frac{1}{\binom{n}{1}},\, \dots,\, \frac{1}{\binom{n}{1}}}_{\binom{n}{1} \text{ terms}},\,
\ldots,\,
\underbrace{\frac{1}{\binom{n}{i}},\, \dots,\, \frac{1}{\binom{n}{i}}}_{\binom{n}{i} \text{ terms}},\,
\ldots,\,
\underbrace{\frac{1}{\binom{n}{n}}}_{\binom{n}{n} \text{ terms}}
\Biggr).
\]

Due to the special structure of the polynomial $f$, we can explicitly compute the vector $\vec{p}$ as follows. Each nontrivial multilinear monomial $\m_i$ of degree $i$ corresponds to a column in the matrix $A$ with $i + 1$ nonzero entries. 

Let us denote
\begin{itemize}
    \item  $p_1$ as the coefficient corresponding to the row indexed by $(1, f)$,
\item  $p_{x_i}$ as the coefficient for $(x_i, f)$ when $\deg(x_i) = 1$,
\item $p_{\m_i}$ for the general case $(\m_i, f)$ when $\deg(\m_i) = i$.
\end{itemize}

By the structure of $\vec{b}$ (where only the first entry is nonzero with value $n$), we get
\[
\langle \vec{p}, \vec{b} \rangle = p_1 \cdot n = \|\xvec\|_2^2 = n \quad \Rightarrow \quad p_1 = 1.
\]

For the degree-1 monomial $x_i$, using $A^{\top}\pvec = \vec{\zeta}$ and observing that $x_i$ appears in rows indexed by $(1, f)$ and $(x_i, f)$, we derive:
\[
1 + (1 - n) p_{x_i} = \frac{1}{n} \quad \Rightarrow \quad p_{x_i} = \frac{1}{n}.
\]

Similarly, for a general multilinear monomial $\m_i$ of degree $i$, it appears in $i + 1$ rows: one indexed by $(\m_i, f)$ and $i$ others indexed by $(\m_i/x_j, f)$ for each $x_j \mid \m_i$. This yields the recursive equation:
\[
\sum_{j=1}^{i} p_{\m_i/x_j} + (i - n) p_{\m_i} = \frac{1}{\binom{n}{i}}.
\]

Solving this recurrence under symmetric conditions gives:
\[
p_{\m_i} = \frac{1}{\binom{n}{i}}.
\]
The identity follows from the standard relationship between consecutive binomial coefficients. Recall

\begin{align*}
\binom{n}{i-1}
      &= \frac{n!}{(i-1)!\,(n-i+1)!}
       = \frac{i}{n-i+1}\,\binom{n}{i},
\end{align*}
we have 
\begin{align*}
   \frac{i}{\binom{n}{i-1}} - \frac{n-i}{\binom{n}{i}}
      &=  \frac{n-i+1}{\binom{n}{i}} -\frac{n-i}{\binom{n}{i}}
       = \frac{1}{\binom{n}{i}}.
\end{align*}

Finally, we can upper bound the quantity $ET$ as:
\[
\ET 
= \sum_{k=0}^{n} p_{\m_k}^2 \cdot d_{\m_k}
= \sum_{k=0}^{n} \binom{n}{k} \cdot \frac{1}{\binom{n}{k}^2} \cdot \widetilde{O}\left(\binom{n}{k}\right)
= \text{poly}(n).
\]

By \cref{thm:qlspolytime}, \cref{alg:MQC} takes polynomial time to output the Boolean solution $
x_1 = x_2 = \cdots = x_n = 1
$ of the polynomial system $
f = x_1 + x_2 + \cdots + x_n - n.$ This demonstrates the effectiveness of our rescaling technique in unlocking tractability even for ill-conditioned linear systems arising from simple Boolean polynomial constraints.

\subsection{Solving Maximum Independent Set via Polynomial Systems} \label{subsec:MISpolysys}
Inspired by the concrete analysis in \cref{sec:polyexample}, we now investigate a related polynomial system that encodes the independent set problem. Specifically, we analyze the performance of \cref{alg:MQC} when applied to this polynomial system. We show that the running time can be bounded in terms of the number of independent sets in the input graph. Under certain conditions on this quantity, \cref{alg:MQC} runs in polynomial time, whereas, to the best of our knowledge, no efficient classical algorithm is known or has been attempted to solve this problem under the same conditions.

\begin{problem}[MIS \cite{lovasz1994stable}]\label{def:MIS}
The graph $G=(V,E)$ has a maximum independent set of size $h$ if and only if the following polynomial system

\begin{align}
    x_i^2-x_i=0, \qquad \text{for every vertex $i \in V$},\\
   x_ix_j=0,  \qquad \text{for every edge $\{i,j \}\in E$},\\
   \sum_{i=1}^{n}x_i =h, \qquad \qquad\qquad\qquad\qquad\quad\quad
\end{align}
 has a solution.
\end{problem}

The decision version of the MIS problem is NP-complete~\cite{karp2009reducibility}. In this work, we consider the \emph{planted MIS} problem, where a unique maximum independent set of size~$h$ is embedded in the graph and we want to recover this planted set~$S$.
Although \cref{alg:MQC} applies to \emph{any} multivariate polynomial systems, providing a concrete runtime in terms of $m$ and $n$ is challenging in full generality.  
To make complexity analysis tractable, yet still meaningful, we specialize in graphs whose maximum independent set is \emph{unique} and has cardinality~$h$.  
This uniqueness assumption simplifies the time analysis of \cref{alg:MQC}.


Let $f = x_1 + x_2 + \cdots + x_n - h$. There are several nice properties of the Boolean Macaulay linear system $A\xvec=\bvec$ constructed from this polynomial system, which allows us to give a concrete time analysis. The conditions $x_i^2 - x_i = 0, 1\leq i\leq n$ ensure that each solution variable is Boolean. The construction of the Boolean Macaulay linear system already accounts for this by restricting to multilinear monomials. For the quadratic equations $x_ix_j=0$ for all edges $\{i,j\}\in E$, this implies that all multilinear monomials that are multiples of these quadratic monomials $x_ix_j$ remain $0$. In other words, if a multilinear monomial $\m$ is not a multiple of any $x_ix_j$ for $\{i,j\}\in E$, then the vertices appearing in $\m$ consist of an independent set. Therefore, for a Boolean Macaulay matrix constructed from this type of polynomial system, we only need to take care of these multilinear monomials that are associated with an independent set of the graph. 

The solution of the Boolean Macaulay linear system has the norm $\|\xvec\|_2^2=2^h-1$.
To address this barrier, we apply the rescaling technique, resulting in the weighted Boolean Macaulay system
$$
AD\zvec = \vec{b},
$$
where $D$ we defined as:
\begin{equation} \label{eqn:DMIS}
D = \operatorname{diag}\Bigl(
\underbrace{\sqrt{\binom{h}{1}}, \dots, \sqrt{\binom{h}{1}}}_{\binom{n}{1} \text{ times}},\,
\ldots,\,
\underbrace{\sqrt{\binom{h}{i}}, \ldots, \sqrt{\binom{h}{i}}}_{\binom{n}{i} \text{ times}}, \ldots, \,
\underbrace{1\ldots, 1}_{ \sum_{i=h}^{n} \binom{n}{i} \text{ times}}
\Bigr).
\end{equation}

Originally, the squared $\ell_2$ norm of each row in the Boolean Macaulay matrix $H = [A ,-\vec{b}]$ is at most $n + h^2 = O(n^2)$. After applying the rescaling, the matrix becomes $H = [AD , -\vec{b}]$. In the weighted Boolean Macaulay matrix, the squared norm now depends on the degree of the multilinear monomial indexing the row. Specifically, for rows indexed by $(\m_i, f)$ with monomial degree $i$, the squared norm becomes
$
d_{\m_i} = \widetilde{O}\left(\binom{h}{i}\right).
$
To bound the parameter $$\ET 
= \sum_{k=0}^{n} p_{\m_k}^2 \cdot d_{\m_k}$$ we also need to analyze the vector $\pvec$ that satisfies  
$
A^{\top}\vec{p} = D^{-1} \zvec.
$

Suppose that the planted maximum independent set consists of vertices $S=\{x_{i_1}, x_{i_2}, \ldots, x_{i_h}\}$.  
Each nonzero entry in $\zvec$ is associated with a subset of the planted maximum independent set, while the zero entries correspond to other multilinear monomials not entirely supported by $S$. In this case, the solution $\zvec$ of the weighted Boolean Macaulay linear system can be rewritten as \footnote{Here we separate the nonzero and zero entries for simplicity of presentation.
}:
\[
\zvec = \Biggl(
\underbrace{\frac{1}{\sqrt{\binom{h}{1}}},\, \dots,\, \frac{1}{\sqrt{\binom{h}{1}}}}_{\binom{h}{1} \text{ terms}},\,
\ldots,\,
\underbrace{\frac{1}{\sqrt{\binom{h}{i}}},\, \dots,\, \frac{1}{\sqrt{\binom{h}{i}}}}_{\binom{h}{i} \text{ terms}},\,
\ldots,\,
\underbrace{\frac{1}{\sqrt{\binom{h}{h}}}}_{\binom{h}{h} \text{ terms}}, 0\ldots, 0
\Biggr).
\]  The $\ell_2$ norm of the solution $\zvec$ is equal to \[
\|\zvec\|_2^2=\sum_{i=1}^{h}\frac{\binom{h}{i}}{\binom{h}{i}}=h.
\]
Accordingly, we have
\[
\vec{\zeta} = D^{-1} \zvec = \Biggl(
\underbrace{\frac{1}{\binom{h}{1}},\, \dots,\, \frac{1}{\binom{h}{1}}}_{\binom{h}{1} \text{ terms}},\,
\ldots,\,
\underbrace{\frac{1}{\binom{h}{i}},\, \dots,\, \frac{1}{\binom{h}{i}}}_{\binom{h}{i} \text{ terms}},\,
\ldots,\,
\underbrace{\frac{1}{\binom{h}{h}}}_{\binom{h}{h} \text{ terms}},0\ldots, 0
\Biggr).
\]

\paragraph{Efficient solution extraction}
Let $S_i$ denote the set of all subsets of $S$ of size exactly $i$. The normalized quantum state $\ket{\zvec}$ can be written as:
\[
\left| \zvec \right\rangle = \frac{1}{\|\zvec\|} \sum_{i=1}^{h} \left| H_i \right\rangle,
\quad \text{where} \quad
\left| H_i \right\rangle = \frac{1}{\sqrt{\binom{h}{i}}} \sum_{T \in S_i} \ket{T}.
\] 
Given access to copies of the quantum state $\ket{\zvec}$, our goal is to recover the planted set $S$. By \cref{lem:solutionextractionlogn}, we can efficiently extract the solution using the weighted diagonal rescaling matrix $D$ defined in \cref{def:weighteddiagonal}. In this setting, the quantum algorithm \cref{alg:MQC} solves the problem in time $\widetilde{O}(\sqrt{\ET})$, as established in \cref{thm:qlspolytime}.
 



\paragraph{Properties of the weighted Boolean Macaulay linear system:}
\begin{itemize}
    \item Each valid nontrivial multilinear monomial is associated with an independent set in the graph.
    \item The number of nonzero rows in the weighted Boolean Macaulay matrix is equal to the number of valid multilinear monomials, which is the number of independent sets plus one.
    \item By adding the additional constraint $\vec{p}^{\top} \vec{b} = \|\zvec\|_2^2$, the solution to $
A^{\top}\vec{p} = D^{-1} \zvec
$ is unique. 
\end{itemize}

Each valid multilinear monomial $\m_i$ (corresponding to a nonzero column in $A$) of degree $i$ is associated with a column that has exactly $i + 1$ nonzero entries in the matrix $A$: one indexed by $(\m_i, f)$ and $i$ others indexed by $(\m_i/x_j, f)$ for each $x_j \mid \m_i$.

Let us denote
\begin{itemize}
    \item $p_1$ as the coefficient corresponding to the row indexed by $(1, f)$;
    \item $p_{x_{i_j}}$ as the coefficient corresponding to $(x_{i_j}, f)$ when $\deg(x_{i_j}) = 1$ and $x_{i_j} \in S$;
    \item $p_{x_k}$ as the coefficient corresponding to $x_k \in U \setminus S$.
\end{itemize}

By the structure of $\vec{b}$ (where only the first entry is nonzero with value $n$) and $\cref{remark:pby}$, we get
\[
\langle \vec{p}, \vec{b} \rangle = p_1 \cdot h = \|\zvec\|_2^2 = h \quad \Rightarrow \quad p_1 = 1.
\]

For any degree-$1$ monomial $x_{i_j}\in S$ in the planted independent set, $x_{i_j}$ appears only in rows indexed by $(1, f)$ and $(x_{i_j}, f)$. From the relation $A^{\top}\vec{p} = \vec{\zeta}$, we get the following:
\[
p_1 + (1 - h) p_{x_{i_j}} = \frac{1}{h} \quad \Rightarrow \quad p_{x_{i_j}} = \frac{\|\zvec\|_2^2-1}{h (h-1)}=\frac{1}{h}.
\]

For any degree-1 monomial $x_{k}\in U\backslash S$, using $A^{\top}\vec{p} = \vec{\zeta}$ and observing that $x_k$ appears in rows indexed by $(1, f)$ and $(x_k, f)$, we derive:
\[
p_1 + (1 - h) p_{x_k} = 0 \quad \Rightarrow \quad p_{x_{k}} = \frac{\|\zvec\|_2^2}{h (h-1)}=\frac{1}{h-1}.
\]

Similarly, for a general valid multilinear monomial $\m_i$ of degree $i$, it appears in $i + 1$ rows: one indexed by $(\m_i, f)$ and $i$ others indexed by $(\m_i/x_j, f)$ for each $x_j \mid \m_i$. Let $t = |\m_i \setminus S|$ denote this overlap with variables outside the planted independent set $S$. Using the relation $A^{\top}\vec{p} = \vec{\zeta}$ yields the recursive equation:
\[
\sum_{j=1}^{i} p_{\m_i/x_j} + (i - h) p_{\m_i} =\begin{cases}
    \frac{1}{\binom{h}{i}} \text{ if } t=0.
    \\
    0  \quad  \text{ Otherwise.}
\end{cases} 
\]

Solving the recurrence involves $i + 1$ distinct cases, determined by the number of variables $x_k \notin S$ that appear in the monomial $\m_i$. Let $t$ denote this number. For each $p_{\m_i}$, there are $i + 1$ possible cases, which we denote by $p_{\m_i,t}$ for $0 \leq t \leq i$. Each value of $t \in \{0, 1, \dots, i\}$ leads to a different value for the corresponding coefficient $p_{\m_i, t}$.

\begin{enumerate}
    \item $t=0$ 
    \[
p_{\m_i,0} = \frac{i\cdot p_{\m_{i-1},0}}{(h-i)}-\frac{1}{\binom{h}{i}} .
\]

\item $t=1$
        \[
p_{\m_i,1} =  \frac{p_{m_{i-1},0}+(i-1) \cdot p_{m_{i-1},1}}{(h-i)}.
\]

\item $t$
        \[
p_{\m_i,t} =  \frac{t \cdot p_{m_{i-1},t-1}+(i-t) \cdot p_{m_{i-1},t}}{(h-i)}.
\]

\item $t=i$,  \[
p_{\m_i,t} = \frac{i\cdot p_{\m_{i-1},t}}{(h-i)}.
\]
\end{enumerate}
In particular, for $t=0$ and $t=i$, we have 
\begin{enumerate}
    \item $t=0$ 
    \[
p_{\m_i,0} = \frac{1}{\binom{h}{i}}.
\]
\item $t=i$,  \[
p_{\m_i,i} = \frac{1}{\binom{h-1}{i}}.
\]
\end{enumerate}
For the remaining cases where $1 < t < i$, the values $p_{\m_i,t}$ lie between the two extremes; writing them explicitly would be cumbersome. However, we can observe that these values increase monotonically from $\frac{1}{\binom{h}{i}}$ to $\frac{1}{\binom{h-1}{i}}$ as $t$ increases.

Finally, let $I_{i,t}$ denote the number of independent sets of size $i$ that contain exactly $t$ vertices outside the planted maximum independent set in the given graph.  
We can upper bound the quantity $ET$ as
\[
ET 
= p_1^2+ \sum_{\substack{\m_i, i=1}}^{h} p_{\m_i}^2 \cdot d_{\m_i}
= p_1^2+\sum_{\m_i, i=1}^{h} \sum_{t=0}^{i} I_{i,t} \cdot d_{\m_i} \cdot p_{\m_i,t}^2 .
\]
The second equality holds because we only need to consider valid multilinear monomials, each of which corresponds to an independent set in the graph. Note that $p_{\m_i, t}^2$ lies in the range
\[
\frac{1}{\binom{h}{i}^2 }\quad \text{to} \quad \frac{1}{\binom{h-1}{i}^2}
\]
for rows indexed by $(\m_i, f)$ with $\deg(\m_i) = i$. The corresponding row norm satisfies $d_{\m_i} = \widetilde{O}\left(\binom{h}{i} \right)$.  
Therefore, for each $0 \leq t \leq i$, the product $p_{\m_i, t}^2 \cdot d_{\m_i}$ is in the range
\[
\widetilde{\Theta}\left(\frac{1}{\binom{h}{i}}\right) \quad \text{to} \quad \widetilde{\Theta}\left(\frac{1}{\binom{h-1}{i}}\right).
\]

If each quantity $I_{i,t} \cdot \left( d_{\m_i} \cdot p_{\m_i, t}^2 \right)$ is bounded by a polynomial in $n$, and there are only $O(h^2)$ such terms in $
\ET$, then \cref{alg:MQC} gives a polynomial-time quantum algorithm for the planted maximum independent set problem. To ensure this bound $ET=\text{poly}(n)$, the number of independent sets $I_{i,t}$ of size $i$ must satisfy
\[
\widetilde{\Theta}\left(\binom{h-1}{i}\right) 
\quad \text{to} \quad 
\widetilde{\Theta}\left(\binom{h}{i} \right)
\]
as $t$ increases from $0$ to $i$. Let
\[
f_{i,t} := \frac{1}{d_{\m_i} \cdot p_{\m_i, t}^2}
\]
with boundary conditions as follows
\[
f_{i,0} = \widetilde{\Theta}\left(\binom{h}{i} \right)
\quad \text{and} \quad
f_{i,i} = \widetilde{\Theta}\left(\binom{h-1}{i}\right), \text{ with }
\binom{h-1}{i} \;=\; \frac{h-i}{h}\,\binom{h}{i}.
\]

For every $1 \leq i \leq h$, let $I_i$ denote the number of independent sets of size $i$ in the given graph.  Because
\[
f_{i,0} = \widetilde{\Theta}\left( \binom{h}{i} \right) 
\quad \text{and} \quad 
f_{i,i} = \widetilde{\Theta}\left( \binom{h}{i} \right), 
\]
and noting that there are at most $i+1$ choices of the overlap parameter $t \in \{0, \dots, i\}$, we have the following
\[
I_i = \sum_{t=0}^{i} I_{i,t} = \widetilde{\Theta}\left( \binom{h}{i} \right)=\text{poly}(n) \binom{h}{i},
\]
where $I_{i,t}$ denotes the number of independent sets of size $i$ that contain exactly $t$ vertices outside the planted set $S$.  
Therefore, by \cref{thm:qlspolytime}, the planted set $S$ can be recovered in polynomial time, since 
$ET = \mathrm{poly}(n) $ whenever  $I_i = \mathrm{poly}(n)\binom{h}{i} \text{ for all } 1 \leq i \leq h.$

    \bibliographystyle{alpha}
\bibliography{qlspbib,extra,GI,indep}
\appendix 

\section{New QLS Algorithm based on Kernel Projection} \label{append:QLSQSVT}

In this section, we present an alternative QLS algorithm, which is based on the \textit{ kernel projection} technique introduced in \cite{dalzell2024shortcut} and built upon the QSVT framework. 
Specifically, the Kernel Projection maps the initial state into the null space of the matrix
\[
H = [A, -\bvec],
\]
which is closely related to the solution of the linear system $ A\xvec = \bvec $, as outlined in \cref{thm:vectordecomposition}. 
A subsequent projective measurement then allows us to efficiently recover the solution of the linear system, providing a direct method for extracting the quantum state corresponding to the solution vector $ \xvec $.

\begin{algorithm}
\caption{New QLS Algorithm based on Kernel Projection}
\begin{algorithmic}[1]\label{alg:newQLS1}
\REQUIRE  Oracle access of sparse matrix $A$ and sparse vector $\bvec$.

\ENSURE A quantum state $\ket{\xvec}$ such that \[
\|\ket{\xvec'}-\ket{\xvec}\|_2\leq O(\epsilon).
\]
 
 \STATE Prepare the initial state $\ket{\psi_0}=\ket{e_{n+1}}$.
 \STATE Apply kernel projection to approximately reflect about the kernel of $H$, 
 \[
 V\ket{0}^{\otimes a}\ket{\psi_0} = \ket{0}^{\otimes a}  F_{\Delta,\ell}(H)\ket{\psi_0} + \ket{\perp}
 \]
measure the first register to obtain the quantum state $\ket{\theta'}$.
 \STATE By measuring the operator $I-\ket{e_{n+1}}\bra{e_{n+1}}$ on the resulting state $f(H)\ket{\psi_0}$,  obtain the quantum state $\ket{\xvec'}$. Repeat until success. 
 
\end{algorithmic}
\end{algorithm}

\begin{theorem}
   Using $O(\kappa(A)\log(1/\epsilon))$ queries of $U_{H}$ and $U_H^{+}$,  the output of \cref{alg:newQLS1} is a quantum state $\ket{\xvec'}$ that is $\epsilon$ close to the quantum state $\ket{\xvec}$, where $A\xvec=\bvec$.
\end{theorem}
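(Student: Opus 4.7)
The plan is to verify that \cref{alg:newQLS1} is a straightforward instantiation of the kernel projection / eigenstate filtering technique of \cite{dalzell2024shortcut,lin2020optimal}, once one plugs in \cref{thm:vectordecomposition}. I would proceed in four steps.

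\textbf{Step 1 (Null-space content of the initial state).}
First I would invoke \cref{thm:vectordecomposition} applied to $H=[A,-\bvec]$ and the vector $\vec{\psi}_0 = [0,\ldots,0,1]^{\top}$, which shows
\[
\ket{\psi_0}\;=\;\ket{e_{n+1}}\;=\;\frac{1}{1+\|\xvec\|_2^{2}}\begin{bmatrix}\xvec\\ 1\end{bmatrix}\;-\;\frac{1}{1+\|\xvec\|_2^{2}}\begin{bmatrix}\xvec\\ -\|\xvec\|_2^{2}\end{bmatrix},
\]
where the first summand lies in $\mathrm{Null}(H)$ and the second in $\mathrm{Row}(H)$ and they are mutually orthogonal. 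The null-space component therefore has squared norm $\frac{1}{1+\|\xvec\|_2^{2}}$, and after renormalization equals $\frac{1}{\sqrt{1+\|\xvec\|_2^{2}}}\bigl(\ket{\xvec}\ket{0}+\ket{e_{n+1}}\bigr)$ in the obvious block form, up to rescaling by $\|\xvec\|_2$ on the $\ket{\xvec}$ part.

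\textbf{Step 2 (Kernel projection via QSVT).}
Given a block-encoding of $H$ obtained from the sparse-access oracles (whose cost I would take from the standard construction, e.g.\ \cite{gilyen2018QSingValTransf}), I would apply the eigenstate filtering polynomial $F_{\Delta,\ell}$ from \cite[Lemma 2]{lin2020optimal} in the QSVT framework, with filtering gap $\Delta=1/\kappa(H)$ and degree $\ell=O(\kappa(H)\log(1/\epsilon))$. The polynomial satisfies $F_{\Delta,\ell}(0)=1$ and $|F_{\Delta,\ell}(\sigma)|\le\epsilon$ for all singular values $\sigma\ge\Delta$ of $H$. Applying QSVT for singular value transformation yields the unitary $V$ such that
\[
V\ket{0}^{\otimes a}\ket{\psi_0}=\ket{0}^{\otimes a}F_{\Delta,\ell}(H)\ket{\psi_0}+\ket{\perp},
\]
with $a=O(1)$ ancilla qubits and a total cost of $O(\kappa(H)\log(1/\epsilon))$ queries to $U_H$ and $U_H^{\dagger}$.

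\textbf{Step 3 (From filter output to the null-space component).}
Because $F_{\Delta,\ell}$ acts as identity on the $\sigma=0$ singular subspace and is $\epsilon$-small on the complement, $F_{\Delta,\ell}(H)\ket{\psi_0}$ is $\epsilon$-close (in $\ell_2$ norm) to the un-normalized null-space component from Step 1, namely $\frac{1}{1+\|\xvec\|_2^{2}}[\xvec;1]^{\top}$. Measuring the ancilla and post-selecting on $\ket{0}^{\otimes a}$ succeeds with probability $\Theta(1/(1+\|\xvec\|_2^{2}))$ (up to $O(\epsilon)$ corrections), producing a state $\ket{\theta'}$ that is $O(\epsilon)$-close in trace distance to the normalized null-space state $\frac{1}{\sqrt{1+\|\xvec\|_2^{2}}}([\xvec;1]^{\top})$.

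\textbf{Step 4 (Projective measurement to extract $\ket{\xvec}$, and bounding $\kappa(H)$).}
Performing the projective measurement $\{I-\ket{e_{n+1}}\bra{e_{n+1}},\ket{e_{n+1}}\bra{e_{n+1}}\}$ on $\ket{\theta'}$ and post-selecting on the first outcome yields a state $O(\epsilon)$-close in trace distance to $\ket{\xvec}$, by essentially the same stability argument used in the proof of \cref{thm:QLSalgorithmcomplexity}. The only remaining point is to replace $\kappa(H)$ with $\kappa(A)$ in the query count; since $H=[A,-\bvec]$ differs from $A$ by a single appended column, one has $\sigma_{\min}^+(H)\ge\sigma_{\min}^+(A)$ (the $+$ denoting the smallest nonzero singular value) and $\|H\|\le\|A\|+\|\bvec\|$, which gives $\kappa(H)=O(\kappa(A))$ under the usual normalization $\|A\|,\|\bvec\|\le 1$. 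Combining Steps 2--4 yields the claimed bound of $O(\kappa(A)\log(1/\epsilon))$ queries. Success probabilities in Steps 3 and 4 are boosted by $O(1)$ repetitions under the standing assumption $\|\xvec\|_2=\Theta(1)$.

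The main technical obstacle I would expect is \textbf{not} the filter construction itself (which is off-the-shelf), but the careful bookkeeping in Step 3/4 to show that the $\epsilon$-error in the QSVT filter and the non-unit post-selection probability combine into an $O(\epsilon)$ trace-distance error on the final state; this is essentially the same argument already carried out in the proof of \cref{thm:QLSalgorithmcomplexity}, so it can be reused almost verbatim.
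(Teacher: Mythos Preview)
Your proposal is correct and follows essentially the same approach as the paper: decompose $\ket{e_{n+1}}$ via \cref{thm:vectordecomposition}, apply the minimax filtering polynomial $F_{\Delta,\ell}$ with $\Delta=1/\kappa(H)$ through QSVT (the paper packages this as \cref{lem:kernelprojection}), post-select to isolate the null-space component, and then project out $\ket{e_{n+1}}$. Your replacement of $\kappa(H)$ by $\kappa(A)$ via $\sigma_{\min}^+(H)\ge\sigma_{\min}^+(A)$ (Cauchy interlacing, using that $\mathrm{rank}(H)=\mathrm{rank}(A)$ since $\bvec\in\mathrm{col}(A)$) and $\|H\|\le\|A\|+\|\bvec\|$ is exactly what the paper does in \cref{thm:kappaequ}, and your observation that the post-selection error bookkeeping can be reused from the proof of \cref{thm:QLSalgorithmcomplexity} is apt.
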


\begin{proof}
    We analyze the complexity of \cref{alg:newQLS1} for each step:

\begin{enumerate}
    \item \textbf{Preparation of the initial state (Step 1):} \\
    The preparation of the initial state in Step 1 takes constant time since $ \ket{e_{n+1}} $ is a basis vector.

    Define the following quantum states:
    \[
    \ket{\theta} = \frac{1}{\sqrt{1+\|\xvec\|_2^2}}\left(\sum_{i=1}^{N}\theta_i\ket{e_i} + \ket{e_{n+1}}\right)
    \]
    \[
    \ket{\theta^{\perp}} = \frac{1}{\|\xvec\| \sqrt{1+\|\xvec\|_2^2}}\left(\sum_{i=1}^{N}\theta_i\ket{e_i} - \|\xvec\|^2\ket{e_{n+1}}\right)
    \]

    By \cref{thm:vectordecomposition}, we have:
    \[
    \ket{\psi_0} = \frac{1}{\sqrt{1+\|\xvec\|_2^2}}\left( \ket{\theta} - \ket{\theta^{\perp}} \right)
    \]

    \item \textbf{Kernel Projection (Step 2):} \\
    By \cref{lem:kernelprojection} and \cref{thm:kappaequ}, using $ \Theta(\ell)=\Theta(\kappa(A)\log (1/\epsilon)) $ queries to $ U_H $ and $ U_H^{+} $, we can construct a unitary $ V $ such that:
    \[
    V\ket{0}^{\otimes a} \ket{\psi_0} = \ket{0}^{\otimes a} F_{\Delta,\ell} \ket{\psi_0} + \ket{\perp}
    \]
    With probability at least $ \frac{1}{1+\|\xvec\|_2^2} $, we obtain a quantum state $ \ket{\theta'} $ satisfying:
    \[
    \|\ket{\theta'} - \ket{\theta}\|_2 \leq \epsilon
    \]

    \item \textbf{Projective Measurement (Step 3):} \\
    As noted in the last step of Algorithm 1 in \cite{dalzell2024shortcut}, this projective measurement can be implemented using a single multi-controlled Toffoli gate. If the ancillary qubit is measured in state $\ket{0}$, with probability $ \frac{\|\xvec\|_2^2}{1+\|\xvec\|_2^2} $, we obtain the resulting state $ \ket{\xvec'}$ that approximates $ \ket{\xvec}$.
\end{enumerate}

When the norm $\|\xvec\|$ is equal to 1, the success probability of obtaining the desired state $\ket{\xvec'}$ in each iteration is at least $1/4$. Therefore, by repeating the above steps a constant number of times, specifically $O(1)$ iterations, we can ensure that we obtain a state $\epsilon$ close to $\ket{\xvec}$ with high probability.
\end{proof}

Quantum algorithms for sparse matrices based on the QSVT framework can be dequantized using classical algorithms inspired by quantum techniques, as shown in~\cite{gharibian2022dequantizing}. Since our QLS algorithm is implemented within the QSVT framework, it naturally admits a classical counterpart with complexity $\widetilde{O}(\s^{d})$ \cite[Theorem 4.1]{gharibian2022dequantizing} or \cite[Theorem 8]{le2025robust}, where $d$ is the degree of the polynomial used under the QSVT framework. Note that the dequantization result for the QLS algorithm presented in~\cite{le2025robust} differs slightly from ours. Specifically, their method constructs a polynomial approximation to the inverse function $A^{-1}$ with degree $O\left(\kappa(A)\log\left(\kappa(A)/\epsilon\right)\right)$, adapted to the condition number $A$. In contrast, our QLS algorithm applies a minimax polynomial of degree $O\left(\kappa(A)\log(1/\epsilon)\right)$ (\cref{eq:minimaxpoly}) associated with the augmented matrix $H = [A, -\bvec]$, thus giving a distinct complexity.

  \section{Kernel Projection via QSVT} 

We restate the result of the kernel projection in \cite[Appendix B.2]{dalzell2024shortcut}, which serves as a key step in our alternative QLS algorithm.
 \begin{definition}[Block encoding {\cite[Definition 43]{gilyen2018QSingValTransf}}]
\label{defn:block_encoding}
Suppose that $H$ is a $s$-qubit operator and let $ \epsilon\in \mathbb{R}_{+}$ and $a\in \mathbb{N}$. We say that the $(s+a)$-qubit unitary $U$ is an $(1,a,\epsilon)$ block encoding of H on the registers $\beta_1$ and $\beta_2$, if
\[
\|H-\left(\bra{0}_{\beta_1}^{\otimes a}\otimes I_{\beta_2}\right)U \left(\ket{0}_{\beta_1}^{\otimes a}\otimes I_{\beta_2}\right) \| \leq \epsilon.
\]    
\end{definition}
In particular, when $\epsilon=0$, we have 
$$U=\begin{pmatrix}
H & \cdot\\
\cdot &\cdot
\end{pmatrix}$$
and $$U\ket{0}^{\otimes a}\ket{\psi}=\ket{0}^{\otimes a}H\ket{\psi}+\ket{\perp},$$ where the reduced state in the first $a$ qubits of
$|\perp\rangle$ is orthogonal to $|0\rangle^{\otimes a}$.

Once we have the unitary $U_H$ which is a block encoding of the matrix $H$, then the QSVT framework allows us to construct a unitary $V$ such that 
\[
V\ket{0}^{\otimes a} \ket{\psi}= \ket{0}^{\otimes a} f(H)\ket{\psi}+\ket{\perp},
\] where $f(H)$ is some polynomial of $H$.

In this paper, we use the result in \cite[Appendix B.2]{dalzell2024shortcut}, which employs the minimax polynomial defined in \cite{lin2020optimal} to implement the projection of the kernel of a rectangular matrix. For completeness, we restate the result as follows. 

The degree-$2\ell$ minimax polynomial $F_{\Delta,\ell}(x)$ is defined as ~\cite{lin2020optimal}
\begin{equation}\label{eq:minimaxpoly}
    F_{\Delta, \ell}(x) = \frac{T_\ell\left(\frac{1+\Delta^2-2x^2}{1-\Delta^2}\right)}{T_\ell\left(\frac{1+\Delta^2}{1-\Delta^2}\right)}
\end{equation}
where $T_\ell$ is the $\ell$th Cheybshev polynomial of the first kind, given by 
\begin{equation}
    T_\ell(z) = \begin{cases}
        \cos(\ell \arccos(z)) & \text{if } |z|\leq 1 \\
        \cosh(\ell \arccosh(z)) & \text{if } z >1 \\
        (-1)^\ell \cosh(\ell \arccosh(z)) & \text{if } z < -1
    \end{cases}\,.
\end{equation}

\begin{lemma}\label{lem:KP_poly_properties}
    The polynomial $F_{\Delta,\ell}$ is guaranteed to satisfy 
\begin{enumerate}
    \item For all $x \in [-1,1]$, it holds that $|F_{\Delta,\ell}(x)| \leq 1$.
    \item For all $x \in [\Delta, 1]$, it holds that $|F_{\Delta,\ell}(x)| \leq T_\ell(\frac{1+\Delta^2}{1-\Delta^2})^{-1} \leq \eta$. 
    \item $F_{\Delta,\ell}(0) = 1$. 
\end{enumerate}
\end{lemma}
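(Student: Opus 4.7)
\medskip

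The plan is to reduce all three parts of the lemma to elementary facts about the Chebyshev polynomial $T_\ell$, namely that $|T_\ell(z)|\le 1$ for $z\in[-1,1]$ and that $T_\ell$ is strictly increasing on $[1,\infty)$ with $T_\ell(z)\ge 1$ there. Let $g(x)=\frac{1+\Delta^{2}-2x^{2}}{1-\Delta^{2}}$ denote the argument of the numerator Chebyshev polynomial in $F_{\Delta,\ell}$. Since $g$ is an even, monotonically decreasing function of $|x|$, I can track the image of $g$ on the subintervals of interest and then pull the bounds back through $T_\ell$.

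First, for Part (3), I would substitute $x=0$ and observe $g(0)=\frac{1+\Delta^{2}}{1-\Delta^{2}}$, which coincides with the denominator's argument, so numerator and denominator are identical and $F_{\Delta,\ell}(0)=1$. Second, for Part (2) I would compute $g(\Delta)=1$ and $g(1)=-1$, so as $x$ traverses $[\Delta,1]$, $g(x)$ traverses $[-1,1]$. By the standard bound $|T_\ell(z)|\le 1$ on $[-1,1]$, this gives $|T_\ell(g(x))|\le 1$, and dividing by the (positive) denominator yields
\[
|F_{\Delta,\ell}(x)|\;\le\;\frac{1}{T_\ell\!\left(\frac{1+\Delta^{2}}{1-\Delta^{2}}\right)},
\]
which is the stated bound (positivity of the denominator follows from $\frac{1+\Delta^{2}}{1-\Delta^{2}}>1$ so that $T_\ell$ there equals $\cosh(\ell\,\operatorname{arccosh}(\cdot))>0$).

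Third, for Part (1) I would extend the range analysis: for $x\in[-1,1]$, $g(x)\in[-1,g(0)]$ since $g$ is maximized at $x=0$ and equals $-1$ at $x=\pm 1$. On the subinterval $[-1,1]$ of the $z$-axis we have $|T_\ell(z)|\le 1\le T_\ell(g(0))$, while on $[1,g(0)]$ the monotonicity of $T_\ell$ in this regime gives $T_\ell(z)\le T_\ell(g(0))$. Combining, $|T_\ell(g(x))|\le T_\ell(g(0))$ on $[-1,1]$, and dividing by $T_\ell(g(0))$ yields $|F_{\Delta,\ell}(x)|\le 1$.

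The proof should be almost mechanical; the only mild obstacle is making sure the endpoints and monotonicity of $g$ and $T_\ell$ are handled cleanly across the two regimes ($z\in[-1,1]$ where Chebyshev oscillates, and $z>1$ where it grows like $\cosh$). I would make this explicit by splitting each case at the boundary value $z=1$, as above, so no hidden sign issues slip through.
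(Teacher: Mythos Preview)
Your proof is correct and complete. The argument via the substitution $g(x)=\frac{1+\Delta^{2}-2x^{2}}{1-\Delta^{2}}$, together with the standard facts that $|T_\ell(z)|\le 1$ on $[-1,1]$ and that $T_\ell$ is positive and increasing on $[1,\infty)$, is exactly the right route, and your case split at $z=1$ handles the two regimes cleanly.

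Note, however, that the paper does not actually supply its own proof of this lemma: it is restated from \cite{lin2020optimal} (via \cite[Appendix~B.2]{dalzell2024shortcut}) without argument, and the subsequent choice of $\ell$ in \eqref{eq:ell_KP} is what enforces the final ``$\le \eta$'' in Part~(2). Your write-up is precisely the standard proof one finds in \cite{lin2020optimal}, so there is nothing to compare against here beyond observing that you have filled in what the paper leaves to citation.
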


Leaving $\Delta=1/\kappa(H)$ and $\eta=\epsilon$, it suffices to choose \begin{equation} \label{eq:ell_KP}
    \ell = \left\lceil \frac{\arccosh(\eta^{-1})}{\arccosh\left(\frac{1+\Delta^2}{1-\Delta^2}\right)} \right\rceil \leq \left\lceil \frac{1}{2\Delta}\ln\left(\frac{2}{\eta}\right) \right\rceil.
\end{equation}

\begin{lemma}[Kernel Projection] \label{lem:kernelprojection} Suppose we have a $(1,a,0)$ block encoding $U_H$ of the matrix $H$, which has no singular value in the interval $(0,\Delta)$. Let $\ket{\psi_0}=\sqrt{\gamma}\ket{\theta}+\sqrt{1-\gamma}\ket{\theta^{\perp}}$, where $\ket{\theta} \in Null(H), \ket{\theta^{\perp}}\in Row(H)$ Then we can construct a unitary $V$ such that 
\[
V \ket{0}^{\otimes a} \ket{\psi_0} = \ket{0}^{\otimes a} F_{\Delta,\ell}(H)\ket{\psi_0}
\]
using $\Theta(\ell)=\Theta(\kappa(H)\log(1/\epsilon))$ queries of $U_H$ and $U_H^{+}$. 

In particular, when measuring the first register to obtain all $\ket{0}^{\otimes a}$, we will get a quantum state $\ket{\theta'}$ with
\[
\|\ket{\theta'}-\ket{\theta}\|_2 \leq \epsilon
\]  
\end{lemma}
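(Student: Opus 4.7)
The plan is to implement $F_{\Delta,\ell}(H)$ as a block encoding via the QSVT framework of Gilyén et al., applied to the given block encoding $U_H$. Since $F_{\Delta,\ell}(x)$ is an even polynomial of degree $2\ell$ in $x$, even-parity QSVT produces, using $\Theta(\ell)$ alternating calls to $U_H$ and $U_H^\dagger$ together with ancilla-controlled phase rotations, a unitary $V$ on $a+O(1)$ ancilla qubits plus the system such that
\[
V\ket{0}^{\otimes a}\ket{\psi_0} \;=\; \ket{0}^{\otimes a}F_{\Delta,\ell}(H)\ket{\psi_0} \;+\; \ket{\perp},
\]
where the unnormalized $\ket{\perp}$ is orthogonal to $\ket{0}^{\otimes a}$ on the ancilla register. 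The QSVT rotation angles are classically precomputed from the Chebyshev expansion of $F_{\Delta,\ell}$, so the construction cost is dominated by the $\Theta(\ell)$ oracle calls.

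Next I would analyze the action of $F_{\Delta,\ell}(H)$ on the two orthogonal components of $\ket{\psi_0}$. Writing the SVD $H=\sum_i \sigma_i \ket{u_i}\bra{v_i}$, even-parity QSVT realizes $F_{\Delta,\ell}(H) = \sum_i F_{\Delta,\ell}(\sigma_i)\ket{v_i}\bra{v_i}$ on the right-singular-vector basis, extended by $F_{\Delta,\ell}(0)=1$ to $\mathrm{Null}(H)$. Since $\ket{\theta}\in\mathrm{Null}(H)$, property~(3) of \cref{lem:KP_poly_properties} gives $F_{\Delta,\ell}(H)\ket{\theta}=\ket{\theta}$. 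Since $\ket{\theta^{\perp}}\in\mathrm{Row}(H)$, it is a combination of right singular vectors with singular values in $[\Delta,1]$ (using the spectral-gap assumption and the normalization $\|H\|\le 1$ implicit in the block-encoding model), and property~(2) gives $\|F_{\Delta,\ell}(H)\ket{\theta^{\perp}}\|\le \eta$. Combining, $F_{\Delta,\ell}(H)\ket{\psi_0} = \sqrt{\gamma}\ket{\theta} + \sqrt{1-\gamma}F_{\Delta,\ell}(H)\ket{\theta^{\perp}}$, within $\eta$ of $\sqrt{\gamma}\ket{\theta}$ in $\ell_2$ norm.

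For the measurement step, postselecting on outcome $\ket{0}^{\otimes a}$ succeeds with probability in $[\gamma,\gamma+\eta^2]$ and yields the normalized state $\ket{\theta'} = (\sqrt{\gamma}\ket{\theta} + \sqrt{1-\gamma}F_{\Delta,\ell}(H)\ket{\theta^{\perp}})/\|\,\cdot\,\|$. Using the elementary bound $\|u/\|u\|-v/\|v\|\,\|\le 2\|u-v\|/\|v\|$ for nonzero vectors, one obtains $\|\ket{\theta'}-\ket{\theta}\|_2 = O(\eta/\sqrt{\gamma})$. Setting $\eta = \Theta(\epsilon\sqrt{\gamma})$ and invoking \eqref{eq:ell_KP} with this tightened $\eta$ still gives $\ell = \Theta(\kappa(H)\log(1/\epsilon))$, absorbing the $\log(1/\sqrt{\gamma})$ factor, and yields the claimed bound $\|\ket{\theta'}-\ket{\theta}\|_2\le\epsilon$.

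The main obstacle will be handling the normalization cleanly when $\gamma$ is not a constant, since a naive choice $\eta = \epsilon$ only certifies closeness of the \emph{unnormalized} post-selected state. The fix above (choosing $\eta$ slightly smaller than $\epsilon$) resolves this but must be carried out carefully to preserve the stated $\Theta(\kappa(H)\log(1/\epsilon))$ query count; a secondary subtlety is verifying that $F_{\Delta,\ell}$ satisfies the evenness and $[-1,1]$-boundedness conditions required by the standard QSVT implementability theorem, which is immediate from \cref{lem:KP_poly_properties}(1) and the fact that the polynomial is a function of $x^2$ alone.
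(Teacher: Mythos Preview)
Your proposal is correct and follows essentially the same approach as the paper: implement $F_{\Delta,\ell}(H)$ via QSVT using $\Theta(\ell)$ calls to $U_H,U_H^\dagger$, use properties (2) and (3) of \cref{lem:KP_poly_properties} to show $F_{\Delta,\ell}(H)\ket{\psi_0}$ is close to $\sqrt{\gamma}\ket{\theta}$, and then control the normalization error after postselection. The only cosmetic difference is in the last step, where the paper bounds $\|\ket{\theta'}-\ket{\theta}\|_2^2 = 2(1-|\langle\theta'|\theta\rangle|)$ via an overlap estimate (obtaining a bound $O(\sqrt{\eta/\sqrt{\gamma}})$ and hence choosing $\eta=\Theta(\epsilon^2\sqrt{\gamma})$), whereas your vector-difference inequality $\|u/\|u\|-v/\|v\|\|\le 2\|u-v\|/\|v\|$ gives the slightly sharper $O(\eta/\sqrt{\gamma})$ and allows $\eta=\Theta(\epsilon\sqrt{\gamma})$; both choices yield the same $\Theta(\kappa(H)\log(1/\epsilon))$ query count.
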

\begin{proof}
    Given the $(1,a,0)$ block encoding $U_H$, the QSVT framework allows us to construct a unitary $V$ that is a $(1,a+1,0)$ block encoding of the polynomial $F_{\Delta,\ell}(H)$ using $\Theta(\ell)$ calls to $U_H$ and $U_H^{+}$ such that:
    \[
    V \ket{0}^{\otimes a+1} \ket{\psi_0} = \ket{0}^{\otimes a+1} F_{\Delta,\ell}(H) \ket{\psi_0} + \ket{\perp},
    \]
    where $\ket{\perp}$ is some orthogonal state outside the desired subspace. For more details of this construction, we refer readers to \cite[Appendix B.1]{dalzell2024shortcut}.

    When we measure the first register and obtain $\ket{0}^{\otimes a+1}$ (which happens with probability $\|F_{\Delta,\ell}(H) \ket{\psi_0}\|_2^2$), the resulting post-measurement state is:
    \[
    \ket{\theta'} = \frac{F_{\Delta,\ell}(H) \ket{\psi_0}}{\| F_{\Delta,\ell}(H) \ket{\psi_0} \|} = \frac{1}{\| F_{\Delta,\ell}(H) \ket{\psi_0} \|} \left( \sqrt{\gamma} \ket{\theta} + \sqrt{1 - \gamma} F_{\Delta,\ell}(H) \ket{\theta^\perp} \right)
    \]

    Since $\ket{\theta} \in \text{Null}(H)$ and $\ket{\theta^\perp} \in \text{Row}(H)$, and by applying the properties from \cref{lem:KP_poly_properties}, we have:
    \[
    F_{\Delta,\ell}(H) \ket{\theta} = \ket{\theta}, \quad
    F_{\Delta,\ell}(H) \ket{\theta^\perp} \in \text{Row}(H), \quad \| F_{\Delta,\ell}(H) \ket{\theta^\perp} \| \leq \eta
    \]
    
    Consequently, we obtain:
    \[
    \sqrt{\gamma} \leq \| F_{\Delta,\ell}(H) \ket{\psi_0} \| \leq \sqrt{\gamma} + \sqrt{1 - \gamma} \cdot \eta
    \]

    The overlap between $\ket{\theta'}$ and $\ket{\theta}$ can be bounded by:
    \[
    |\langle \theta' | \theta \rangle| \geq \frac{\sqrt{\gamma}}{\sqrt{\gamma} + \sqrt{1 - \gamma} \cdot \eta} = 1 - \frac{\eta \sqrt{1 - \gamma}}{\sqrt{\gamma} + \sqrt{1 - \gamma} \cdot \eta}
    \]

    Thus, the squared two-norm distance satisfies:
    \[
    \| \ket{\theta'} - \ket{\theta} \|_2^2 = 2 \left( 1 - |\langle \theta' | \theta \rangle| \right) \leq \frac{2 \eta \sqrt{1 - \gamma}}{\sqrt{\gamma} + \sqrt{1 - \gamma} \cdot \eta}
    \]

    Setting:
    \[
    \frac{2\eta \sqrt{1 - \gamma}}{\sqrt{\gamma} + \sqrt{1 - \gamma} \cdot \eta} = \epsilon^2
    \]
    we solve for $\eta$:
    \[
    2\eta \sqrt{1 - \gamma} = \epsilon^2 \left( \sqrt{\gamma} + \sqrt{1 - \gamma} \cdot \eta \right)
    \]
    
    Rearranging terms:
    \[
    \eta \left( 2\sqrt{1 - \gamma} - \epsilon^2 \sqrt{1 - \gamma} \right) = \epsilon^2 \sqrt{\gamma}
    \]
    
    Simplifying:
    \[
    \eta = \epsilon^2 \cdot \frac{\sqrt{\gamma}}{2\sqrt{1 - \gamma} \left( 1 - \epsilon^2 \right)}
    \]

    Finally, we conclude:
    \[
    \| \ket{\theta'} - \ket{\theta} \|_2 \leq \epsilon
    \]
\end{proof}

 \section{The relationship between $\kappa(H)$ and $\kappa(A)$}
In this section, we show that $\kappa(H)=\Theta(\kappa(A))$ under the assumptions that $\|A\|=  1$, $\|\vec{b}\|\leq 1$, $\|\xvec\|=1$. The condition number of a matrix $A$ is defined as the ratio between its maximum singular eigenvalue and its minimum nonzero singular eigenvalue $\frac{\sigma_{max}(A)}{\sigma_{min}^+(A)}$. To systematically investigate the relationship between $\kappa(H)$ and $\kappa(A)$, one way is to check the following matrix forms:

$$(H^{\top}H)_{N+1,N+1}=\begin{bmatrix}
    A^{\top}A & -A^{\top}\bvec\\
    -\bvec^{\top} A& \bvec^{\top}\bvec\\
\end{bmatrix},$$
and
$$(HH^{\top})_{M,M}= 
    AA^{\top}  
    + \bvec\bvec^{\top}.
 $$

\begin{theorem}\label{thm:kappaequ} Let $\|A\|=  1$, $\|\vec{b}\|\leq 1$, $\|\xvec\|=1$, and $H=[A, -\bvec]$; then we have
   \[
   \frac{1}{\sqrt{2}} \kappa(A) \leq  \kappa(H) \leq \sqrt{2}\cdot \kappa(A).
    \] 
\end{theorem}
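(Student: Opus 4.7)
The plan is to exploit the identity $HH^{\top} = AA^{\top} + \bvec\bvec^{\top}$, which realizes $HH^{\top}$ as a rank-one PSD perturbation of $AA^{\top}$, and to compare the largest and the smallest nonzero singular values of $H$ and $A$ separately. A convenient preliminary observation is that $\bvec = A\xvec \in \mathrm{col}(A)$, so $\mathrm{col}(H) = \mathrm{col}(A)$ and in particular $\mathrm{rank}(H) = \mathrm{rank}(A) =: r$; the minimum positive singular value of $H$ is then the square root of the smallest eigenvalue of $HH^{\top}$ restricted to $\mathrm{col}(A)$.

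For the upper bound $\kappa(H) \leq \sqrt{2}\,\kappa(A)$, the positive semidefiniteness of $\bvec\bvec^{\top}$ together with Weyl's inequality gives $\lambda_k(HH^{\top}) \geq \lambda_k(AA^{\top})$ for every $k$. Because the two matrices have the same rank, applying this to the $r$-th eigenvalue yields $\sigma_{\min}^{+}(H) \geq \sigma_{\min}^{+}(A)$. On the other hand, the triangle inequality for the operator norm gives $\sigma_{\max}(H)^{2} = \|HH^{\top}\| \leq \|A\|^{2} + \|\bvec\|^{2} \leq 2$. Dividing these bounds and using $\sigma_{\min}^{+}(A) = 1/\kappa(A)$ completes the upper bound.

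For the lower bound $\kappa(H) \geq \tfrac{1}{2}\,\kappa(A)$, first observe that $\|HH^{\top}\| \geq \|AA^{\top}\| = 1$ gives $\sigma_{\max}(H) \geq 1$. To bound $\sigma_{\min}^{+}(H)$ from above, I use the Rayleigh quotient characterization $\sigma_{\min}^{+}(H)^{2} = \min_{u \in \mathrm{col}(A),\,\|u\|=1} u^{\top} HH^{\top} u$ with $u$ chosen as the unit left singular vector of $A$ associated with $\sigma_{\min}^{+}(A)$, which lies in $\mathrm{col}(A)$. This gives
\[
\sigma_{\min}^{+}(H)^{2} \;\leq\; \|A^{\top}u\|^{2} + (\bvec^{\top}u)^{2} \;=\; \sigma_{\min}^{+}(A)^{2} + (\bvec^{\top}u)^{2}.
\]
The essential point is to bound $|\bvec^{\top}u|$ using the hypothesis $\|\xvec\| = 1$: since $\bvec = A\xvec$, Cauchy--Schwarz yields $|\bvec^{\top}u| = |\xvec^{\top}(A^{\top}u)| \leq \|\xvec\|\,\|A^{\top}u\| = \sigma_{\min}^{+}(A)$. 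Therefore $\sigma_{\min}^{+}(H) \leq \sqrt{2}\,\sigma_{\min}^{+}(A)$, and combining with $\sigma_{\max}(H) \geq 1$ yields $\kappa(H) \geq 1/(\sqrt{2}\,\sigma_{\min}^{+}(A)) = \kappa(A)/\sqrt{2} \geq \kappa(A)/2$.

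The main obstacle is obtaining the upper bound $\sigma_{\min}^{+}(H) \leq \sqrt{2}\,\sigma_{\min}^{+}(A)$: a priori, the PSD perturbation $\bvec\bvec^{\top}$ can inflate the smallest nonzero singular value of $H$ considerably, and coarse tools such as rank-one interlacing only give $\sigma_{\min}^{+}(H)^2 \leq \sigma_{r-1}(A)^2$, which is useless when the second smallest singular value of $A$ is much larger than $\sigma_{\min}^{+}(A)$. The reason the argument nevertheless goes through is precisely the unit-norm hypothesis on $\xvec$, which forces $\bvec$ to have at most an $\sigma_{\min}^{+}(A)$-sized component along the minimum left singular direction of $A$; morally, a well-conditioned solution cannot couple the right-hand side too strongly to the ill-conditioned part of $A$.
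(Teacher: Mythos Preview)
Your proof is correct and in fact cleaner than the paper's. Both proofs use the identity $HH^{\top}=AA^{\top}+\bvec\bvec^{\top}$ and establish the upper bound $\kappa(H)\le\sqrt{2}\,\kappa(A)$ in the same way (Weyl/interlacing for $\sigma_{\min}^{+}(H)\ge\sigma_{\min}^{+}(A)$, triangle inequality for $\sigma_{\max}(H)\le\sqrt{2}$). The difference is in the lower bound: the paper argues by contradiction, claiming informally that if $\sigma_{\min}^{+}(H)>2\sigma_{\min}^{+}(A)$ then $\bvec$ must have a component of size at least $3\sigma_{\min}^{+}(A)$ along the bottom left singular direction of $A$, forcing $\|\xvec\|\ge 2$. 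Your direct Rayleigh-quotient argument with the test vector $u$ together with the Cauchy--Schwarz bound $|\bvec^{\top}u|=|\xvec^{\top}A^{\top}u|\le\|\xvec\|\,\sigma_{\min}^{+}(A)$ is more transparent, fully rigorous, and actually yields the sharper inequality $\sigma_{\min}^{+}(H)\le\sqrt{2}\,\sigma_{\min}^{+}(A)$ (hence $\kappa(H)\ge\kappa(A)/\sqrt{2}$) before you relax to match the stated constant $1/2$.
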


\begin{proof} Since $\sigma_{max}(A) = 1$ as $\|A\|= 1$ and $\kappa(A)=\frac{\sigma_{max}(A)}{\sigma_{min}^+(A)}= \frac{1}{\sigma_{min}^+(A)}$, it is sufficient to prove that  $\sqrt{2}\sigma^+(A)\geq \sigma_{min}^{+}(H) \geq  \sigma_{min}^{+}(A)$ and $\sigma_{max}(H) \leq 2$.


\begin{itemize}
    \item $\sigma_{max}(H) \leq \sqrt{2}$:  The largest singular value of $H$ is equal to the square root of the largest eigenvalue of $HH^{\top}$.
    We know that 
    \[ (HH^{\top})_{M\times M}=AA^{\top}+\bvec\bvec^{\top}\]

    By Weyl's inequality, we have \[
    \lambda_{max}(HH^{\top})\leq \lambda_{max}(AA^{\top})+\lambda_{max}(\bvec\bvec^{\top})
    \]

    Therefore, we have \[
    \sigma_{max}(H)\leq \sqrt{ \lambda_{max}(HH^{\top})} \leq \sqrt{ \lambda_{max}(AA^{\top}) + \lambda_{max}(\bvec\bvec^{\top})} \leq \sqrt{2}.
    \]
    \item $\sigma_{min}^{+}(H) \geq  \sigma_{min}^{+}(A)$: We assume that the system $Ax = \bvec$ has a solution, which means that $ \bvec $ is in the column space of $A$. This implies that $r=rank(H)$ is equal to $r=rank(A)$ and they have the same number of nonzero singular eigenvalues.  This implies that $H$ has an extra zero singular value relative to $A$. 

    For $ H $, the Gram matrix is:

\[
H^\top H =
\begin{bmatrix}
A^\top A & -A^\top \bvec \\
-\bvec^\top A & \bvec^\top \bvec
\end{bmatrix} \in \mathbb{R}^{(N+1) \times (N+1)}.
\]

 By the Cauchy Interlacing Theorem,  if the nonzero eigenvalues of $ A^\top A $ are:

\[
\lambda_1(A^\top A) \geq \lambda_2(A^\top A) \geq \dots \geq \lambda_r(A^\top A),
\]

then the nonzero eigenvalues of $ H^\top H $, denoted as $ \lambda_1(H^\top H) \geq \lambda_2(H^\top H) \geq \dots \geq \lambda_{r}(H^\top H) $, satisfy:

\[
\lambda_1(H^\top H) \geq \lambda_1(A^\top A) \geq \lambda_2(H^\top H) \geq \lambda_2(A^\top A) \geq \dots \geq \lambda_r(H^\top H) \geq \lambda_r(A^\top A).
\]
Since singular values are the square roots of eigenvalues, we conclude:

\[
\sigma_{\min}^+(H) = \sqrt{\lambda_{\min}^+(H^\top H)} \geq \sqrt{\lambda_{\min}^+(A^\top A)} = \sigma_{\min}^+(A).
\]

\item $\sigma_{\min}^+(H)\leq \sqrt{2}\sigma_{\min}^+(A)$:
Let $\vec{u}_{\min}$ and $\vec{v}_{\min}$ be the left and right singular vectors of $A$
corresponding to $\sigma_{\min}^+(A)$, so that
\[
A v_{\min} =\sigma_{\min}^+(A) \vec{u}_{\min}, 
\qquad
A^\top \vec{u}_{\min} = \sigma_{\min}^+(A) \vec{v}_{\min}.
\]
Since $\vec{b} = A \vec{x}$ with $\|\xvec\|=1$, we have
\[
u_{\min}^\top \bvec 
= u_{\min}^\top A \xvec
= \sigma_{\min}^+(A) (\vec{v}_{\min}^\top \xvec),
\qquad\text{and hence}\qquad
|u_{\min}^\top \bvec| \leq \sigma_{\min}^+(A).
\]
Using the identity $HH^\top = AA^\top + \bvec \bvec^\top$, the Rayleigh quotient of $HH^\top$
at $\vec{u}_{\min}$ gives
\[
\vec{u}_{\min}^\top(HH^\top)\vec{u}_{\min}
= \vec{u}_{\min}^\top(AA^\top)\vec{u}_{\min} + (\vec{u}_{\min}^\top \bvec)^2
= (\sigma_{\min}^+(A))^2 + (\vec{u}_{\min}^\top \bvec)^2
\leq 2(\sigma_{\min}^+(A))^2.
\]
In addition, $\vec{u}_{\min}$ lies entirely within the subspace spanned by the eigenvectors of $HH^\top$
corresponding to its nonzero eigenvalues. Therefore, we have 

\[
\lambda_{\min}^+(HH^\top) \le  2(\sigma_{\min}^+(A))^2,
\qquad\text{and }\qquad
\sigma_{\min}^+(H) \leq
\sqrt{2}\,\sigma_{\min}^+(A).
\]




\end{itemize}  
\end{proof}

\end{document}